\definecolor{weblmcolor}{HTML}{1A5FB4}
\newtheorem{theorem}{Theorem}[section]
\newtheorem{lemma}[theorem]{Lemma}
\newtheorem{proposition}[theorem]{Proposition}
\newtheorem{corollary}[theorem]{Corollary}
\theoremstyle{definition}
\newtheorem{definition}[theorem]{Definition}
\newtheorem{remark}[theorem]{Remark}
\newtheorem{example}[theorem]{Example}
\numberwithin{equation}{section}
\newcommand{\keywords}[1]{\par\noindent\textbf{Keywords:} #1}
\newcommand{\R}{\mathbb{R}}
\newcommand{\N}{\mathbb{N}}
\newcommand{\Z}{\mathbb{Z}}
\newcommand{\C}{\mathbb{C}}
\newcommand{\Gr}{\mathrm{Gr}}
\newcommand{\End}{\mathrm{End}}
\newcommand{\Fix}{\mathrm{Fix}}
\newcommand{\Emb}{\mathrm{Emb}}
\newcommand{\Log}{\mathrm{Log}}
\newcommand{\diag}{\mathrm{diag}}
\newcommand{\Sym}{\mathrm{Sym}}
\newcommand{\St}{\mathrm{St}}
\newcommand{\Lie}{\mathrm{Lie}}
\newcommand{\Ran}{\mathrm{Ran}}
\DeclareMathOperator{\Ad}{Ad}
\title{Elementary Quantum Gates from Lie Group Embeddings in \texorpdfstring{$U(2^n)$}{U(2^n)}: Geometry, Universality, and Discretization}
\author{%
Antonio Falc\'o\\
Departamento de Matem\'aticas, F\'{\i}sica y Ciencias Tecnol\'ogicas\\
Universidad Cardenal Herrera-CEU, CEU Universities\\
San Bartolom\'e 55, 46115 Alfara del Patriarca (Valencia), Spain\\
\texttt{afalco@uchceu.es}
\and
Daniela Falc\'o--Pomares\\
Grupo de Investigaci\'on Bisite, Universidad de Salamanca\\
Calle Espejo s/n, 37007 Salamanca (Spain)\\
\texttt{dfp99@usal.es}
\and
Hermann G. Matthies\\
Institute of Scientific Computing\\
Technische Universit\"at Braunschweig, \\ 
Universit\"atsplatz 2, 38106 Braunschweig, Germany\\
\texttt{h.matthies@tu-bs.de}
}
\begin{document}





\maketitle

\begin{abstract}
In the standard circuit model, elementary gates are defined relative to a chosen tensor factorization and are therefore
extrinsic to the ambient group $U(2^n)$. Writing $N=2^n$, we introduce an \emph{intrinsic descriptor layer} in $U(N)$ by declaring as
primitive the motions inside faithful embedded copies of $SU(2)$ (phase-free), together with a phase-inclusive $U(2)$ variant.
We describe the embedding landscape $\Emb(SU(2),U(N))$ as a finite union of $U(N)$-homogeneous strata indexed by isotypic multiplicities,
with stabilizers given by centralizers, and we isolate a canonical \emph{two-level sector} parameterized by $\Gr_2(\C^N)$ up to a $PSU(2)$
gauge.
Equipping $U(N)$ with the Hilbert--Schmidt bi-invariant metric, each embedded subgroup is totally geodesic, yielding a variational
characterization of elementary motions via minimal-norm logarithms.
On the constructive side, we prove phase-free universality in $SU(N)$ from two-level primitives using QR/Givens factorizations together with
explicit diagonal generation, and we obtain full universality in $U(N)$ by explicit abelian phase bookkeeping (equivalently, via the $U(2)$
two-level dictionary).
Finally, we formalize a modular finite-alphabet compilation interface: any approximation routine in $SU(2)$ (e.g.\ Solovay--Kitaev) can be
lifted through two-level embeddings to yield $U(N)$-level synthesis with global operator-norm error control.
\end{abstract}

\keywords{quantum gate synthesis, quantum compiling, intrinsic gate descriptors, two-level unitaries (Givens/QR), $SU(2)$ embeddings in $U(2^n)$, Grassmannian, homogeneous manifolds, Riemannian geometry, universality, Solovay--Kitaev}



\section{Introduction}
\label{sec:introduction}

The circuit (gate) model of quantum computation is commonly presented after fixing a tensor
factorization $\mathbb{H}_{2^n}=(\C^2)^{\otimes n}$ and declaring as primitive those unitaries that
act on one or two tensor factors, from which larger circuits are obtained by composition
\cite{NielsenChuang2010}. Two foundational themes in this setting are \emph{expressiveness}
(universality) and \emph{synthesis} (compilation). Classical results show that mild hypotheses on
one- and two-qubit gate families yield dense generation of $U(2^n)$ up to phase
\cite{BarencoEtAl1995,DiVincenzo1995,DeutschBarencoEkert1995}, while constructive matrix-analytic
tooling---notably two-level (Givens/QR) factorizations---provides systematic routes to synthesis
\cite{ReckZeilingerBernsteinBertani1994}. For finite alphabets, the Solovay--Kitaev paradigm yields
polylogarithmic approximation lengths for dense generating sets
\cite{Kitaev2002,DawsonNielsen2006,HarrowRechtChuang2002}, and geometric viewpoints interpret circuit
complexity through geodesics and variational principles on unitary groups
\cite{NielsenDowlingGuDoherty2006,DowlingNielsen2008}.

Despite its operational success, the conventional notion of an ``elementary gate'' is \emph{extrinsic}
from the standpoint of the ambient group $U(2^n)$: it is tied to a chosen tensor layout and to
subsystem labels. Moreover, strict one-qubit locality alone does not generate the full computational
language for $n\ge 2$. This motivates an intrinsic question that is independent of hardware
addressing schemes and of basis choices:

\begin{quote}
\emph{Can one define elementarity directly inside $U(2^n)$ in a way that is conjugation-invariant and
does not presuppose a preferred tensor factorization?}
\end{quote}

\smallskip
\noindent\textbf{Embedding-based elementarity: an intrinsic descriptor layer.}
Fix $n\in\N$ and $N:=2^n$. We define elementarity intrinsically in $U(N)$ by declaring as primitive
the motions inside faithful \emph{embedded} copies of a two-level symmetry group. Concretely, we
model a logical two-level degree of freedom as a faithful embedding of $SU(2)$ into $U(N)$ and set
the phase-free dictionary
\[
\mathcal{G}^{SU}_{\mathrm{elem}}(n)
:=
\bigcup_{\phi\in \Emb(SU(2),U(N))}\phi\big(SU(2)\big)
\ \subset\ U(N).
\]
When one wishes to retain phases at the level of primitives, we also consider the $U(2)$ analogue
\[
\mathcal{G}^{U}_{\mathrm{elem}}(n)
:=
\bigcup_{\Phi\in \Emb(U(2),U(N))}\Phi\big(U(2)\big)
\ \subset\ U(N).
\]
This embedding-based notion is conjugation-invariant in $U(N)$, hence basis-independent, and it
separates the \emph{mathematical} definition of a primitive operation from any \emph{extrinsic}
addressing convention. Once a tensor structure is fixed, the usual single-qubit placements
$w^{(n)}_j:U(2)\hookrightarrow U(N)$ are recovered as particular embeddings, so tensor-local gates
embed into the intrinsic dictionaries; however tensor addressing is not built into the definition.

\smallskip
\noindent\textbf{The $SU(2)$/$U(2)$ split and phase management.}
The distinction between $SU(2)$ and $U(2)$ is structural and guides the entire paper. For fixed $N$,
$SU(2)$ has only finitely many unitary representation types in dimension $N$, which implies that the
conjugation action of $U(N)$ on $\Emb(SU(2),U(N))$ yields a \emph{finite} orbit-type stratification.
By contrast, $U(2)$ admits infinitely many determinant twists in the same dimension, so
$\Emb(U(2),U(N))$ naturally carries countably many orbit types even after fixing $N$. Accordingly,
our structural analysis is carried out at the $SU(2)$ level, while abelian diagonal/global phase
degrees of freedom are handled explicitly when passing from $SU(N)$ to $U(N)$.

\smallskip
\noindent\textbf{Main results and technical narrative.}
We develop a single pipeline from an intrinsic definition of elementarity in $U(N)$ to a modular synthesis/compilation interface.
At the structural level, we classify faithful $SU(2)$ embeddings in $U(N)$ up to conjugation, obtaining a finite orbit-type stratification
indexed by isotypic multiplicities and explicit stabilizers (centralizers).
At the computational level, we isolate the Grassmannian two-level sector (logical qubits as $W\in\Gr_2(\C^N)$ up to a $PSU(2)$ gauge),
prove constructive universality from two-level primitives via QR/Givens factorizations plus explicit diagonal generation, and formalize a
finite-alphabet interface by lifting $SU(2)$ approximation routines (e.g.\ Solovay--Kitaev) through two-level embeddings with global
operator-norm error control.

\smallskip
\noindent\textbf{Variational meaning.}
Equipping $U(N)$ with the Hilbert--Schmidt bi-invariant metric, we show that each embedded copy $\phi(G)\subset U(N)$
(with $G\in\{SU(2),U(2)\}$) is totally geodesic. Hence, within a fixed embedded gate manifold, energy-minimizing implementations of a target
element are exactly constant-speed one-parameter subgroups generated by minimal-norm logarithms in the embedded Lie algebra. In the two-level
sector this identifies the intrinsic metric cost in $U(N)$ with the corresponding $2\times 2$ generator, providing a direct variational
interpretation of elementary motions.

\subsection*{Positioning and novelty}
Several ingredients are classical in isolation: the representation theory of $SU(2)$, two-level (Givens/QR) synthesis, and finite-alphabet
approximation in $SU(2)$ (e.g.\ Solovay--Kitaev). The contribution here is to organize these components into a single \emph{intrinsic,
conjugation-invariant descriptor layer} for elementary gates internal to $U(2^n)$: primitives are motions inside faithful $SU(2)$/$U(2)$
embeddings, equipped with a natural gauge structure and parameter spaces (notably the Grassmannian two-level sector), and this viewpoint
yields a modular compilation interface obtained by lifting $SU(2)$ approximation routines through two-level embeddings with uniform
operator-norm control.

\medskip
\noindent\textbf{Contributions (intrinsic formalism and compilation interface).}
\smallskip
\begin{center}
\fbox{\parbox{\dimexpr\linewidth-2\fboxsep-2\fboxrule\relax}{
{\setlength{\leftmargini}{1.6em}\begin{itemize}\setlength{\itemsep}{2pt}
\item We introduce an \emph{intrinsic descriptor layer} for elementary operations in $U(N)$ ($N=2^n$) by declaring as primitives the motions inside faithful embedded copies of $SU(2)$ (and its $U(2)$ variant), yielding phase-free and phase-inclusive dictionaries.
\item We identify a canonical \emph{two-level sector} organized by $\Gr_2(\C^N)$ up to a $PSU(2)$ gauge, and we describe the embedding landscape as finitely many homogeneous strata indexed by isotypic multiplicities.
\item We prove \emph{phase-free universality} in $SU(N)$ from two-level primitives via QR/Givens factorization plus explicit diagonal generation; full universality in $U(N)$ follows by adjoining abelian diagonal/global factors (or by passing to the $U(2)$ two-level dictionary).
\item We provide a \emph{modular finite-alphabet compilation interface}: any $SU(2)$ approximation routine (e.g.\ Solovay--Kitaev) can be lifted through two-level embeddings with global operator-norm error control.
\end{itemize}}
}}
\end{center}

\noindent\textbf{Scope and non-claims.}
We do not claim improved worst-case complexity bounds for gate synthesis, nor optimality of any particular compilation subroutine.
Our aim is to supply an intrinsic, representation- and gauge-aware language for elementary gates in $U(2^n)$ together with a clean, modular
pipeline connecting $U(N)$-factorizations to $SU(2)$ finite-alphabet approximation.

\medskip
\noindent\textbf{Supporting layers (scope alignment).}
Sections~\ref{sec:EmbSU2UN} and~\ref{sec:variational_principle} provide structural and interpretive backbones for the intrinsic descriptor
layer, rather than additional compilation claims. The embedding-landscape analysis makes the admissible intrinsic two-level degrees of
freedom \emph{computable} (finite orbit types with explicit stabilizers/centralizers), while the variational section provides a clean
baseline metric cost for elementary motions (total geodesy and minimal-logarithm characterizations) that can be replaced by more
hardware-specific control models.

\paragraph{Organization of the paper.}
Section~\ref{sec:preliminaries} fixes notation and formalizes the embedding-based dictionaries,
including the $SU(2)$/$U(2)$ phase split and an intrinsic locality baseline used for comparison.
Section~\ref{sec:logical_qubits} isolates the canonical two-level stratum, identifies it with the
Grassmannian $\Gr_2(\C^N)$, makes the associated $PSU(2)$ gauge precise, and relates this intrinsic
model to conventional tensor-factor locality. Section~\ref{sec:universality} proves phase-free
universality at the two-level level and explains how full universality in $U(N)$ is obtained by
explicit diagonal/global-phase management. Section~\ref{sec:SK} develops the finite-alphabet
interface by combining two-level factorizations with the Solovay--Kitaev paradigm on $SU(2)$ and
lifting the resulting words to $U(N)$ with global operator-norm error control.
Section~\ref{sec:EmbSU2UN} provides the structural backbone by describing the conjugation action of
$U(N)$ on $\Emb(SU(2),U(N))$, identifying stabilizers via centralizers, and deriving the finite
orbit-type decomposition indexed by irreducible multiplicity data, together with explicit dimension
formulas and representative examples. Section~\ref{sec:variational_principle} establishes the
variational characterization of elementary motions via total geodesy of embedded subgroups.
Finally, Section~\ref{sec:conclusion} summarizes the main results and indicates natural directions
for further work.

\section{Preliminary definitions and results}
\label{sec:preliminaries}

\noindent
This section fixes notation and conventions used throughout, and collects the auxiliary facts needed
to read the main results with minimal cross-referencing. Our viewpoint is intrinsic: we work inside
the ambient unitary group $U(N)$, $N=2^n$, and define elementary operations as motions inside faithful
embedded copies of a two-level symmetry group.

\smallskip
\noindent\textbf{Reader's guide.}
For a first pass through the main narrative (canonical two-level sector $\rightarrow$ universality
$\rightarrow$ finite-alphabet compilation), it suffices to skim
Section~\ref{subsec:hilbert_and_groups} and keep in mind the embedding-based dictionaries introduced in
Section~\ref{subsec:elementary_via_embeddings}. The Hilbert--Schmidt geometry and generator
normalizations fixed in Sections~\ref{subsec:lie_algebra_hs}--\ref{subsec:pauli_strings} are used
primarily in the variational interpretation of Section~\ref{sec:variational_principle}. Finally,
Section~\ref{subsec:intrinsic_locality} provides an intrinsic commutant-based notion of tensor
locality used only as a baseline comparison with our embedding-elementary framework.

\smallskip
\noindent\textbf{The $SU(2)$/$U(2)$ split.}
We take $U(N)$ as ambient group so that global phases and diagonal factors remain explicit in
synthesis and compilation statements. At the same time, the non-abelian two-level motions that drive
our orbit stratification and discretization are naturally governed by embeddings of $SU(2)$ into
$U(N)$. Accordingly, the structural analysis is carried out at the $SU(2)$ level (yielding finitely
many orbit types for fixed $N$), while abelian $U(1)$ phases and diagonal factors are managed
separately, either by canonical normalization to $SU(N)$ or via an explicit diagonal-compilation
step.

\subsection{Multi-qubit Hilbert space, ambient groups, and global-phase conventions}
\label{subsec:hilbert_and_groups}
Fix $n\in\N$ and set $N:=2^n$. We consider the $n$-qubit Hilbert space
\begin{equation}\label{eq:HN_def}
\mathbb{H}_N := (\mathbb{C}^2)^{\otimes n} \cong \mathbb{C}^{N},
\end{equation}
equipped with the standard Hermitian inner product. Once a computational basis is fixed, we identify
linear operators on $\mathbb{H}_N$ with matrices in $\mathbb{C}^{N\times N}$.

\paragraph{Global phases and the choice of ambient group.}
In many quantum-information settings one works modulo global phase, i.e.\ in the projective unitary
group $PU(N)=U(N)/U(1)$ (or, equivalently, within $SU(N)$). In the present work we keep global phases
explicit and take $U(N)$ as ambient Lie group. This is deliberate: our central object is the space
of embeddings $U(2)\hookrightarrow U(N)$, which is naturally defined at the level of $U(N)$ without
imposing determinant-one constraints, and retaining phases allows us to state synthesis and
compilation procedures uniformly.

When a phase-free formulation is convenient, we pass canonically to $SU(N)$ by removing the
determinant phase:
\begin{equation}\label{eq:UtoSU}
\widetilde{U} := e^{-i\arg(\det U)/N}\,U \in SU(N).
\end{equation}
Note that $U$ and $\widetilde{U}$ induce the same element of $PU(N)$, so this operation changes only
the global phase and not the projective action on pure states.

\subsection{Lie algebra, Hilbert--Schmidt geometry, and the exponential map}
\label{subsec:lie_algebra_hs}
Let $\mathfrak{u}(N)$ denote the Lie algebra of $U(N)$, i.e.\ the space of anti-Hermitian matrices:
\[
\mathfrak{u}(N)=\{X\in\mathbb{C}^{N\times N}: X^\dagger=-X\}.
\]
We equip $\mathfrak{u}(N)$ with the Hilbert--Schmidt pairing
\begin{equation}\label{eq:HS_pairing}
\langle X,Y\rangle_{\mathrm{hs}} := \frac12 \Tr(X^\dagger Y),\qquad X,Y\in\mathfrak{u}(N).
\end{equation}
This pairing is $\Ad$-invariant and induces the standard bi-invariant Riemannian metric on $U(N)$.
For a smooth curve $U:[0,1]\to U(N)$, the \emph{body velocity} (right-trivialized velocity) is
\begin{equation}\label{eq:body_velocity}
A(t):=\dot U(t)\,U(t)^{-1}\in\mathfrak{u}(N),
\end{equation}
and the corresponding energy functional is
\begin{equation}\label{eq:energy_hs}
\mathcal{E}[U]:=\frac12\int_0^1 \|A(t)\|_{\mathrm{hs}}^2\,dt
=\frac12\int_0^1 \|\dot U(t)\|^2\,dt.
\end{equation}
With respect to this bi-invariant metric, geodesics in $U(N)$ are precisely one-parameter subgroups
\begin{equation}\label{eq:geodesics_one_parameter}
U(t)=U(0)\exp(tX),\qquad X\in\mathfrak{u}(N).
\end{equation}
We adopt the matrix exponential $\exp:\mathfrak{u}(N)\to U(N)$ as the default exponential map.

\subsection{Pauli matrices, Pauli strings, and normalization of generators}
\label{subsec:pauli_strings}
Let $\{\sigma_0,\sigma_x,\sigma_y,\sigma_z\}$ denote the Pauli matrices, with $\sigma_0:=I_2$.
A \emph{Pauli string} is an operator of the form
\begin{equation}\label{eq:pauli_string}
P=\sigma_{a_1}\otimes\cdots\otimes\sigma_{a_n},\qquad a_j\in\{0,x,y,z\}.
\end{equation}
Pauli strings form an orthogonal basis of $\mathbb{C}^{N\times N}$ under the Hilbert--Schmidt inner
product and provide a convenient coordinate system for describing embedded $\mathfrak{su}(2)$
directions.

We associate to each nontrivial Pauli string $P\neq I_N$ the normalized anti-Hermitian generator
\begin{equation}\label{eq:PauliGenerator}
T_P := -\frac{i}{2}\,P \in \mathfrak{u}(N).
\end{equation}
This convention matches the standard parametrization of single-qubit rotations
$R_\alpha(\theta)=\exp(-i\theta\sigma_\alpha/2)$ and fixes the sign and the factor $1/2$ appearing in
all elementary exponentials. In particular, if $P=\sigma_\alpha^{(j)}$ is a single-site Pauli
operator acting on the $j$th qubit, then $\exp(\theta T_P)$ is a rotation of angle $\theta$ around
axis $\alpha$ on the corresponding degree of freedom.

\subsection{Intrinsic locality via commutants and the canonical reference subclass}
\label{subsec:intrinsic_locality}
To benchmark embedding-based elementarity against conventional tensor locality, we use an intrinsic
(commutant-based) definition of ``acting on a subset of qubits'' relative to the fixed
factorization~\eqref{eq:HN_def}. Fix $S\subset\{1,\dots,n\}$ and write $\bar S$ for its complement.
Using the tensor-ordering convention fixed throughout (equivalently, a canonical permutation unitary
$\Pi_S$ that brings the tensor factors in $S$ to the front), define the subgroup of unitaries acting
trivially on $S$ by
\begin{equation}\label{eq:Ubar_def}
\mathcal{U}_{\bar S}:=\Pi_S^{-1}\big(I_{2^{|S|}}\otimes U(2^{n-|S|})\big)\Pi_S \subset U(N).
\end{equation}

\begin{definition}[$S$-local unitaries via commutants]\label{def:S_local_commutant}
The \emph{$S$-local unitary subgroup} is the commutant of $\mathcal{U}_{\bar S}$ in $U(N)$,
\begin{equation}\label{eq:S_local_commutant}
U(N)_S:=\{\,U\in U(N): UX=XU\ \ \forall X\in \mathcal{U}_{\bar S}\,\}.
\end{equation}
\end{definition}

\noindent
By the bicommutant theorem (or direct matrix-algebra computations), one has the canonical
identification
\begin{equation}\label{eq:S_local_structure}
U(N)_S = \Pi_S^{-1}\big(U(2^{|S|})\otimes I_{2^{n-|S|}}\big)\Pi_S.
\end{equation}
Thus $U(N)_S$ recovers the usual tensor-factor notion of acting nontrivially only on the degrees of
freedom in $S$, but expressed intrinsically through commutation relations.

\paragraph{Canonical one-qubit reference subclass.}
For $S=\{j\}$ one obtains the one-site local subgroup $U(N)_{\{j\}}$. We define the strictly local
one-qubit family
\begin{equation}\label{eq:Gloc_def}
\mathcal{G}_{\mathrm{loc}}(n):=\bigcup_{j=1}^n U(N)_{\{j\}},
\end{equation}
which serves as the canonical reference subclass representing the conventional one-qubit gate
paradigm. In particular, $\mathcal{G}_{\mathrm{loc}}(n)$ will be used as a baseline when comparing
tensor-local elementarity to the embedding-elementary dictionaries defined next.

\subsection{Elementary gates via Lie group embeddings}
\label{subsec:elementary_via_embeddings}
The commutant-based notion of locality depends on the fixed tensor factorization~\eqref{eq:HN_def}.
Our proposed elementary-gate framework adopts a representation-theoretic viewpoint: a two-dimensional
logical degree of freedom is treated as primitive, independently of a prescribed tensor layout, by
considering faithful realizations of $U(2)$ (or, for stratification purposes, $SU(2)$) inside $U(N)$.

\medskip
\noindent\textbf{Embeddings and gate manifolds.}
Let $G\in\{SU(2),U(2)\}$. We write $\Emb(G,U(N))$ for the set of faithful Lie group embeddings
$\phi:G\hookrightarrow U(N)$. Each $\phi$ identifies a concrete two-level gate manifold
$\phi(G)\subset U(N)$ and an embedded Lie algebra $d\phi_e(\mathfrak g)\subset\mathfrak u(N)$.

\begin{definition}[Lie group embeddings]\label{def:embedding}
Let $N\in\mathbb{N}$ and let $G\in\{SU(2),U(2)\}$.
We define the set of Lie group embeddings of $G$ into $U(N)$ by
\[
\Emb(G,U(N))
:=
\left\{\phi:G\to U(N)\ \middle|\
\begin{array}{l}
\phi \text{ is a continuous group homomorphism and injective,}\\[2pt]
d\phi_e:\mathfrak g\to\mathfrak{u}(N)\text{ is injective}
\end{array}
\right\},
\]
where $\mathfrak g$ denotes the Lie algebra of $G$.
Since $G$ is compact, any continuous homomorphism $G\to U(N)$ is automatically smooth. Moreover,
injectivity implies that $\phi$ is a topological embedding whose image $\phi(G)$ is a closed Lie
subgroup of $U(N)$, and $d\phi_e$ identifies $\mathfrak g$ with the Lie algebra of $\phi(G)$.
\end{definition}

\medskip
\noindent\textbf{Embedding-elementary dictionaries.}
In view of Remark~\ref{rem:U2_vs_SU2}, it is useful to distinguish the phase-free non-abelian content
from the full unitary group.

\paragraph{Convention (two elementary dictionaries).}
Let $n\in\mathbb{N}$ and $N:=2^n$. We define
\begin{align}
\label{eq:Gelem_def_SU}
\mathcal{G}^{SU}_{\mathrm{elem}}(n)
&:=\bigcup_{\phi\in \Emb(SU(2),U(N))}\phi\big(SU(2)\big)\ \subset\ U(N),
\\
\label{eq:Gelem_def_U}
\mathcal{G}^{U}_{\mathrm{elem}}(n)
&:=\bigcup_{\phi\in \Emb(U(2),U(N))}\phi\big(U(2)\big)\ \subset\ U(N).
\end{align}
By construction, both $\mathcal{G}^{SU}_{\mathrm{elem}}(n)$ and $\mathcal{G}^{U}_{\mathrm{elem}}(n)$
are invariant under conjugation by $U(N)$.

The set $\mathcal{G}^{SU}_{\mathrm{elem}}(n)$ encodes the non-abelian elementary motions (with global
phases and diagonal factors treated separately). In Section~\ref{sec:universality} we show that the
subgroup it generates contains (indeed generates) $SU(N)$ via two-level factorizations. When full
$U(N)$ reachability is required, we work instead with $\mathcal{G}^{U}_{\mathrm{elem}}(n)$, or
equivalently supplement the $SU(2)$-based synthesis with explicit compilation of the abelian factors
(global phase and trailing diagonal terms) as in Section~\ref{sec:SK}.

\paragraph{Canonical local layer (reference subclass).}
The embedding-based framework is intrinsically defined on $U(N)$ and does not require a preferred
tensor factorization of $\mathbb{H}_N$. Nevertheless, once the identification
$\mathbb{H}_N\cong(\mathbb{C}^2)^{\otimes n}$ is fixed, there are canonical ``coordinate'' embeddings
\[
w_j^{(n)}:U(2)\hookrightarrow U(N),\qquad
w_j^{(n)}(V)=I_2^{\otimes (j-1)}\otimes V\otimes I_2^{\otimes (n-j)},\qquad j=1,\dots,n,
\]
whose images coincide with the commutant-defined one-site subgroups $U(N)_{\{j\}}$ in
\eqref{eq:S_local_structure}. Equivalently,
\[
\mathcal{G}_{\mathrm{loc}}(n):=\bigcup_{j=1}^n w_j^{(n)}\big(U(2)\big)\subset U(N),
\qquad
\bigcup_{j=1}^n w_j^{(n)}\big(SU(2)\big)\subset SU(N)
\]
are the canonical local layers singled out by the chosen tensor structure.

\begin{remark}[Why we stratify by $SU(2)$-embeddings]\label{rem:U2_vs_SU2}
Restricting $\phi\in\Emb(U(2),U(N))$ to $SU(2)\subset U(2)$ yields an element of $\Emb(SU(2),U(N))$.
Conversely, an embedding of $SU(2)$ captures the full non-abelian content of a logical two-level
degree of freedom. For fixed $N$, unitary representation types of $U(2)$ are infinite (determinant
twists), whereas those of $SU(2)$ are finite. Since our discretization step synthesizes $SU(2)$
motions (with global phases and diagonal factors treated separately), we formulate the orbit
stratification and the intrinsic ``logical-qubit'' degrees of freedom using $\Emb(SU(2),U(N))$.
\end{remark}

\subsection{Embeddings into \texorpdfstring{$U(N)$}{U(N)} and faithful unitary representations}
\label{subsec:embeddings_representations}
The embedding-based notion of elementarity admits a convenient reformulation in finite-dimensional
representation theory. Any continuous homomorphism $\phi:G\to U(N)$ prescribes an action of $G$ on
$\C^N$ by unitary operators and can therefore be viewed as an $N$-dimensional unitary representation.
The embedding requirement (injectivity and identification with a Lie subgroup) amounts to
\emph{faithfulness} (trivial kernel), while conjugation in $U(N)$ corresponds to unitary change of
basis on $\C^N$. This perspective supplies discrete invariants (isotypic multiplicities), clarifies
the orbit structure of the $U(N)$-action on $\Emb(G,U(N))$, and underlies the homogeneous
stratification developed in Section~\ref{sec:EmbSU2UN}; see, e.g., \cite{Hall2015,BrockerDieck1985,Knapp2002}.

\begin{definition}[Unitary representation]\label{def:unitary_representation}
A \emph{unitary representation} of a topological group $G$ on $\C^N$ is a continuous homomorphism
$\rho:G\to U(N)$. It is \emph{faithful} if $\ker(\rho)=\{e\}$.
\end{definition}

\noindent
With Definition~\ref{def:unitary_representation}, any continuous homomorphism $\phi:G\to U(N)$ (with
$G$ compact) is a finite-dimensional unitary representation; moreover, injectivity of $\phi$ is
equivalent to faithfulness of the associated representation. We will apply this both to $G=U(2)$
(when tracking central phases) and to $G=SU(2)$ (for the orbit stratification).

\begin{proposition}[Embeddings $\boldsymbol{G\hookrightarrow U(N)}$ as faithful unitary representations]
\label{prop:embeddings_as_faithful_representations_G}
Fix $N\in\mathbb{N}$ and let $G\in\{U(2),SU(2)\}$. There is a natural identification between Lie group
embeddings $\phi\in \Emb(G,U(N))$ and faithful $N$-dimensional unitary representations of $G$. More precisely:
\begin{enumerate}
\item[(i)] Any continuous homomorphism $\phi:G\to U(N)$ is a unitary representation on $\C^N$; we denote it by $\rho_\phi:=\phi$.
Moreover, $\phi$ is injective if and only if $\rho_\phi$ is faithful.
\item[(ii)] Conversely, any faithful continuous unitary representation $\rho:G\to U(N)$ defines an element of $\Emb(G,U(N))$.
\item[(iii)] The conjugation action of $U(N)$ on $\Emb(G,U(N))$,
\[
(\Ad_W\cdot \phi)(g):=W\,\phi(g)\,W^{-1}\qquad (W\in U(N)),
\]
corresponds exactly to unitary equivalence of representations.
\end{enumerate}
\end{proposition}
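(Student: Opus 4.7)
This proposition is essentially a dictionary translating between two equivalent viewpoints on the same object: a Lie group embedding $\phi\in\Emb(G,U(N))$ on the one hand, and a faithful $N$-dimensional unitary representation of $G$ on the other. Accordingly, the proof is largely a matter of unpacking the two definitions against each other, combined with three standard facts from the theory of compact Lie groups: (a) automatic smoothness of continuous homomorphisms out of a compact Lie group into a Lie group; (b) the closed-subgroup theorem; and (c) the identity $\ker(d\phi_e)=\Lie(\ker\phi)$ relating kernels on the group and on the Lie-algebra level.

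For part (i), I would simply observe that Definition~\ref{def:unitary_representation} and the ``continuous homomorphism'' clause of Definition~\ref{def:embedding} coincide as soon as the target is $U(N)$; thus setting $\rho_\phi:=\phi$ is tautologically a unitary representation, and injectivity of $\phi$ as a map of sets is equivalent to triviality of $\ker\phi$, i.e.\ to faithfulness.

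For part (ii), starting from a faithful continuous unitary representation $\rho:G\to U(N)$, I would verify all clauses of Definition~\ref{def:embedding} in turn: the homomorphism and continuity properties are given; compactness of $G$ upgrades continuity to smoothness, so $d\rho_e:\mathfrak g\to\mathfrak u(N)$ makes sense; injectivity of $\rho$ as a homomorphism follows from $\ker\rho=\{e\}$; and injectivity of $d\rho_e$ follows from $\ker(d\rho_e)=\Lie(\ker\rho)=\Lie(\{e\})=\{0\}$. Finally, $\rho(G)$ is compact hence closed in $U(N)$, and is a Lie subgroup whose Lie algebra is canonically identified with $d\rho_e(\mathfrak g)$ via the closed-subgroup theorem.

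For part (iii), the claim is essentially a reformulation of definitions: for $W\in U(N)$, the assignment $g\mapsto W\phi(g)W^{-1}$ is precisely the statement that $W$ intertwines $\phi$ with $\Ad_W\cdot\phi$, so the conjugation action on $\Emb(G,U(N))$ implements one direction of unitary equivalence; conversely, a unitary equivalence between faithful representations $\rho_1,\rho_2:G\to U(N)$ on the common ambient space $\C^N$ is, by definition, the existence of $W\in U(N)$ with $\rho_2=W\rho_1 W^{-1}$, which is exactly $\rho_2=\Ad_W\cdot\rho_1$. The main (and only genuinely non-tautological) obstacle is the injectivity of $d\rho_e$ in part (ii), which I expect to handle by citing the Lie-algebraic identification $\ker(d\rho_e)=\Lie(\ker\rho)$; everything else is a sequence of definitional identifications.
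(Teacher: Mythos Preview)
Your proposal is correct and follows essentially the same route as the paper: both treat (i) and (iii) as definitional identifications, and both handle (ii) via automatic smoothness of continuous homomorphisms out of a compact Lie group together with the closed-subgroup theorem applied to the compact (hence closed) image $\rho(G)$. The only minor difference is that you explicitly verify injectivity of $d\rho_e$ via the identity $\ker(d\rho_e)=\Lie(\ker\rho)$, whereas the paper instead argues that $\rho:G\to\rho(G)$ is a homeomorphism onto a closed Lie subgroup and invokes ``Lie group embedding'' as a package; both arguments are standard and equivalent here.
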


\begin{proof}
(i) By Definition~\ref{def:unitary_representation}, any continuous homomorphism $\phi:G\to U(N)$ is a unitary representation on $\C^N$,
which we denote by $\rho_\phi:=\phi$. Since $\ker(\rho_\phi)=\ker(\phi)$, the representation $\rho_\phi$ is faithful if and only if
$\phi$ is injective.

(ii) Let $\rho:G\to U(N)$ be a faithful continuous unitary representation. Since $G$ is compact and $U(N)$ is Hausdorff, $\rho(G)$ is
compact and therefore closed in $U(N)$. Hence $\rho(G)\le U(N)$ is a closed subgroup and, by the closed subgroup theorem, an embedded
Lie subgroup. Because $\rho$ is continuous and injective, $\rho:G\to\rho(G)$ is a homeomorphism. Finally, since $G$ is a compact Lie
group, any continuous homomorphism $G\to U(N)$ is automatically smooth (see, e.g., \cite[Ch.~4]{Hall2015}); hence $\rho$ is a Lie
group embedding, i.e.\ $\rho\in\Emb(G,U(N))$.

(iii) If $\phi'=\Ad_W\cdot\phi$ for some $W\in U(N)$, then $\rho_{\phi'}(g)=W\rho_\phi(g)W^{-1}$ for all $g\in G$, so $\rho_{\phi'}$
and $\rho_\phi$ are unitarily equivalent. Conversely, if two unitary representations $\rho,\rho':G\to U(N)$ are unitarily equivalent,
then there exists $W\in U(N)$ such that $\rho'(g)=W\rho(g)W^{-1}$ for all $g\in G$, i.e.\ $\rho'=\Ad_W\cdot\rho$.
\end{proof}

\noindent
Proposition~\ref{prop:embeddings_as_faithful_representations_G} reduces the study of $\Emb(G,U(N))$
(for $G\in\{U(2),SU(2)\}$) to the classification of finite-dimensional unitary representations of the
compact group $G$. In particular, unitary representations of compact groups are completely reducible,
and their decomposition into irreducibles is unique up to unitary equivalence; hence conjugacy
classes of embeddings are encoded by discrete multiplicity data in the isotypic decomposition (see,
e.g., \cite{BrockerDieck1985,Hall2015}).

The next corollary makes this statement explicit and will serve as representation-theoretic input
for the homogeneous decomposition results in Section~\ref{sec:EmbSU2UN}.

\begin{corollary}[Conjugacy classes of embeddings and isotypic multiplicity data]
\label{cor:conjugacy_isotypic_data_G}
Fix $N\in\mathbb{N}$ and let $G\in\{U(2),SU(2)\}$. Two embeddings $\phi_1,\phi_2\in \Emb(G,U(N))$ are
conjugate in $U(N)$ if and only if the associated faithful unitary representations $\rho_{\phi_1},
\rho_{\phi_2}$ are unitarily equivalent. Equivalently, $\rho_{\phi_1}$ and $\rho_{\phi_2}$ have the
same isotypic decomposition (the same irreducible multiplicity data).

More precisely, there exist finite-support multiplicity families $(m_\lambda^{(1)})_\lambda$ and
$(m_\lambda^{(2)})_\lambda$ and $G$-equivariant unitary isomorphisms
\[
\C^N \ \cong\ \bigoplus_{\lambda}\Big(\C^{m_\lambda^{(k)}}\otimes V_\lambda\Big),
\qquad k\in\{1,2\},
\]
where $\{V_\lambda\}_\lambda$ ranges over a complete set of representatives of the unitary
equivalence classes of finite-dimensional irreducible unitary representations of $G$. In
particular, $\phi_1$ and $\phi_2$ are conjugate in $U(N)$ if and only if
\[
m_\lambda^{(1)}=m_\lambda^{(2)} \quad \text{for all }\lambda.
\]
\end{corollary}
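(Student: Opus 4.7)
The plan is to reduce the corollary to Proposition~\ref{prop:embeddings_as_faithful_representations_G}(iii) together with the standard structure theory of finite-dimensional unitary representations of a compact group. The first equivalence --- conjugacy in $U(N)$ versus unitary equivalence of the associated representations --- is already supplied by Proposition~\ref{prop:embeddings_as_faithful_representations_G}(iii), so only the representation-theoretic content remains to be established, and the argument specializes at the end to the two cases $G=U(2)$ and $G=SU(2)$.

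Next I would invoke complete reducibility: for $G$ compact and any continuous unitary representation $\rho:G\to U(N)$, the space $\C^N$ decomposes as an orthogonal direct sum of irreducible $G$-invariant subspaces. Grouping $G$-isomorphic summands yields the isotypic decomposition
\[
\C^N \;\cong\; \bigoplus_\lambda \big(\C^{m_\lambda}\otimes V_\lambda\big),
\]
where $\{V_\lambda\}$ ranges over a complete set of irreducible unitary representatives of $G$, the multiplicity space $\C^{m_\lambda}$ carries the trivial action, and $G$ acts on the second factor by $V_\lambda$. The point requiring mild care is that this identification should be implemented by a $G$-equivariant \emph{unitary} isomorphism rather than merely a $G$-linear one; this follows from the fact that any $G$-invariant positive-definite Hermitian form on an irreducible unitary representation is a positive scalar multiple of the given one, so orthonormal bases inside each isotypic component can be chosen compatibly with the ambient inner product.

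For the uniqueness half, I would apply Schur's lemma. Any $G$-intertwiner $T:\rho_{\phi_1}\to\rho_{\phi_2}$ sends the $V_\lambda$-isotypic component of the source into the $V_\lambda$-isotypic component of the target, and its restriction has the form $A_\lambda\otimes\mathrm{id}_{V_\lambda}$ for some linear map $A_\lambda:\C^{m_\lambda^{(1)}}\to\C^{m_\lambda^{(2)}}$. Hence $T$ is unitary if and only if every $A_\lambda$ is unitary, which in particular forces $m_\lambda^{(1)}=m_\lambda^{(2)}$ for all $\lambda$. The converse direction is then immediate: given equal multiplicities, choose $A_\lambda=\mathrm{id}$ on each isotypic block to assemble a $G$-equivariant unitary $W\in U(N)$ witnessing the equivalence, whence $\phi_2=\Ad_W\cdot\phi_1$ by Proposition~\ref{prop:embeddings_as_faithful_representations_G}(iii).

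There is no genuine obstacle here: the corollary is a packaging of Peter--Weyl-type complete reducibility with Schur's lemma, both standard for the compact Lie groups $U(2)$ and $SU(2)$. The only conceptual subtlety, already flagged above, is maintaining unitarity (as opposed to mere invertibility) throughout the identification of isotypic components; once that is handled, the equivalence of the three conditions --- conjugacy in $U(N)$, unitary equivalence of representations, and equality of multiplicity data --- follows without further work.
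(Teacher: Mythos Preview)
Your proposal is correct and follows essentially the same route as the paper: both reduce the conjugacy statement to Proposition~\ref{prop:embeddings_as_faithful_representations_G}(iii) and then invoke complete reducibility of unitary representations of a compact group together with Schur-type uniqueness of the isotypic decomposition. The only cosmetic difference is that the paper appeals to character/Schur orthogonality to certify that multiplicities determine the unitary equivalence class, whereas you spell out the intertwiner-blockform argument $T=\bigoplus_\lambda A_\lambda\otimes\mathrm{id}_{V_\lambda}$ directly; these are equivalent formulations of the same standard fact.
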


\begin{proof}
By Proposition~\ref{prop:embeddings_as_faithful_representations_G}(iii), $\phi_1$ and $\phi_2$ are
conjugate in $U(N)$ if and only if $\rho_{\phi_1}$ and $\rho_{\phi_2}$ are unitarily equivalent.

Since $G$ is compact, every finite-dimensional unitary representation of $G$ is completely reducible:
it decomposes as a finite orthogonal direct sum of irreducibles (unitary Maschke theorem); see,
e.g., \cite{BrockerDieck1985,Hall2015,Knapp2002}. Moreover, the unitary equivalence class of a
finite-dimensional unitary representation is uniquely determined by its irreducible multiplicities,
e.g.\ via character orthogonality (equivalently, Schur orthogonality for matrix coefficients); see,
e.g., \cite{BrockerDieck1985,Hall2015}. Therefore $\rho_{\phi_1}\simeq \rho_{\phi_2}$ if and only if
$m_\lambda^{(1)}=m_\lambda^{(2)}$ for all $\lambda$, and this criterion is equivalent to conjugacy of
$\phi_1,\phi_2$ in $U(N)$.
\end{proof}

\section{Logical Qubits and the Grassmannian Model}
\label{sec:logical_qubits}

\noindent
The embedding-based framework treats a two-dimensional logical degree of freedom as primitive,
independently of any preferred tensor factorization of $\mathbb{H}_N\cong\C^N$. In this section we
isolate a distinguished and computationally transparent family of embeddings---the \emph{two-level
embeddings}---and show that they are naturally organized by the complex Grassmannian $\Gr_2(\C^N)$.

\smallskip
\noindent
\textbf{Takeaway.} A \emph{two-level logical qubit} is a two-plane $W\in\Gr_2(\C^N)$ (support data),
and a two-level gate is an $SU(2)$ action on $W$ extended by the identity on $W^\perp$. Choosing a
frame on $W$ produces an explicit embedding $SU(2)\hookrightarrow U(N)$, but the resulting embedding
data is defined only up to the inner-automorphism gauge of $SU(2)$, i.e.\ up to $PSU(2)\cong SO(3)$.
This Grassmannian sector will serve as the canonical computational stratum of the full embedding
landscape.

\subsection{Two-level subgroups and two-level embeddings}
\label{subsec:two_level_embeddings}

\begin{definition}[Two-level unitary and special-unitary subgroups]\label{def:two_level_subgroup}
Let $W\subset \C^N$ be a complex subspace with $\dim_\C W=2$, and let $W^\perp$ be its orthogonal
complement. Define the \emph{two-level unitary subgroup supported on $W$} by
\begin{equation}\label{eq:two_level_subgroup_UN}
U(N)[W]\;:=\;\{\,U\in U(N): U|_{W}\in U(W)\ \text{and}\ U|_{W^\perp}=I_{W^\perp}\,\},
\end{equation}
and the \emph{two-level special-unitary subgroup supported on $W$} by
\begin{equation}\label{eq:two_level_subgroup_SUN}
SU(N)[W]\;:=\;\{\,U\in U(N): U|_{W}\in SU(W)\ \text{and}\ U|_{W^\perp}=I_{W^\perp}\,\}.
\end{equation}
Equivalently, $U(N)[W]$ acts arbitrarily on $W$ and trivially on $W^\perp$, while $SU(N)[W]$ imposes
$\det(U|_W)=1$.
\end{definition}

\begin{remark}[Non-canonical identifications]
The groups $U(N)[W]$ and $SU(N)[W]$ are (non-canonically) isomorphic to $U(2)$ and $SU(2)$,
respectively. Making such an identification amounts to choosing a unitary isomorphism $W\cong\C^2$,
i.e.\ an orthonormal frame of $W$.
\end{remark}

\begin{definition}[Framed two-level $SU(2)$-embedding]\label{def:framed_two_level_embedding}
Let $W\in\Gr_2(\C^N)$ and let $f:\C^2\to W$ be a unitary isomorphism (an orthonormal frame on $W$).
Define
\begin{equation}\label{eq:two_level_embedding_SU2}
\phi_{W,f}:SU(2)\hookrightarrow U(N),\qquad
\phi_{W,f}(S)\;:=\; f S f^{-1}\ \oplus\ I_{W^\perp},
\end{equation}
where the direct sum is taken with respect to $\C^N=W\oplus W^\perp$.
\end{definition}

\begin{proposition}[Image and gauge covariance]\label{prop:gauge_invariance_SU2}
For fixed $W\in\Gr_2(\C^N)$, the subgroup $\phi_{W,f}(SU(2))\subset U(N)$ is independent of the
choice of frame $f$ and equals $SU(N)[W]$. More precisely, if $f,f':\C^2\to W$ are unitary
isomorphisms, then there exists $S_0\in SU(2)$ such that
\[
\phi_{W,f'}(S)=\phi_{W,f}(S_0 S S_0^{-1}),\qquad S\in SU(2),
\]
hence $\phi_{W,f'}(SU(2))=\phi_{W,f}(SU(2))=SU(N)[W]$.
\end{proposition}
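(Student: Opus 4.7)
The proposition has two assertions for a fixed $W\in\Gr_2(\C^N)$: (a) a gauge-covariance formula relating two framings of $W$ by inner automorphisms of $SU(2)$, and (b) the identification of the common image with the intrinsic subgroup $SU(N)[W]$. Both reduce to elementary unitary algebra once the $U(1)$ ambiguity between \emph{frame changes} (living in $U(2)$) and \emph{inner automorphisms} (living in $SU(2)$) is handled carefully; this will be the only subtle point.

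\textbf{Gauge covariance (a).} Given unitary isomorphisms $f,f':\C^2\to W$, the composition $g:=f^{-1}f':\C^2\to\C^2$ is unitary, so $g\in U(2)$. Plug into Definition~\ref{def:framed_two_level_embedding}: for every $S\in SU(2)$,
\begin{equation*}
\phi_{W,f'}(S)=f'S(f')^{-1}\oplus I_{W^\perp}=(fg)S(fg)^{-1}\oplus I_{W^\perp}=f\,(gSg^{-1})\,f^{-1}\oplus I_{W^\perp}.
\end{equation*}
Now the key step: write $\det g=e^{i\alpha}$ and set $S_0:=e^{-i\alpha/2}g\in SU(2)$, so that $g=e^{i\alpha/2}S_0$. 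Since the scalar commutes with $S$, one has $gSg^{-1}=S_0SS_0^{-1}$, giving $\phi_{W,f'}(S)=\phi_{W,f}(S_0SS_0^{-1})$ as claimed. Because inner automorphisms of $SU(2)$ are bijective, this immediately yields $\phi_{W,f'}(SU(2))=\phi_{W,f}(SU(2))$, proving the frame independence of the image.

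\textbf{Identification with $SU(N)[W]$ (b).} I will prove the two inclusions. For $\subset$, any $\phi_{W,f}(S)$ restricts to $fSf^{-1}$ on $W$ and to $I$ on $W^\perp$; since $fSf^{-1}\in SU(W)$ (determinant is invariant under unitary conjugation and $\det S=1$), the image lies in $SU(N)[W]$ by Definition~\ref{def:two_level_subgroup}. For $\supset$, let $U\in SU(N)[W]$. Then $U|_W\in SU(W)$, and the conjugation map $T\mapsto f^{-1}Tf$ is a group isomorphism $SU(W)\xrightarrow{\sim}SU(2)$, so there exists a unique $S\in SU(2)$ with $U|_W=fSf^{-1}$. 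Combined with $U|_{W^\perp}=I_{W^\perp}$ and the orthogonal decomposition $\C^N=W\oplus W^\perp$, this gives $U=\phi_{W,f}(S)$. Hence $\phi_{W,f}(SU(2))=SU(N)[W]$, which is manifestly independent of $f$.

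\textbf{Main obstacle.} The only nontrivial point is the $U(1)$ gap between frame changes (which form $U(2)$) and $SU(2)$-inner automorphisms: two frames on $W$ differ by a $U(2)$-element, but the statement requires an $SU(2)$-conjugator. This is resolved by observing that the central $U(1)\subset U(2)$ acts trivially under conjugation, so the scalar determinant phase of $g$ cancels in $gSg^{-1}$ and we may replace $g$ by its $SU(2)$-normalization $S_0$. Everything else is routine unitary bookkeeping on the orthogonal decomposition $\C^N=W\oplus W^\perp$.
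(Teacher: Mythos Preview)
Your proof is correct and follows essentially the same approach as the paper: normalize the frame change $g=f^{-1}f'\in U(2)$ to $S_0\in SU(2)$ by dividing out a square root of $\det g$, observe that the central scalar cancels in conjugation so $gSg^{-1}=S_0SS_0^{-1}$, and identify the image with $SU(N)[W]$ directly from the definitions. Your treatment of the identification $\phi_{W,f}(SU(2))=SU(N)[W]$ via both inclusions is slightly more explicit than the paper's one-line version, but the arguments are otherwise the same.
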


\begin{proof}
Since $f,f'$ are unitary isomorphisms onto $W$, the map $U:=f^{-1}f'$ lies in $U(2)$ and $f'=fU$.
Let $\delta:=\det(U)\in U(1)$ and set $S_0:=\delta^{-1/2}U\in SU(2)$ (choose any square root
$\delta^{1/2}$; the ambiguity is $\pm I$ and does not affect the inner action). Then $USU^{-1}=S_0SS_0^{-1}$
for all $S\in SU(2)$, because the central scalar cancels. Therefore
\[
\phi_{W,f'}(S)=(fU)S(U^{-1}f^{-1})\oplus I_{W^\perp}
=f(USU^{-1})f^{-1}\oplus I_{W^\perp}
=\phi_{W,f}(S_0SS_0^{-1}).
\]
The equality of images follows, and $\phi_{W,f}(SU(2))$ acts as $SU(W)$ on $W$ and fixes $W^\perp$
pointwise, hence equals $SU(N)[W]$.
\end{proof}

\subsection{The Grassmannian as the configuration space of two-level logical qubits}
\label{subsec:grassmannian_logical_qubits}

The construction above associates to each two-plane $W$ a canonical two-level subgroup $SU(N)[W]$.
Conversely, a subgroup of the form $SU(N)[W]$ uniquely determines $W$ as the maximal subspace on which
it can act nontrivially.

\begin{proposition}[Two-level $SU(2)$ subgroups are parametrized by $\Gr_2(\C^N)$]
\label{prop:two_level_param_by_grass}
The assignment
\begin{equation}\label{eq:W_to_subgroup}
\Psi:\Gr_2(\C^N)\longrightarrow \{\text{two-level $SU(2)$ subgroups of }U(N)\},\qquad
\Psi(W):=SU(N)[W],
\end{equation}
is injective. In particular, two-level embedded copies of $SU(2)$ in $U(N)$ are naturally
parametrized by $\Gr_2(\C^N)$.
\end{proposition}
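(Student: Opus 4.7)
The plan is to recover the support plane $W$ intrinsically from the abstract subgroup $SU(N)[W]$, from which injectivity of $\Psi$ is immediate. The key observation is that $W^\perp$ is exactly the common fixed-point set of $SU(N)[W]$ acting on $\C^N$, so one can write
\[
W = \Fix\bigl(SU(N)[W]\bigr)^{\perp},
\]
where $\Fix(H):=\{v\in\C^N : Uv=v\ \text{for all } U\in H\}$. Once this identity is established, two distinct planes give two distinct fixed-point sets and hence two distinct subgroups.

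First, I would prove the inclusion $W^\perp\subseteq \Fix(SU(N)[W])$, which is immediate from the definition~\eqref{eq:two_level_subgroup_SUN}: every $U\in SU(N)[W]$ acts as the identity on $W^\perp$. For the reverse inclusion, let $v\in\Fix(SU(N)[W])$ and decompose $v=v_W+v_{W^\perp}$ with respect to the orthogonal splitting $\C^N=W\oplus W^\perp$. Since each $U\in SU(N)[W]$ preserves this splitting and acts trivially on $W^\perp$, the equation $Uv=v$ reduces to $U|_W\,v_W=v_W$. It therefore suffices to show that the only vector of $W$ fixed by every element of $SU(N)[W]|_W=SU(W)$ is the zero vector. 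This is immediate: the element $-I_W\in SU(W)$ fixes $v_W$ iff $v_W=-v_W$, i.e.\ $v_W=0$. (Alternatively, the standard two-dimensional representation of $SU(2)$ on $\C^2$ is irreducible, so it admits no nonzero invariant subspace, in particular no nonzero fixed vector.) Hence $v\in W^\perp$, which completes the equality $\Fix(SU(N)[W])=W^\perp$.

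With this identity in hand, injectivity follows at once: if $\Psi(W_1)=\Psi(W_2)$, i.e.\ $SU(N)[W_1]=SU(N)[W_2]$ as subgroups of $U(N)$, then they have the same fixed-point set, so $W_1^\perp=W_2^\perp$ and therefore $W_1=W_2$. This proves the parametrization claim.

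\textbf{Expected difficulty.} The argument is almost entirely formal; there is no real obstacle beyond pointing out the single nontrivial fact that the defining representation of $SU(2)$ has no nonzero fixed vectors (witnessed concretely by $-I$). The only subtlety worth checking is that one uses the honest orthogonal complement, so that the decomposition $v=v_W+v_{W^\perp}$ is preserved by every element of $SU(N)[W]$; this is built into Definition~\ref{def:two_level_subgroup} and Proposition~\ref{prop:gauge_invariance_SU2}, and is what makes the fixed-point computation block-diagonal.
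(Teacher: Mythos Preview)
Your argument is correct and is essentially identical to the paper's own proof: both establish $\Fix(SU(N)[W])=W^\perp$ (the paper phrases it as $\bigcap_{U}\ker(U-I)=W^\perp$) and deduce injectivity from $W_1^\perp=W_2^\perp\Rightarrow W_1=W_2$. Your proof merely spells out the block-diagonal reduction and exhibits $-I_W$ as an explicit witness that $SU(W)$ has no nonzero fixed vector, which the paper states in one line.
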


\begin{proof}
If $SU(N)[W_1]=SU(N)[W_2]$, then their fixed-point subspaces coincide:
\[
\Fix\big(SU(N)[W_i]\big)\;:=\;\bigcap_{U\in SU(N)[W_i]}\ker(U-I)=W_i^\perp.
\]
Indeed, every element fixes $W_i^\perp$ pointwise, while the restriction to $W_i$ ranges over
$SU(W_i)$ and hence has no nonzero fixed vector. Therefore $W_1^\perp=W_2^\perp$ and $W_1=W_2$.
\end{proof}

\begin{definition}[Two-level logical qubit]\label{def:logical_qubit}
A \emph{(two-level) logical qubit} in $\C^N$ is a choice of a complex $2$-plane $W\in\Gr_2(\C^N)$.
A \emph{framed logical qubit} is a pair $(W,f)$ where $W\in\Gr_2(\C^N)$ and $f:\C^2\to W$ is a unitary
isomorphism.
\end{definition}

\begin{remark}[Gauge group and Stiefel-bundle viewpoint]\label{rem:gauge_bundle_viewpoint}
Framed logical qubits form the Stiefel manifold $\St_2(\C^N)$ of orthonormal $2$-frames, and the
projection $\St_2(\C^N)\to \Gr_2(\C^N)$ is a principal $U(2)$-bundle. However, for the associated
$SU(2)$-embedding \eqref{eq:two_level_embedding_SU2}, the central $U(1)\subset U(2)$ acts trivially:
if $f'=fe^{i\theta}I$ then $\phi_{W,f'}=\phi_{W,f}$. Accordingly, the effective gauge acting on the
embedding data is
\[
U(2)/U(1)\ \cong\ SU(2)/\{\pm I\}\ =:\ PSU(2)\ \cong\ SO(3),
\]
i.e.\ the group of inner automorphisms of $SU(2)$.
\end{remark}

\subsection{The two-level stratum as a homogeneous space and a principal bundle}
\label{subsec:two_level_stratum_bundle}

The family of framed embeddings $(W,f)\mapsto \phi_{W,f}$ defines a distinguished subset of
$\Emb(SU(2),U(N))$, corresponding to the representation type
$V_{1}\oplus V_{0}^{\oplus(N-2)}$ (cf.\ Remark~\ref{rem:two_level_stratum_concise} and the general
classification in Section~\ref{sec:EmbSU2UN}). This subset is a single $U(N)$-conjugacy class, and
its internal redundancy is precisely the $PSU(2)$ gauge discussed above.

\begin{proposition}[Two-level embeddings form a principal $PSU(2)$-bundle over $\Gr_2(\C^N)$]
\label{prop:two_level_principal_bundle_SU2}
Let $\Emb_{\mathrm{2lvl}}(SU(2),U(N))\subset \Emb(SU(2),U(N))$ denote the set of embeddings of the form
$\phi_{W,f}$ in \eqref{eq:two_level_embedding_SU2}. Then:
\begin{enumerate}
\item[(i)] The conjugation action of $U(N)$ is transitive on $\Emb_{\mathrm{2lvl}}(SU(2),U(N))$; hence
this set is a smooth homogeneous manifold (the \emph{two-level stratum}).
\item[(ii)] The map
\[
\pi:\Emb_{\mathrm{2lvl}}(SU(2),U(N))\to \Gr_2(\C^N),\qquad \pi(\phi_{W,f})=W,
\]
is well-defined and $U(N)$-equivariant.
\item[(iii)] For fixed $W$, the fiber $\pi^{-1}(W)$ is naturally a torsor for
$PSU(2)=SU(2)/\{\pm I\}\cong SO(3)$. Equivalently, $\pi$ exhibits
$\Emb_{\mathrm{2lvl}}(SU(2),U(N))$ as a principal $PSU(2)$-bundle over $\Gr_2(\C^N)$.
\end{enumerate}
\end{proposition}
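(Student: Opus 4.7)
The plan is to handle the three items in sequence, using the gauge calculation of Proposition~\ref{prop:gauge_invariance_SU2} and the support rigidity of Proposition~\ref{prop:two_level_param_by_grass} as the main leverage.

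For (i), I would prove transitivity by construction. Given any two framed embeddings $\phi_{W_1,f_1}$ and $\phi_{W_2,f_2}$, complete $f_1,f_2$ to full orthonormal bases of $\C^N$ whose trailing $N-2$ vectors span $W_1^\perp$ and $W_2^\perp$ respectively, and let $A\in U(N)$ be the unique unitary matching them. A short block computation on vectors in $W$ versus $W^\perp$ then yields the general conjugation formula
\[
A\,\phi_{W,f}(S)\,A^{-1}=\phi_{A(W),\,A\circ f}(S),
\]
which specialized to the above $A$ gives $\Ad_A\cdot\phi_{W_1,f_1}=\phi_{W_2,f_2}$. Transitivity identifies $\Emb_{\mathrm{2lvl}}(SU(2),U(N))$ with a single $U(N)$-orbit and hence with a smooth homogeneous manifold via the closed-subgroup theorem.

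For (ii), well-definedness of $\pi(\phi_{W,f}):=W$ reduces to recovering $W$ from the embedding alone: by Proposition~\ref{prop:gauge_invariance_SU2} the image $\phi_{W,f}(SU(2))$ equals $SU(N)[W]$ independently of $f$, and by Proposition~\ref{prop:two_level_param_by_grass} this subgroup pins $W$ down uniquely (as the orthogonal complement of its joint fixed subspace). Equivariance is then immediate from the conjugation formula in (i).

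For (iii), with $W$ fixed, I would first match the fiber set-theoretically. The unitary frames $f:\C^2\to W$ form a $U(2)$-torsor under $f\mapsto fU$, and by Proposition~\ref{prop:gauge_invariance_SU2} together with Remark~\ref{rem:gauge_bundle_viewpoint} the assignment $f\mapsto\phi_{W,f}$ kills exactly the central $U(1)$, descending to a bijection $U(2)/U(1)\cong PSU(2)\xrightarrow{\sim}\pi^{-1}(W)$. To upgrade this to a principal bundle I would switch to the coset picture and compute stabilizers in the transitive $U(N)$-action: the condition $\Ad_A\cdot\phi_{W,f}=\phi_{W,f}$ forces $A$ to commute with every element of $SU(N)[W]$, and Schur applied to the irreducible $SU(2)$-representation on $W$ together with the trivial action on $W^\perp$ forces $A=z\,I_W\oplus B$ with $z\in U(1)$ and $B\in U(W^\perp)$; off-diagonal blocks are killed because no nonzero vector in $W$ or $W^\perp$ is simultaneously fixed and permuted by all of $SU(W)$. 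Thus the stabilizer is $U(1)\times U(N-2)$, sitting inside the $W$-stabilizer $U(2)\times U(N-2)$, and the standard Lie-theoretic fibration
\[
U(N)/\bigl(U(1)\times U(N-2)\bigr)\ \longrightarrow\ U(N)/\bigl(U(2)\times U(N-2)\bigr)
\]
is the desired principal $PSU(2)$-bundle, with fiber $U(2)/U(1)=PSU(2)$.

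The main thing to keep straight is the distinction between the two group actions at play: the ambient $U(N)$-conjugation (transitive on the total space) and the internal $PSU(2)$-gauge coming from inner automorphisms of $SU(2)$ acting fiberwise through reframing. Schur's lemma is the decisive ingredient that reconciles the naive frame-counting with the coset presentation of the fibration; without it one might overcount the stabilizer by taking it to be the full $W$-stabilizer $U(2)\times U(N-2)$ and lose the $PSU(2)$ fiber entirely.
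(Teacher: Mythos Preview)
Your argument is correct and, for parts (i) and (ii), follows the same line as the paper, only more explicitly: the paper simply notes that any two planes are $U(N)$-related and that conjugation carries framed two-level embeddings to framed two-level embeddings, whereas you write down the conjugation formula $A\,\phi_{W,f}(S)\,A^{-1}=\phi_{A(W),A\circ f}(S)$ and build $A$ by completing frames to full bases, which tidies up the transitivity step. For (ii) you also invoke Proposition~\ref{prop:two_level_param_by_grass} to pin down $W$ from the image subgroup; the paper is content to cite Proposition~\ref{prop:gauge_invariance_SU2} alone.

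The substantive difference is in (iii). The paper stops at the fiberwise statement: from $f'=fU$ one gets $\phi_{W,f'}=\phi_{W,f}\circ\Ad_{S_0}$ with $S_0\in SU(2)$ determined modulo $\{\pm I\}$, hence each fiber is a $PSU(2)$-torsor, and the principal-bundle assertion is then declared ``equivalent''. You instead compute the conjugation stabilizer of $\phi_{W,f}$ via Schur as $U(1)\times U(N-2)$ and present the whole picture as the homogeneous fibration
\[
U(N)\big/\bigl(U(1)\times U(N-2)\bigr)\ \longrightarrow\ U(N)\big/\bigl(U(2)\times U(N-2)\bigr)\cong\Gr_2(\C^N),
\]
with structure group $\bigl(U(2)\times U(N-2)\bigr)\big/\bigl(U(1)\times U(N-2)\bigr)\cong U(2)/U(1)\cong PSU(2)$. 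This is a cleaner route to the \emph{principal} bundle claim (normality of the scalar $U(1)$ in $U(2)$ is what makes the quotient a group acting on the right), and it dovetails with the later centralizer computations in Section~\ref{sec:EmbSU2UN}. One cosmetic point: your sentence about off-diagonal blocks being ``killed because no nonzero vector \ldots is simultaneously fixed and permuted'' is really just the Schur statement $\Hom_{SU(2)}(V_1,V_0)=0$; phrasing it that way would be crisper.
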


\begin{proof}
(i) Any two-planes $W,W'$ are related by some $U\in U(N)$. Conjugation by $U$ transports $SU(N)[W]$ to
$SU(N)[W']$, hence carries $\phi_{W,f}$ to an embedding of the form $\phi_{W',f'}$ for some frame
$f'$. Thus conjugation is transitive.

(ii) By Proposition~\ref{prop:gauge_invariance_SU2}, the subgroup $\phi_{W,f}(SU(2))$ depends only on
$W$, so $\pi$ is well-defined. Equivariance is immediate.

(iii) Fix $W$ and two frames $f,f'$ over $W$. Then $f'=fU$ for a unique $U\in U(2)$, and by the proof
of Proposition~\ref{prop:gauge_invariance_SU2} we obtain
$\phi_{W,f'}=\phi_{W,f}\circ \Ad_{S_0}$ for some $S_0\in SU(2)$ well-defined modulo $\{\pm I\}$.
Hence the effective gauge group is $PSU(2)$ and $\pi^{-1}(W)$ is a $PSU(2)$-torsor.
\end{proof}

\subsection{Descriptor form: separating the logical support from the internal action}
\label{subsec:descriptor_form}

Two-level gates admit a convenient \emph{descriptor} that separates \emph{where} the gate acts (the
support two-plane) from \emph{what} action is performed on that plane (the internal $SU(2)$ element),
while keeping track of the intrinsic gauge redundancy.

\paragraph{Support as a projector (gauge-invariant data).}
Given $W\in\Gr_2(\C^N)$, let $P_W$ denote the orthogonal projector onto $W$. The map $W\mapsto P_W$
identifies $\Gr_2(\C^N)$ with the set of rank-$2$ orthogonal projectors in $\End(\C^N)$; thus $P_W$ is
an intrinsic representative of the logical-qubit choice.

\paragraph{Internal action and gauge.}
Fix $W$ and choose a frame $f:\C^2\to W$. Any two-level special unitary supported on $W$ can be
written as
\begin{equation}\label{eq:descriptor_group_level_SU2}
U_{W,f}(S)\;:=\;\phi_{W,f}(S)\;=\; fSf^{-1}\oplus I_{W^\perp},\qquad S\in SU(2).
\end{equation}
If the frame is changed to $f'=fU$ with $U\in U(2)$, then $U_{W,f'}(S)=U_{W,f}(S_0SS_0^{-1})$ for a
suitable $S_0\in SU(2)$, so the internal label is defined only up to conjugation, i.e.\ up to the
$PSU(2)$ gauge.

\paragraph{Lie-algebra descriptor.}
For geometric and variational constructions it is convenient to work at Lie-algebra level. Given
$A\in\mathfrak{su}(2)$, define the corresponding two-level generator on $\C^N$ by
\begin{equation}\label{eq:descriptor_lie_level_SU2}
X_{W,f}(A)\;:=\; fAf^{-1}\oplus 0_{W^\perp}\ \in\ \mathfrak{u}(N),
\qquad
U_{W,f}\big(\exp(A)\big)=\exp\!\big(X_{W,f}(A)\big).
\end{equation}

\begin{lemma}[Hilbert--Schmidt isometry for two-level generators]
\label{lem:HS_norm_two_level_generator}
Let $W\in\Gr_2(\C^N)$, let $f:\C^2\to W$ be a unitary frame, and let $A,B\in\mathfrak{su}(2)$.
Define $X_{W,f}(A):=fAf^{-1}\oplus 0_{W^\perp}$ as in \eqref{eq:descriptor_lie_level_SU2}. Then, with
$\langle X,Y\rangle_{\mathrm{hs}}=\tfrac12\Tr(X^\dagger Y)$ on both $\mathfrak{su}(2)$ and
$\mathfrak{u}(N)$,
\begin{equation}\label{eq:HS_norm_preservation}
\|X_{W,f}(A)\|_{\mathrm{hs}}=\|A\|_{\mathrm{hs}}
\qquad\text{and}\qquad
\langle X_{W,f}(A),X_{W,f}(B)\rangle_{\mathrm{hs}}=\langle A,B\rangle_{\mathrm{hs}}.
\end{equation}
In particular, these quantities are independent of the choice of frame $f$.
\end{lemma}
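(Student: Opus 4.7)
The plan is to reduce the claim to a direct trace computation by exploiting two facts: (a) $X_{W,f}(A)$ is block-diagonal with respect to the orthogonal decomposition $\C^N = W \oplus W^\perp$, so the $W^\perp$ block contributes nothing to $\Tr(X_{W,f}(A)^\dagger X_{W,f}(B))$; and (b) the map $A \mapsto fAf^{-1}$ on the $W$-block is conjugation by a unitary, which preserves the trace by cyclicity. Both of these should dispatch the lemma in a few lines.

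Concretely, I would first record that since $f:\C^2 \to W$ is a unitary isomorphism, the adjoint $f^\dagger$ equals $f^{-1}$ (suitably interpreted as $W \to \C^2$), so that $(fAf^{-1})^\dagger = fA^\dagger f^{-1}$ as operators on $W$. Using the block-diagonal form
\begin{equation*}
X_{W,f}(A) = \begin{pmatrix} fAf^{-1} & 0 \\ 0 & 0 \end{pmatrix},
\quad
X_{W,f}(B) = \begin{pmatrix} fBf^{-1} & 0 \\ 0 & 0 \end{pmatrix}
\end{equation*}
with respect to $\C^N = W \oplus W^\perp$, the product becomes
\begin{equation*}
X_{W,f}(A)^\dagger X_{W,f}(B) = \bigl(fA^\dagger f^{-1}\bigr)\bigl(fBf^{-1}\bigr) \oplus 0_{W^\perp} = fA^\dagger B f^{-1} \oplus 0_{W^\perp}.
\end{equation*}
Taking the trace on $\C^N$ splits as the trace on $W$ plus the trace on $W^\perp$; the latter vanishes, and the former equals $\Tr_{\C^2}(A^\dagger B)$ by cyclicity of the trace. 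Multiplying by $\tfrac12$ yields $\langle X_{W,f}(A), X_{W,f}(B)\rangle_{\mathrm{hs}} = \tfrac12 \Tr(A^\dagger B) = \langle A, B\rangle_{\mathrm{hs}}$, and the norm identity follows by specialization $B = A$.

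Frame independence is then automatic, since the final expression $\tfrac12 \Tr(A^\dagger B)$ contains no reference to $f$. Alternatively, it can be seen directly from Proposition~\ref{prop:gauge_invariance_SU2}: a change of frame $f' = fU$ produces $X_{W,f'}(A) = f'Af'^{-1} \oplus 0 = f(UAU^{-1})f^{-1} \oplus 0$, and Hilbert--Schmidt pairings on $\mathfrak{su}(2)$ are invariant under inner automorphisms. There is no real obstacle here; the only point that deserves care is the interpretation of $f^\dagger$ as a map $W \to \C^2$ so that $X_{W,f}(A)^\dagger$ is unambiguously given by $fA^\dagger f^{-1}$ on the $W$-block, and the observation that the Hilbert--Schmidt pairing on $\mathfrak{su}(2)$ is normalized with the same factor $\tfrac12$ as on $\mathfrak{u}(N)$, so no dimensional constants appear.
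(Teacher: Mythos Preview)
Your proof is correct and follows essentially the same route as the paper's own argument: both exploit the block-diagonal form $X_{W,f}(A)=fAf^{-1}\oplus 0$, use that unitary conjugation preserves adjoints and traces, and conclude frame-independence from the fact that the result does not involve $f$. Your write-up is simply a more explicit version of the paper's two-line computation.
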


\begin{proof}
Since $f$ is unitary, conjugation preserves adjoints and traces. Writing
$X_{W,f}(A)=fAf^{-1}\oplus 0$, we have
\[
\|X_{W,f}(A)\|_{\mathrm{hs}}^2
=\frac12\Tr\!\big((fA^\dagger f^{-1})(fAf^{-1})\big)
=\frac12\Tr(A^\dagger A)
=\|A\|_{\mathrm{hs}}^2,
\]
and similarly for the inner product. Frame-independence follows.
\end{proof}

\begin{corollary}[Energy reduction for two-level paths]\label{cor:energy_reduction_two_level}
Let $A:[0,1]\to\mathfrak{su}(2)$ be measurable and consider a curve $U:[0,1]\to U(N)$ with body
velocity $X_{W,f}(A(t))$. Then
\[
\int_0^1\|X_{W,f}(A(t))\|_{\mathrm{hs}}^2\,dt
=\int_0^1\|A(t)\|_{\mathrm{hs}}^2\,dt,
\]
so the Hilbert--Schmidt metric cost of a two-level evolution is exactly the cost of its
$2\times 2$ internal generator.
\end{corollary}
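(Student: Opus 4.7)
The plan is to reduce the identity to a pointwise application of Lemma~\ref{lem:HS_norm_two_level_generator} followed by integration over $t$. At every $t\in[0,1]$ at which the body velocity $\dot U(t)U(t)^{-1}$ is defined and equals $X_{W,f}(A(t))$, Lemma~\ref{lem:HS_norm_two_level_generator} applied to the single element $A(t)\in\mathfrak{su}(2)$ yields the pointwise equality
\[
\|X_{W,f}(A(t))\|_{\mathrm{hs}}^2=\|A(t)\|_{\mathrm{hs}}^2.
\]
The second and only remaining step is to integrate this identity over $[0,1]$.

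The one technical item worth recording is that both sides define Lebesgue-integrable nonnegative Borel functions of $t$. This is immediate from linearity: the assignment $A\mapsto X_{W,f}(A)=fAf^{-1}\oplus 0_{W^\perp}$ is a fixed $\mathbb{R}$-linear and bounded map $\mathfrak{su}(2)\to\mathfrak{u}(N)$, so measurability of $A(\cdot)$ transfers to measurability of $X_{W,f}(A(\cdot))$; composing with the continuous function $Y\mapsto\|Y\|_{\mathrm{hs}}^2$ preserves measurability, and the pointwise isometry identifies the two integrands.

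No substantive obstacle is encountered: the algebraic content of the corollary is already carried by Lemma~\ref{lem:HS_norm_two_level_generator}, and the present statement merely transports the frame-independent Hilbert--Schmidt isometry along a time parameter. Conceptually, however, this is the step that identifies the intrinsic Hilbert--Schmidt metric cost of a two-level evolution in $U(N)$ with the cost of its $2\times 2$ internal generator, and this identification is precisely the input used in Section~\ref{sec:variational_principle} to match energy-minimizing paths inside embedded two-level gate manifolds with constant-speed one-parameter subgroups generated by minimal-norm elements of $\mathfrak{su}(2)$.
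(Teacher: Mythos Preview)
Your proposal is correct and matches the paper's approach: the corollary is stated without proof immediately after Lemma~\ref{lem:HS_norm_two_level_generator}, precisely because it follows by applying that lemma pointwise and integrating over $t$. Your added remarks on measurability are accurate and harmless, though more detailed than the paper deems necessary.
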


\begin{remark}[A gauge-invariant descriptor pair]\label{rem:gauge_invariant_descriptor_pair}
A two-level special-unitary gate supported on $W$ can be described by the pair $(P_W,[S])$, where
$P_W$ is the rank-$2$ projector onto $W$ and $[S]$ is the $PSU(2)$-class of $S$ (i.e.\ $S$ modulo the
conjugation gauge). Thus $P_W$ specifies the logical support, while $[S]$ specifies the internal
action up to the frame choice.
\end{remark}

\paragraph{Phase conventions.}
If one wishes to incorporate arbitrary phases on the support plane $W$ (i.e.\ to work with $U(N)[W]$
rather than $SU(N)[W]$), one may extend the $SU(2)$ action by an additional commuting $U(1)$ phase on
$W$. Equivalently, one may treat the $SU(2)$ two-level part as the elementary non-abelian primitive
and compile the remaining abelian degrees of freedom (global phase and/or diagonal factors)
separately, as described in the discretization/compilation step.

\subsection{Relation to the embedding dictionary and to two-level unitaries}
\label{subsec:relation_to_dict}

Two-level embeddings yield a Grassmannian-indexed subset of the full embedding-elementary dictionary:
\[
\begin{aligned}
\mathcal{G}^{SU}_{\mathrm{2lvl}}(n)
&:=\bigcup_{W\in\Gr_2(\C^N)} SU(N)[W],\\
\mathcal{G}^{U}_{\mathrm{2lvl}}(n)
&:=\bigcup_{W\in\Gr_2(\C^N)} U(N)[W].
\end{aligned}
\]
Elements of $SU(N)[W]$ are precisely the \emph{special-unitary two-level unitaries}: they act as an
arbitrary $SU(2)$ on $W$ and as the identity on $W^\perp$. These operations are the basic primitives
in classical two-level (Givens-type) synthesis and will be the core mechanism behind the universality
proofs in Section~\ref{sec:universality}.

\begin{proposition}[Two-level gates are embedding-elementary]\label{prop:two_level_in_Gelem_SU2}
For every $W\in\Gr_2(\C^N)$ one has $SU(N)[W]\subset \mathcal{G}_{\mathrm{elem}}^{SU}(n)$. Consequently,
\[
\mathcal{G}_{\mathrm{2lvl}}^{SU}(n)\subset \mathcal{G}_{\mathrm{elem}}^{SU}(n).
\]
\end{proposition}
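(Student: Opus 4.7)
The plan is to read the inclusion off the construction of Section~\ref{subsec:two_level_embeddings}: for each $W\in\Gr_2(\C^N)$ I need to exhibit a single element of $\Emb(SU(2),U(N))$ whose image is exactly $SU(N)[W]$. Once such a witness is produced, $SU(N)[W]$ appears as one of the sets being united in the definition~\eqref{eq:Gelem_def_SU} of $\mathcal{G}_{\mathrm{elem}}^{SU}(n)$, so the first inclusion follows by inspection, and the final assertion $\mathcal{G}_{\mathrm{2lvl}}^{SU}(n)\subset\mathcal{G}_{\mathrm{elem}}^{SU}(n)$ is then obtained by taking a union over $W\in\Gr_2(\C^N)$.

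To build the witnessing embedding, I would fix $W\in\Gr_2(\C^N)$ and pick any unitary isomorphism $f:\C^2\to W$ (such a frame exists because $W$ inherits a $2$-dimensional Hilbert structure from $\C^N$), and form $\phi_{W,f}$ as in Definition~\ref{def:framed_two_level_embedding}. Proposition~\ref{prop:gauge_invariance_SU2} already asserts that $\phi_{W,f}(SU(2))=SU(N)[W]$, so the remaining task is to confirm that $\phi_{W,f}$ is genuinely an element of $\Emb(SU(2),U(N))$ in the sense of Definition~\ref{def:embedding}. Continuity and the homomorphism property are immediate from the block formula $\phi_{W,f}(S)=fSf^{-1}\oplus I_{W^\perp}$, since conjugation by the unitary isomorphism $f$ is multiplicative. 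Group-level injectivity is equally direct: if $\phi_{W,f}(S)=I_N$, then $fSf^{-1}=I_W$, and since $f$ is invertible onto $W$, this forces $S=I_2$. Injectivity of the differential follows from the Lie-algebra descriptor~\eqref{eq:descriptor_lie_level_SU2}, which gives $d(\phi_{W,f})_e(A)=fAf^{-1}\oplus 0_{W^\perp}$, vanishing only for $A=0$ by the same argument.

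With $\phi_{W,f}\in\Emb(SU(2),U(N))$ established, the inclusion $\phi_{W,f}(SU(2))\subset\mathcal{G}_{\mathrm{elem}}^{SU}(n)$ holds by definition, yielding $SU(N)[W]\subset\mathcal{G}_{\mathrm{elem}}^{SU}(n)$. Taking the union over $W\in\Gr_2(\C^N)$ then delivers the consequence $\mathcal{G}_{\mathrm{2lvl}}^{SU}(n)\subset\mathcal{G}_{\mathrm{elem}}^{SU}(n)$. There is no substantive obstacle here: the statement is essentially a bookkeeping consequence of Definition~\ref{def:framed_two_level_embedding} and Proposition~\ref{prop:gauge_invariance_SU2}, with the only routine item being the verification that the framed two-level map satisfies both clauses of Definition~\ref{def:embedding}.
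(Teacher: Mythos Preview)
Your proof is correct and follows exactly the same approach as the paper: fix $W$, choose a frame $f$, invoke Proposition~\ref{prop:gauge_invariance_SU2} to identify $\phi_{W,f}(SU(2))=SU(N)[W]$, and observe that $\phi_{W,f}\in\Emb(SU(2),U(N))$. You are simply more explicit than the paper in spelling out why $\phi_{W,f}$ satisfies Definition~\ref{def:embedding}, which is a welcome bit of hygiene but not a different argument.
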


\begin{proof}
Fix $W$ and choose a frame $f:\C^2\to W$. Then $SU(N)[W]=\phi_{W,f}(SU(2))$ and
$\phi_{W,f}\in\Emb(SU(2),U(N))$, hence every element of $SU(N)[W]$ lies in
$\bigcup_{\phi\in\Emb(SU(2),U(N))}\phi(SU(2))=\mathcal{G}_{\mathrm{elem}}^{SU}(n)$.
\end{proof}

\subsection{Comparison with tensor-factor locality}
\label{subsec:comparison_tensor_locality}

The Grassmannian model is intrinsic on $\C^N$ and does not presuppose a tensor factorization.
Nevertheless, once $\mathbb{H}_N=(\C^2)^{\otimes n}$ is fixed, conventional one-qubit gates appear as
a distinguished embedding-elementary family, although typically not as two-level operations.

\begin{proposition}[Tensor-local one-qubit gates are embedding-elementary (but not two-level)]
\label{prop:local_in_Gelem_not_2lvl}
Let $w_j^{(n)}:U(2)\hookrightarrow U(N)$ be the canonical tensor placements from
Section~\ref{sec:preliminaries}. Then $w_j^{(n)}(SU(2))\subset \mathcal{G}_{\mathrm{elem}}^{SU}(n)$
for each $j$, and hence
\[
\bigcup_{j=1}^n w_j^{(n)}(SU(2))\ \subset\ \mathcal{G}_{\mathrm{elem}}^{SU}(n).
\]
Moreover, for $n\ge 2$ the subgroup $w_j^{(n)}(SU(2))$ is not of two-level form $SU(N)[W]$: it acts
nontrivially on a $2^{n-1}$-fold orthogonal direct sum of $2$-planes (the subspaces spanned by
$\ket{0,b}$ and $\ket{1,b}$ with $b\in\{0,1\}^{n-1}$), rather than on a single two-plane with
pointwise-fixed orthogonal complement.
\end{proposition}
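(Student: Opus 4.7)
The plan is to split the statement into two independent claims: the inclusion $w_j^{(n)}(SU(2))\subset\mathcal{G}_{\mathrm{elem}}^{SU}(n)$, and the fact that $w_j^{(n)}(SU(2))$ is not of two-level form $SU(N)[W]$ for any $W$ when $n\ge 2$.

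The first claim reduces to checking that $w_j^{(n)}|_{SU(2)}$ satisfies Definition~\ref{def:embedding}. It is continuous and multiplicative (the tensor placement $S\mapsto I_2^{\otimes(j-1)}\otimes S\otimes I_2^{\otimes(n-j)}$ depends polynomially on $S$ and respects products), injective (if $w_j^{(n)}(S)=I_N$, then reading off a single diagonal block of the resulting block-diagonal operator recovers $S=I_2$), and its differential at the identity sends $A\in\mathfrak{su}(2)$ to $I_2^{\otimes(j-1)}\otimes A\otimes I_2^{\otimes(n-j)}\in\mathfrak{u}(N)$, which vanishes only when $A=0$. Hence $w_j^{(n)}|_{SU(2)}\in\Emb(SU(2),U(N))$ and the inclusion follows from the definition of $\mathcal{G}_{\mathrm{elem}}^{SU}(n)$.

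For the non-two-level claim I would compare common fixed-point subspaces, in the spirit of the injectivity argument in the proof of Proposition~\ref{prop:two_level_param_by_grass}. For any $W\in\Gr_2(\C^N)$ the subgroup $SU(N)[W]$ fixes $W^\perp$ pointwise and admits no further common fixed vector, so $\Fix(SU(N)[W])=W^\perp$ has complex dimension $N-2$. In the factorization $\C^N\cong\C^{2^{j-1}}\otimes\C^2\otimes\C^{2^{n-j}}$, a pure tensor $u\otimes v\otimes w$ is fixed by every $w_j^{(n)}(S)=I\otimes S\otimes I$ iff $Sv=v$ for all $S\in SU(2)$, forcing $v=0$; expanding an arbitrary vector in a product basis of the outer factors gives $\Fix(w_j^{(n)}(SU(2)))=\{0\}$. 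For $n\ge 2$ one has $N-2\ge 2>0$, so the fixed-point dimensions are incompatible and no $W$ can realise $w_j^{(n)}(SU(2))=SU(N)[W]$. To connect with the geometric picture in the statement, the same factorization supplies the orthogonal decomposition $\C^N=\bigoplus_{b\in\{0,1\}^{n-1}}W_b$, where $W_b$ is the 2-plane spanned by the two computational basis vectors sharing the label $b$ on the non-$j$ qubits; each $W_b$ is invariant under $w_j^{(n)}(SU(2))$, which acts as the same $S$ on all $W_b$ simultaneously, giving $2^{n-1}\ge 2$ simultaneously-supporting planes rather than one.

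No substantial obstacle is anticipated; the only step that benefits from a little care is the verification that $\Fix(w_j^{(n)}(SU(2)))=\{0\}$, which reduces to the basic fact that $SU(2)$ has no nonzero vector fixed by all of its elements, applied factor-wise in the tensor decomposition. Everything else is bookkeeping in tensor indices.
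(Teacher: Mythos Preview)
Your argument is correct and follows essentially the same route as the paper. For the inclusion, the paper simply observes that $w_j^{(n)}$ is already known to be an embedding of $U(2)$ and restricts to $SU(2)$; your more explicit verification of Definition~\ref{def:embedding} is fine and equivalent. For the non-two-level claim, the paper argues directly from the decomposition $\C^N=\bigoplus_{b}W_b$ (the same $SU(2)$ acts on every plane, whereas $SU(N)[W]$ fixes $W^\perp$ pointwise), while you first extract the cleaner numerical invariant $\dim\Fix(\,\cdot\,)$ and then record the same decomposition; this is a mild sharpening rather than a different approach.
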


\begin{proof}
By definition, $w_j^{(n)}$ is an embedding $U(2)\hookrightarrow U(N)$, hence its restriction to
$SU(2)$ lies in $\Emb(SU(2),U(N))$, so $w_j^{(n)}(SU(2))\subset \mathcal{G}_{\mathrm{elem}}^{SU}(n)$.

For the final claim, observe that $w_j^{(n)}(SU(2))$ preserves the decomposition
$\C^N\cong\bigoplus_{b\in\{0,1\}^{n-1}} \C^2_{b}$ into $2^{n-1}$ orthogonal two-planes, acting as the
\emph{same} $SU(2)$ on each plane $\C^2_b$. In contrast, a two-level subgroup $SU(N)[W]$ fixes
$W^\perp$ pointwise and acts nontrivially only on a single two-plane $W$. Thus $w_j^{(n)}(SU(2))$ is
not of the form $SU(N)[W]$ when $n\ge 2$.
\end{proof}

\begin{remark}[Why the Grassmannian model is the canonical two-level notion]
Tensor locality selects $n$ preferred embeddings of $SU(2)$ (the tensor factors), but these lie in a
different representation type than the two-level stratum: they act as multiple copies of the defining
module rather than as $V_1\oplus V_0^{\oplus(N-2)}$. The Grassmannian two-level model instead ranges
over \emph{all} complex two-planes and isolates exactly the sector underlying classical two-level
(Givens-type) synthesis and the corresponding universality mechanism for $SU(N)$.
\end{remark}

\paragraph{Bridge to universality and compilation.}
The key point for the remainder of the paper is that two-level gates provide both (i) an intrinsic,
Grassmannian-indexed primitive vocabulary and (ii) the classical two-level factors used in
constructive synthesis. In Section~\ref{sec:universality} we prove that the subgroup generated by
$\bigcup_{W}SU(N)[W]$ is $SU(N)$, and in Section~\ref{sec:SK} we discretize these $SU(2)$ motions and
lift the resulting words to $U(N)$ with explicit error control.

\section{Phase-free and full universality from two-level dictionaries}
\label{sec:universality}

\noindent
We prove that the two-level sector already yields universality, both in the phase-free sense (generation of $SU(N)$) and in the full
unitary sense (generation of $U(N)$). Throughout, fix $n\in\N$ and set $N:=2^n$. Recall that for each $2$-plane
$W\in\Gr_2(\C^N)$ we have the two-level subgroups $U(N)[W]\le U(N)$ and $SU(N)[W]\le SU(N)$
(Definition~\ref{def:two_level_subgroup}). We consider the associated two-level dictionaries
\begin{equation}\label{eq:G2lvl_def_SU_U_concise}
\mathcal{G}^{SU}_{\mathrm{2lvl}}(n):=\bigcup_{W\in\Gr_2(\C^N)} SU(N)[W]\subset SU(N),
\qquad
\mathcal{G}^{U}_{\mathrm{2lvl}}(n):=\bigcup_{W\in\Gr_2(\C^N)} U(N)[W]\subset U(N).
\end{equation}
Since two-level subgroups are realized as images of (framed) two-level embeddings, we have
$\mathcal{G}^{SU}_{\mathrm{2lvl}}(n)\subset \mathcal{G}^{SU}_{\mathrm{elem}}(n)$ and
$\mathcal{G}^{U}_{\mathrm{2lvl}}(n)\subset \mathcal{G}^{U}_{\mathrm{elem}}(n)$
(cf.\ Sections~\ref{subsec:two_level_embeddings}--\ref{subsec:relation_to_dict}). Hence universality at the two-level level implies
universality of the full embedding-elementary dictionaries.

\medskip
\noindent
The argument has two inputs: (i) a QR/Givens factorization of arbitrary unitaries into coordinate two-level factors and a diagonal
remainder, and (ii) the fact that diagonal tori are generated by two-level phase rotations. We treat $SU(N)$ and $U(N)$ in parallel.

\subsection{Diagonal tori from two-level phase rotations}
\label{subsec:diagonal_tori_two_level}

Let $T\le U(N)$ be the diagonal torus and $T_0:=T\cap SU(N)$.

\begin{lemma}[The special-unitary torus is generated by $SU(2)$ two-level phases]
\label{lem:T0_generated_by_two_level_phases_concise}
For each $j=2,\dots,N$ define
\[
H_{1j}:=\frac{i}{2}\big(E_{11}-E_{jj}\big)\in\mathfrak{su}(N),
\qquad
\gamma_{1j}(t):=\exp(tH_{1j})\in SU(N).
\]
Then $\gamma_{1j}(t)\in SU(N)[W_{1j}]$ for $W_{1j}=\mathrm{span}\{e_1,e_j\}$, and
\[
T_0=\Big\langle\,\gamma_{12}(\R),\dots,\gamma_{1N}(\R)\,\Big\rangle.
\]
Consequently, every $D\in T$ admits a factorization $D=e^{i\theta}D_0$ with $\theta\in\R$ and
$D_0\in\big\langle\,SU(N)[W_{12}]\allowbreak\cup\cdots\allowbreak\cup SU(N)[W_{1N}]\,\big\rangle$.
\end{lemma}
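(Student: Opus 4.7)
The statement splits naturally into three independent verifications: (a) the membership $\gamma_{1j}(t)\in SU(N)[W_{1j}]$; (b) the generation equality $T_0=\langle\gamma_{12}(\R),\dots,\gamma_{1N}(\R)\rangle$; and (c) the decomposition of an arbitrary $D\in T$. I would treat them in that order.

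For (a), the strategy is a direct evaluation of the matrix exponential. Since $H_{1j}=\tfrac{i}{2}(E_{11}-E_{jj})$ is diagonal with nonzero entries only in positions $(1,1)$ and $(j,j)$, and the two projectors $E_{11}$ and $E_{jj}$ commute and are mutually orthogonal, one obtains that $\gamma_{1j}(t)$ is the diagonal matrix carrying $e^{it/2}$ in position $1$, $e^{-it/2}$ in position $j$, and $1$ elsewhere. This unitary fixes $W_{1j}^\perp=\mathrm{span}\{e_k:k\neq 1,j\}$ pointwise and restricts on $W_{1j}$ to the $SU(2)$-element $\diag(e^{it/2},e^{-it/2})$ of determinant $1$, so $\gamma_{1j}(t)\in SU(N)[W_{1j}]$.

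For (b), I would work at the Lie-algebra level. The subgroup $T_0=T\cap SU(N)$ is a connected compact abelian Lie subgroup of $SU(N)$ whose Lie algebra $\mathfrak{t}_0=\{\,i\,\diag(\theta_1,\dots,\theta_N):\sum_k\theta_k=0\,\}\subset\mathfrak{u}(N)$ has real dimension $N-1$. The matrices $H_{12},\dots,H_{1N}$ lie in $\mathfrak{t}_0$, commute pairwise (all diagonal), and are $\R$-linearly independent because $\{E_{11}-E_{jj}:j=2,\dots,N\}$ is a basis of the traceless diagonal subspace of $\C^{N\times N}$; hence they form an $\R$-basis of $\mathfrak{t}_0$. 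By commutativity one has $\prod_{j=2}^N\gamma_{1j}(t_j)=\exp\bigl(\sum_{j=2}^N t_j H_{1j}\bigr)$ for every $(t_2,\dots,t_N)\in\R^{N-1}$, and by surjectivity of $\exp$ on any connected compact abelian Lie group each element of $T_0$ is such a finite product. The reverse inclusion $\langle\gamma_{1j}(\R)\rangle\subseteq T_0$ is immediate, since each $\gamma_{1j}(\R)\subseteq T_0$ and $T_0$ is a subgroup.

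For (c), given $D\in T$ I would pick $\alpha\in\R$ with $\det(D)=e^{i\alpha}$, set $\theta:=\alpha/N$, and put $D_0:=e^{-i\theta}D$. Then $\det(D_0)=e^{-iN\theta}\det(D)=1$, so $D_0\in T_0$ and, by (b), $D_0\in\langle\gamma_{12}(\R),\dots,\gamma_{1N}(\R)\rangle\subseteq\langle SU(N)[W_{12}]\cup\dots\cup SU(N)[W_{1N}]\rangle$, giving the required factorization $D=e^{i\theta}D_0$. \textbf{Main obstacle.} The proof is essentially torus linear algebra together with surjectivity of $\exp$ on a connected compact abelian Lie group, and no individual step is delicate. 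The point worth emphasizing is that commutativity of the diagonal generators bypasses any Baker--Campbell--Hausdorff or closure argument: the product $\prod_j\gamma_{1j}(t_j)$ is literally $\exp\bigl(\sum_j t_j H_{1j}\bigr)$, so the whole statement reduces to a dimension count and a linear-independence check in $\mathfrak{t}_0$, plus a one-line global-phase adjustment for (c).
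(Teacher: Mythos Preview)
Your proposal is correct and follows essentially the same approach as the paper's proof: both verify the support claim by direct diagonal computation, establish generation of $T_0$ by noting that $\{H_{1j}\}_{j=2}^N$ span $\mathfrak t_0$ and commute so that $\exp$ (surjective onto the compact torus $T_0$) factors through the one-parameter subgroups, and handle the final statement by splitting off a global phase to land in $T_0$. Your write-up is slightly more explicit (e.g.\ the dimension count and the formula $\theta=\alpha/N$), but there is no substantive difference in method.
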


\begin{proof}
The support statement follows from the explicit diagonal action of $H_{1j}$, hence of $\gamma_{1j}(t)$.
Moreover, $\mathfrak{t}_0=\Lie(T_0)$ is spanned by $\{\,i(E_{11}-E_{jj})\,\}_{j=2}^N$, equivalently by $\{H_{1j}\}_{j=2}^N$, and the
$H_{1j}$ commute. Therefore $\exp(\mathfrak{t}_0)$ lies in the subgroup generated by the one-parameter subgroups
$\gamma_{1j}(\R)$. Since $T_0$ is a compact torus, $\exp:\mathfrak{t}_0\to T_0$ is surjective
\cite[Ch.~7]{Hall2015} or \cite[Ch.~IV]{Knapp2002}, giving the claim. The final statement is the decomposition of $T$ into global phase
and special-unitary part.
\end{proof}

\begin{lemma}[The full diagonal torus is generated by $U(2)$ two-level phases]
\label{lem:T_generated_by_U2_two_level_phases_concise}
One has $T\subset \langle \mathcal{G}^{U}_{\mathrm{2lvl}}(n)\rangle$.
\end{lemma}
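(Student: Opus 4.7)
The plan is to reduce the statement to the previous lemma by splitting $T$ into its special-unitary part $T_0 = T\cap SU(N)$ and a global-phase $U(1)$ complement, and then to show that the global phase is accessible from $\mathcal{G}^{U}_{\mathrm{2lvl}}(n)$. By Lemma~\ref{lem:T0_generated_by_two_level_phases_concise}, $T_0$ is already generated by the $SU(2)$ two-level phases $\gamma_{1j}(\R)\subset SU(N)[W_{1j}]\subset U(N)[W_{1j}]$, so $T_0\subset \langle \mathcal{G}^{U}_{\mathrm{2lvl}}(n)\rangle$ for free. It remains to show that every scalar $e^{i\theta}I_N$ lies in $\langle \mathcal{G}^{U}_{\mathrm{2lvl}}(n)\rangle$, since then $T = U(1)\cdot T_0$ is contained in the generated subgroup.

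The cleanest way to implement both parts simultaneously is to exhibit single-diagonal-entry phase gates as elements of $U(N)[W_{1j}]$. For each $j=2,\ldots,N$ and $\theta\in\R$, consider
\begin{equation*}
D_j(\theta):=\exp(i\theta E_{jj}),\qquad D_1(\theta):=\exp(i\theta E_{11}).
\end{equation*}
On $W_{1j}=\mathrm{span}\{e_1,e_j\}$ the operator $D_j(\theta)$ restricts to $\mathrm{diag}(1,e^{i\theta})\in U(2)$, while it fixes $W_{1j}^\perp$ pointwise; hence $D_j(\theta)\in U(N)[W_{1j}]$ for $j\ge 2$ and similarly $D_1(\theta)\in U(N)[W_{12}]$. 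In particular, each $D_j(\theta)$ belongs to $\mathcal{G}^{U}_{\mathrm{2lvl}}(n)$ (note that the analogous elements are \emph{not} available in the $SU(2)$ dictionary, which is exactly why the previous lemma stopped at $T_0$).

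Given any diagonal unitary $D=\mathrm{diag}(e^{i\theta_1},\ldots,e^{i\theta_N})\in T$, the factors $D_j(\theta_j)$ commute pairwise (all are diagonal) and satisfy
\begin{equation*}
D=\prod_{j=1}^{N} D_j(\theta_j),
\end{equation*}
so $D\in \langle D_1(\R)\cup D_2(\R)\cup\cdots\cup D_N(\R)\rangle\subset \langle \mathcal{G}^{U}_{\mathrm{2lvl}}(n)\rangle$. Taking $\theta_1=\cdots=\theta_N=\theta$ yields the global phase $e^{i\theta}I_N$ as a special case, which also recovers the $T=U(1)\cdot T_0$ decomposition viewpoint. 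There is no real obstacle here: the entire content is the observation that the abelian $U(1)$ degrees of freedom, which were invisible to the $SU(2)$ generators of Lemma~\ref{lem:T0_generated_by_two_level_phases_concise}, are supplied for free the moment one enlarges the alphabet to two-level $U(2)$ blocks, since single-entry phase gates are automatically two-level unitaries.
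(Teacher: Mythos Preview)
Your proof is correct and follows essentially the same route as the paper: both arguments observe that the single-entry phase gates $D_j(\theta)=\exp(i\theta E_{jj})$ (the paper writes $\eta_j(t)$) lie in $U(N)[W_{1j}]\subset\mathcal{G}^{U}_{\mathrm{2lvl}}(n)$ and that these one-parameter subgroups generate the full diagonal torus $T$. Your opening detour through the $T=U(1)\cdot T_0$ splitting and Lemma~\ref{lem:T0_generated_by_two_level_phases_concise} is unnecessary (as you yourself note), since the direct single-entry argument already covers all of $T$; you can safely drop that framing.
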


\begin{proof}
For each $j$ and $t$ the diagonal unitary $\eta_j(t):=\diag(1,\dots,e^{it},\dots,1)$ is supported on $W_{1j}$ (for $j\ge 2$) as a
two-level $U(2)$ phase rotation, hence belongs to $\mathcal{G}^{U}_{\mathrm{2lvl}}(n)$. Since $T$ is generated by these one-parameter
subgroups, the result follows.
\end{proof}

\subsection{Coordinate two-level factorization with diagonal remainder}
\label{subsec:two_level_factorization}

\begin{theorem}[Unitary QR/Givens factorization into coordinate two-level gates]
\label{thm:two_level_factorization_concise}
Fix an orthonormal basis $\{e_1,\dots,e_N\}$ of $\C^N$. Then every $U\in U(N)$ admits a factorization
\begin{equation}\label{eq:U_two_level_factorization_concise}
U=\Big(\prod_{k=1}^{K} T_k\Big)\,D,
\qquad
T_k\in U(N)[W_{p_k,q_k}],\ \ D\in T,
\end{equation}
where each $W_{p_k,q_k}=\mathrm{span}\{e_{p_k},e_{q_k}\}$ is a coordinate two-plane and $K\le \frac{N(N-1)}{2}$.
Moreover, the factors can be chosen by standard Givens elimination to annihilate prescribed subdiagonal entries.
\end{theorem}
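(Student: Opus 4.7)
The plan is to perform a standard Givens elimination on $U$ from the left, zeroing out subdiagonal entries one by one via two-level unitaries supported on coordinate two-planes, and then to observe that the resulting upper-triangular unitary is automatically diagonal.

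Concretely, I would process the columns in order $j=1,2,\dots,N-1$ and, within each column, process the rows $i=j+1,\dots,N$ from top to bottom. At each step, having already produced zeros below the diagonal in columns $1,\dots,j-1$ and in the entries $(j+1,j),\dots,(i-1,j)$, let $a$ and $b$ denote the current $(j,j)$ and $(i,j)$ entries of the partially reduced matrix. Choose a $2\times 2$ unitary $V_{j,i}\in U(2)$ sending $(a,b)^{\top}$ to $(r,0)^{\top}$ with $r\ge 0$ (if $b=0$ take $V_{j,i}=I_2$; otherwise, normalize $(a,b)$ to a unit vector and complete to a unitary $2\times 2$ matrix). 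Lift $V_{j,i}$ to the two-level unitary $G_{j,i}\in U(N)[W_{j,i}]$ acting as $V_{j,i}$ on $W_{j,i}:=\mathrm{span}\{e_j,e_i\}$ and as the identity on $W_{j,i}^{\perp}$; this lies in $U(N)[W_{j,i}]$ by Definition~\ref{def:two_level_subgroup}. Left-multiplication by $G_{j,i}$ modifies only rows $j$ and $i$, so the zeros previously produced in columns $<j$ are preserved, and the $(i,j)$ entry is zeroed in column $j$.

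After all such eliminations, the accumulated product $G$ of at most $\sum_{j=1}^{N-1}(N-j)=N(N-1)/2$ factors satisfies $GU=R$ with $R$ upper triangular. Since $G$ and $U$ are unitary, so is $R$; and a unitary upper-triangular matrix is diagonal (by an easy induction on the columns: the $j$-th column has unit norm and vanishes below row $j$, and orthogonality with the previous columns forces its first $j-1$ entries to vanish as well). Hence $R=D\in T$, and inverting yields
\begin{equation*}
U=G^{-1}D=\Big(\prod_{k=1}^{K}T_k\Big)D,
\end{equation*}
where the $T_k$ are the inverses of the $G_{j,i}$ taken in reverse order. Each $T_k$ lies in some $U(N)[W_{p_k,q_k}]$ with $\{p_k,q_k\}\subset\{1,\dots,N\}$, and $K\le N(N-1)/2$, with strict inequality exactly when some intermediate entries happen to vanish already and the corresponding factor can be omitted (equivalently, replaced by $I_N\in U(N)[W_{p_k,q_k}]$).

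There is no substantive obstacle here — this is the classical Givens/QR scheme — but two small points warrant care. First, the inductive invariant that previously created zeros are preserved relies on the fact that $G_{j,i}$ touches only rows $j$ and $i$ and, within these, only the two active entries of the current column, together with potentially nonzero entries in later columns; zeros in columns $<j$ are therefore untouched. Second, one must use $U(N)[W_{j,i}]$ rather than $SU(N)[W_{j,i}]$: the Givens factor $V_{j,i}$ generally has nontrivial determinant, and the diagonal remainder $D$ is a generic element of $T$ rather than of $T_0=T\cap SU(N)$. Refinement to the $SU(N)$ setting will follow by combining this factorization with the diagonal-torus generation result of Lemma~\ref{lem:T0_generated_by_two_level_phases_concise}.
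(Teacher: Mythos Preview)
Your argument is correct and is exactly the approach the paper takes: the paper's proof simply invokes the standard unitary QR procedure via Givens rotations (successive left-multiplications by coordinate two-level unitaries until an upper-triangular, hence diagonal, unitary remains) and cites references, while you have spelled out the same scheme in detail with the appropriate inductive invariants. Your remarks on the preservation of previously created zeros and on the need for $U(N)[W_{j,i}]$ rather than $SU(N)[W_{j,i}]$ are accurate and consistent with how the paper subsequently handles the determinant bookkeeping in Theorem~\ref{thm:SU_two_level_universal_concise}.
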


\begin{proof}
This is the standard unitary QR procedure via Givens rotations: successive left-multiplication by coordinate two-level unitaries
eliminates subdiagonal entries, yielding an upper-triangular unitary matrix, hence a diagonal matrix $D\in T$.
See, e.g., \cite{GolubVanLoan2013,Higham2008,HornJohnson2013}.
\end{proof}

\subsection{Universality: phase-free and full}
\label{subsec:universality_phase_free_full}

\begin{theorem}[Phase-free universality of $\mathcal{G}^{SU}_{\mathrm{2lvl}}(n)$]
\label{thm:SU_two_level_universal_concise}
For $N=2^n$,
\[
\big\langle \mathcal{G}^{SU}_{\mathrm{2lvl}}(n)\big\rangle = SU(N).
\]
\end{theorem}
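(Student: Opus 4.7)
The plan is to combine the two-level QR/Givens factorization of Theorem~\ref{thm:two_level_factorization_concise} with the torus-generation Lemma~\ref{lem:T0_generated_by_two_level_phases_concise}. The inclusion $\langle \mathcal{G}^{SU}_{\mathrm{2lvl}}(n)\rangle\subset SU(N)$ is immediate from $\mathcal{G}^{SU}_{\mathrm{2lvl}}(n)\subset SU(N)$, so the only task is to realize an arbitrary $U\in SU(N)$ as a finite word in $\bigcup_W SU(N)[W]$.

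First I would apply Theorem~\ref{thm:two_level_factorization_concise} to $U$, producing $U=\bigl(\prod_k T_k\bigr)D$ with $T_k\in U(N)[W_{p_k,q_k}]$ and $D\in T$. The two-level factors are only $U(2)$-valued in general, so I would split off their determinant phases: for each $k$ let $\delta_k:=\det(T_k\vert_{W_{p_k,q_k}})\in U(1)$, and write $T_k=S_k\Delta_k$ where $\Delta_k\in U(N)[W_{p_k,q_k}]$ is the diagonal matrix carrying $\delta_k$ on $e_{p_k}$ (identity elsewhere) and $S_k\in SU(N)[W_{p_k,q_k}]\subset \mathcal{G}^{SU}_{\mathrm{2lvl}}(n)$. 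Sweep the $\Delta_k$ to the right by commuting them past subsequent $S_j$ via conjugation $S_j\mapsto \Delta_k S_j\Delta_k^{-1}$: since $\Delta_k$ is diagonal, this rescales only the off-diagonal entries of $S_j$ by phases and preserves both the two-level support $W_{p_j,q_j}$ and the determinant of the $2\times 2$ block, so the conjugated factor remains in $SU(N)[W_{p_j,q_j}]$. The outcome is a factorization $U=S_1'\cdots S_K'\,D'$ with each $S_k'\in \mathcal{G}^{SU}_{\mathrm{2lvl}}(n)$ and $D'\in T$ absorbing all diagonal contributions.

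Now taking determinants, $\det D'=\det U\cdot\prod_k\det(S_k')^{-1}=1$, so $D'\in T_0=T\cap SU(N)$. By Lemma~\ref{lem:T0_generated_by_two_level_phases_concise}, $D'$ is a product of one-parameter elements $\gamma_{1j}(t)\in SU(N)[W_{1j}]\subset \mathcal{G}^{SU}_{\mathrm{2lvl}}(n)$. Assembling the pieces exhibits $U$ as a finite product of elements of $\mathcal{G}^{SU}_{\mathrm{2lvl}}(n)$, completing the proof.

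The single delicate point is the $\Delta_k$-sweep and the verification that diagonal conjugation keeps two-level $SU(2)$ blocks inside $SU(N)[W_{p_j,q_j}]$; this is essentially a bookkeeping check, but it is the only place where the abelian and non-abelian content have to be disentangled. A shortcut that bypasses the sweep entirely is to observe that the standard complex Givens rotations $\bigl(\begin{smallmatrix} c & s\\ -\bar s & \bar c\end{smallmatrix}\bigr)$ used in the proof of Theorem~\ref{thm:two_level_factorization_concise} already have determinant one, so $T_k\in SU(N)[W_{p_k,q_k}]$ from the outset, after which only the final appeal to Lemma~\ref{lem:T0_generated_by_two_level_phases_concise} is needed. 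Either route yields the same conclusion and makes transparent how the $U(N)\to SU(N)$ phase accounting is absorbed into the diagonal torus.
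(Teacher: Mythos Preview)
Your proof is correct and follows essentially the same route as the paper: QR/Givens factorization, extraction of the determinant phase from each two-level factor, absorption into the trailing diagonal, and an appeal to Lemma~\ref{lem:T0_generated_by_two_level_phases_concise} for the resulting $T_0$ element. You are in fact more careful than the paper on the one delicate step---the sweep of the diagonal $\Delta_k$ past later factors via conjugation, together with the verification that diagonal conjugation preserves both the two-level support $W_{p_j,q_j}$ and the determinant-one condition on the $2\times 2$ block---which the paper compresses into the phrase ``absorbing $\prod_k\Delta_k$ into the trailing diagonal term''; your shortcut (that the standard complex Givens rotations already have determinant one) is also valid under the usual convention and would indeed bypass the sweep.
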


\begin{proof}
Let $U\in SU(N)$. By Theorem~\ref{thm:two_level_factorization_concise},
$U=(\prod_{k=1}^K T_k)D$ with $T_k\in U(N)[W_{p_k,q_k}]$ and $D\in T$. Since $\det U=1$, necessarily $D\in T_0$.

For each $k$, write $\det(T_k|_{W_{p_k,q_k}})=e^{i\vartheta_k}$ and set $s_k:=e^{-i\vartheta_k/2}$.
Then $T_k=(S_k\oplus I)\Delta_k$ where $S_k\oplus I\in SU(N)[W_{p_k,q_k}]\subset\mathcal{G}^{SU}_{\mathrm{2lvl}}(n)$ and $\Delta_k\in T$
is diagonal. Absorbing $\prod_k\Delta_k$ into the trailing diagonal term yields a new diagonal factor $D_0\in T$ with $\det(D_0)=1$,
hence $D_0\in T_0$. By Lemma~\ref{lem:T0_generated_by_two_level_phases_concise}, $D_0$ lies in the subgroup generated by two-level
$SU(2)$ phase rotations. Therefore $U$ is a product of elements of $\mathcal{G}^{SU}_{\mathrm{2lvl}}(n)$.
\end{proof}

\begin{theorem}[Full universality of $\mathcal{G}^{U}_{\mathrm{2lvl}}(n)$]
\label{thm:U_two_level_universal_concise}
For $N=2^n$,
\[
\big\langle \mathcal{G}^{U}_{\mathrm{2lvl}}(n)\big\rangle = U(N).
\]
\end{theorem}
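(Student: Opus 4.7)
The plan is to read off the result as a direct corollary of Theorem~\ref{thm:two_level_factorization_concise} (QR/Givens factorization) together with Lemma~\ref{lem:T_generated_by_U2_two_level_phases_concise} (the full diagonal torus $T$ is generated by $U(2)$ two-level phase rotations); no new ingredients will be needed. The inclusion $\langle \mathcal{G}^{U}_{\mathrm{2lvl}}(n)\rangle \subset U(N)$ is immediate from the definition~\eqref{eq:G2lvl_def_SU_U_concise}, so only the reverse direction requires argument.

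Given any $U \in U(N)$, I will first invoke Theorem~\ref{thm:two_level_factorization_concise} to write $U = \bigl(\prod_{k=1}^K T_k\bigr) D$ with coordinate two-level factors $T_k \in U(N)[W_{p_k,q_k}]$ and a diagonal remainder $D \in T$. Each $T_k$ belongs to $\mathcal{G}^{U}_{\mathrm{2lvl}}(n)$ directly from its defining form in \eqref{eq:G2lvl_def_SU_U_concise}, so every non-diagonal factor already lies in the generated subgroup. It then remains to absorb $D$, which is furnished by Lemma~\ref{lem:T_generated_by_U2_two_level_phases_concise}: since $T \subset \langle \mathcal{G}^{U}_{\mathrm{2lvl}}(n)\rangle$, the factor $D$ is itself a word in two-level $U(2)$ phase rotations supported on the coordinate planes $W_{1j}$. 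Concatenating the two factorizations expresses $U$ as a word in $\mathcal{G}^{U}_{\mathrm{2lvl}}(n)$.

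In contrast to Theorem~\ref{thm:SU_two_level_universal_concise}, no determinant normalization step will be needed: the $U(2)$ two-level dictionary already carries all phases required to realize the diagonal remainder, so the non-abelian and abelian pieces decouple cleanly with no residual global phase to compile. For this reason there is no genuine hard step in the proof; the real content has already been done in Section~\ref{subsec:diagonal_tori_two_level}, where the separation between two-level Givens elimination (non-abelian) and diagonal torus generation (abelian) was established. Equivalently, one may deduce the theorem from the phase-free case via the decomposition $U(N) = SU(N)\cdot T$: Theorem~\ref{thm:SU_two_level_universal_concise} together with $\mathcal{G}^{SU}_{\mathrm{2lvl}}(n)\subset \mathcal{G}^{U}_{\mathrm{2lvl}}(n)$ gives $SU(N)\subset \langle \mathcal{G}^{U}_{\mathrm{2lvl}}(n)\rangle$, while Lemma~\ref{lem:T_generated_by_U2_two_level_phases_concise} gives $T\subset \langle \mathcal{G}^{U}_{\mathrm{2lvl}}(n)\rangle$, and these two subgroups jointly fill out $U(N)$.
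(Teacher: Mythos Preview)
Your proposal is correct and follows exactly the paper's own argument: apply Theorem~\ref{thm:two_level_factorization_concise} to write $U=(\prod_k T_k)D$ with each $T_k\in\mathcal{G}^{U}_{\mathrm{2lvl}}(n)$, then invoke Lemma~\ref{lem:T_generated_by_U2_two_level_phases_concise} to absorb the diagonal remainder $D$. The additional observation that one could alternatively deduce the result from Theorem~\ref{thm:SU_two_level_universal_concise} via $U(N)=SU(N)\cdot T$ is a valid (and slightly less direct) variant, but the primary route you give matches the paper verbatim.
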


\begin{proof}
Let $U\in U(N)$. Theorem~\ref{thm:two_level_factorization_concise} gives $U=(\prod_{k=1}^K T_k)D$ with each
$T_k\in \mathcal{G}^{U}_{\mathrm{2lvl}}(n)$ and $D\in T$. By Lemma~\ref{lem:T_generated_by_U2_two_level_phases_concise},
$D\in\langle \mathcal{G}^{U}_{\mathrm{2lvl}}(n)\rangle$. Hence $U\in\langle \mathcal{G}^{U}_{\mathrm{2lvl}}(n)\rangle$.
\end{proof}

\begin{corollary}[Universality of the embedding-elementary dictionaries]
\label{cor:universality_elem_from_2lvl_concise}
Let $N=2^n$. Then
\[
\begin{aligned}
\big\langle \mathcal{G}^{SU}_{\mathrm{2lvl}}(n)\big\rangle &= SU(N),\\
\big\langle \mathcal{G}^{U}_{\mathrm{2lvl}}(n)\big\rangle &= U(N).
\end{aligned}
\]
\end{corollary}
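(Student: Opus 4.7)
The plan is to deduce both equalities from the two-level universality results (Theorems~\ref{thm:SU_two_level_universal_concise} and~\ref{thm:U_two_level_universal_concise}) by sandwiching each embedding-elementary dictionary between the corresponding two-level dictionary and its natural ambient group. The key containments are already in place: Proposition~\ref{prop:two_level_in_Gelem_SU2} gives $\mathcal{G}^{SU}_{\mathrm{2lvl}}(n)\subset\mathcal{G}^{SU}_{\mathrm{elem}}(n)$, and the same argument with framed $U(2)$-embeddings $\Phi_{W,f}(V):=fVf^{-1}\oplus I_{W^\perp}$ realizes $U(N)[W]$ as the image of an element of $\Emb(U(2),U(N))$, yielding $\mathcal{G}^{U}_{\mathrm{2lvl}}(n)\subset\mathcal{G}^{U}_{\mathrm{elem}}(n)$. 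Monotonicity of $\langle\cdot\rangle$ combined with the two-level universality theorems then gives the ``$\supset$'' halves, namely $SU(N)\subset\langle\mathcal{G}^{SU}_{\mathrm{elem}}(n)\rangle$ and $U(N)\subset\langle\mathcal{G}^{U}_{\mathrm{elem}}(n)\rangle$.

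For the reverse inclusions, the $U$-case is immediate: $\mathcal{G}^{U}_{\mathrm{elem}}(n)\subset U(N)$ by definition, so the generated subgroup is contained in $U(N)$. The $SU$-case requires only the observation that every $\phi\in\Emb(SU(2),U(N))$ actually maps into $SU(N)$. I would argue this at the Lie-algebra level: since $\mathfrak{su}(2)$ is simple, hence perfect, one has $\mathfrak{su}(2)=[\mathfrak{su}(2),\mathfrak{su}(2)]$, and therefore
\[
d\phi_e(\mathfrak{su}(2))
=[d\phi_e(\mathfrak{su}(2)),d\phi_e(\mathfrak{su}(2))]
\subset[\mathfrak{u}(N),\mathfrak{u}(N)]
\subset\mathfrak{su}(N),
\]
because every commutator of anti-Hermitian matrices is traceless. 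Since $SU(2)$ is compact and connected, $\exp:\mathfrak{su}(2)\to SU(2)$ is surjective, so
\[
\phi(SU(2))=\exp\!\bigl(d\phi_e(\mathfrak{su}(2))\bigr)\subset\exp(\mathfrak{su}(N))=SU(N).
\]
Taking the union over $\phi$ gives $\mathcal{G}^{SU}_{\mathrm{elem}}(n)\subset SU(N)$, and hence $\langle\mathcal{G}^{SU}_{\mathrm{elem}}(n)\rangle\subset SU(N)$.

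No substantive obstacle is expected here; the real content has already been absorbed by the two-level universality theorems and their constructive QR/Givens proofs. The only nonformal ingredient is the determinant-trivialization for $SU(2)$-embeddings, which is a one-line consequence of the perfectness of $\mathfrak{su}(2)$ together with surjectivity of the exponential on compact connected groups. Both equalities then follow by combining the two inclusions.
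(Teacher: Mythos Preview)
Your proposal is correct and follows essentially the same sandwiching argument as the paper: use the inclusions $\mathcal{G}^{(\cdot)}_{\mathrm{2lvl}}(n)\subset\mathcal{G}^{(\cdot)}_{\mathrm{elem}}(n)\subset (\cdot)(N)$ and invoke Theorems~\ref{thm:SU_two_level_universal_concise} and~\ref{thm:U_two_level_universal_concise}. You are in fact more careful than the paper, which simply asserts $\mathcal{G}^{SU}_{\mathrm{elem}}(n)\subset SU(N)$ without justification; your perfectness-of-$\mathfrak{su}(2)$ argument (equivalently, triviality of $\det\circ\phi$ as a homomorphism from a perfect group to an abelian one) cleanly fills that gap.
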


\begin{proof}
By $\mathcal{G}^{SU}_{\mathrm{2lvl}}(n)\subset \mathcal{G}^{SU}_{\mathrm{elem}}(n)\subset SU(N)$ and
$\mathcal{G}^{U}_{\mathrm{2lvl}}(n)\subset \mathcal{G}^{U}_{\mathrm{elem}}(n)\subset U(N)$, combine
Theorems~\ref{thm:SU_two_level_universal_concise} and~\ref{thm:U_two_level_universal_concise}.
\end{proof}

\subsection{Contrast: strict locality alone is not universal}
\label{subsec:local_not_universal}

Recall $\mathcal{G}_{\mathrm{loc}}(n)=\bigcup_{j=1}^n w_j^{(n)}(U(2))\subset U(N)$.

\begin{proposition}[Local one-qubit gates generate only tensor-product unitaries]
\label{prop:Gloc_generated_group_concise}
For $n\ge 1$,
\[
\big\langle \mathcal{G}_{\mathrm{loc}}(n)\big\rangle
=
\{\,U_1\otimes\cdots\otimes U_n:\ U_j\in U(2)\,\}
\ \cong\ U(2)^n.
\]
\end{proposition}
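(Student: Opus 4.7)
The plan is to establish the equality by a double inclusion and then exhibit the natural parametrization by $U(2)^n$. Write $\mathcal{T}:=\{U_1\otimes\cdots\otimes U_n:U_j\in U(2)\}\subset U(N)$ for the right-hand side.

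First I would check that $\mathcal{T}$ is a subgroup of $U(N)$. This is immediate from the tensor-product multiplication rule $(A_1\otimes\cdots\otimes A_n)(B_1\otimes\cdots\otimes B_n)=(A_1B_1)\otimes\cdots\otimes(A_nB_n)$, combined with $I_N=I_2^{\otimes n}$ and $(U_1\otimes\cdots\otimes U_n)^{-1}=U_1^{-1}\otimes\cdots\otimes U_n^{-1}$. Since each generator $w_j^{(n)}(V)=I_2^{\otimes(j-1)}\otimes V\otimes I_2^{\otimes(n-j)}$ is manifestly of this tensor-product form, we have $\mathcal{G}_{\mathrm{loc}}(n)\subset\mathcal{T}$, so $\langle\mathcal{G}_{\mathrm{loc}}(n)\rangle\subset\mathcal{T}$.

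For the reverse inclusion, the key observation is that $w_j^{(n)}(U_j)$ and $w_k^{(n)}(U_k)$ commute whenever $j\neq k$, because they act nontrivially on disjoint tensor factors. Using the multiplication rule above, a simple induction on $n$ gives
\[
U_1\otimes U_2\otimes\cdots\otimes U_n \;=\;\prod_{j=1}^n w_j^{(n)}(U_j),
\]
so every element of $\mathcal{T}$ is a product of $n$ generators from $\mathcal{G}_{\mathrm{loc}}(n)$. This yields $\mathcal{T}\subset\langle\mathcal{G}_{\mathrm{loc}}(n)\rangle$ and closes the set-theoretic equality.

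Finally, for the parametrization, the tensor-product map $\tau:U(2)^n\to U(N)$, $\tau(U_1,\ldots,U_n):=U_1\otimes\cdots\otimes U_n$, is a continuous group homomorphism with image $\mathcal{T}$. The only subtle point worth flagging is that $\tau$ is not strictly injective: its kernel consists of central tuples $(\lambda_1 I_2,\ldots,\lambda_n I_2)$ with $\lambda_j\in U(1)$ and $\prod_j\lambda_j=1$, a copy of $U(1)^{n-1}$. Accordingly, the symbol ``$\cong U(2)^n$'' is to be read via this natural surjection (equivalently, $\mathcal{T}\cong U(2)^n/\ker\tau$), with the central phase redundancy absorbed. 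There is no real obstacle: the argument is essentially formal, relying only on the tensor-product multiplication rule and the pairwise commutativity of disjoint-site factors; the bookkeeping around the central phases is the only item requiring any care.
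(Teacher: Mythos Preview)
Your argument is correct and follows exactly the same idea as the paper: the subgroups $w_j^{(n)}(U(2))$ act on disjoint tensor factors and therefore commute, so the subgroup they generate is the set of tensor-product unitaries. You have simply spelled out the double inclusion that the paper compresses into one sentence.

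Your final remark is in fact more careful than the paper. The map $\tau:U(2)^n\to\mathcal{T}$ has kernel $\{(\lambda_1 I_2,\ldots,\lambda_n I_2):\prod_j\lambda_j=1\}\cong U(1)^{n-1}$, so $\mathcal{T}$ has real dimension $4n-(n-1)=3n+1$, not $4n$; hence the paper's ``$\cong U(2)^n$'' should indeed be read as a natural surjective parametrization (or as an isomorphism with $U(2)^n/\ker\tau$), exactly as you note. This does not affect the substance of the proposition or its use in the subsequent non-universality corollary.
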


\begin{proof}
The subgroups $w_j^{(n)}(U(2))$ act on different tensor factors and therefore commute. The subgroup generated by their union is their
direct product, i.e.\ the set of tensor-product unitaries.
\end{proof}

\begin{corollary}[Non-universality of $\mathcal{G}_{\mathrm{loc}}(n)$ for $n\ge2$]
\label{cor:Gloc_not_universal_concise}
If $n\ge2$, then $\langle \mathcal{G}_{\mathrm{loc}}(n)\rangle\subsetneq U(2^n)$.
\end{corollary}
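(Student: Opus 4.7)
The plan is to combine Proposition~\ref{prop:Gloc_generated_group_concise}, which identifies $\langle \mathcal{G}_{\mathrm{loc}}(n)\rangle$ with the set of tensor-product unitaries $\{U_1\otimes\cdots\otimes U_n\}\cong U(2)^n$, with a strict inclusion $U(2)^n\subsetneq U(2^n)$ valid for $n\ge 2$. I would give two complementary arguments, since each highlights a different aspect of the inequality: a dimension count, which is essentially immediate, and an explicit non-tensorable witness gate, which is physically transparent.

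First I would record the dimension count. As a real Lie group, $\dim_{\R} U(2)^n = 4n$, whereas $\dim_{\R} U(2^n)=4^n=2^{2n}$. For $n\ge 2$ one has $4n<4^n$ (the boundary case being $8<16$), so the image of $U(2)^n$ under the tensor-product embedding is a closed embedded submanifold of $U(2^n)$ of strictly smaller real dimension and therefore cannot coincide with $U(2^n)$. This already gives the strict inclusion claimed by the corollary.

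Second, for a concrete witness I would exhibit any entangling gate and argue that it lies in $U(2^n)\setminus\langle\mathcal{G}_{\mathrm{loc}}(n)\rangle$. The natural choice is $U_0:=\mathrm{CNOT}\otimes I_2^{\otimes (n-2)}\in U(2^n)$ (for $n=2$ just $\mathrm{CNOT}$). Any tensor-product unitary $V_1\otimes\cdots\otimes V_n$ preserves the class of product states, hence maps Schmidt-rank-one vectors to Schmidt-rank-one vectors across every bipartition of the $n$ tensor factors. By contrast, $U_0$ sends the product state $|+\rangle|0\rangle\otimes|0\rangle^{\otimes(n-2)}$ to $\tfrac{1}{\sqrt2}(|00\rangle+|11\rangle)\otimes|0\rangle^{\otimes(n-2)}$, which has Schmidt rank two across the $\{1\}\,|\,\{2,\dots,n\}$ cut. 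Hence $U_0\notin\langle\mathcal{G}_{\mathrm{loc}}(n)\rangle$.

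There is essentially no obstacle here, the result being folklore; the only care needed is the brief Schmidt-rank observation underlying the second argument, which must be invoked at the level of a fixed bipartition in order to conclude from Proposition~\ref{prop:Gloc_generated_group_concise}. Either the dimension count or the entangling witness is self-contained and sufficient; presenting both emphasizes that the failure of universality is both quantitative (small-dimensional subgroup) and operational (inability to create entanglement), which motivates the embedding-elementary dictionaries introduced in the preceding sections.
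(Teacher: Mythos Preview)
Your proposal is correct. The second argument (the entangling witness via $\mathrm{CNOT}$ and Schmidt rank) is essentially the paper's own one-line proof, which just observes that $\langle\mathcal{G}_{\mathrm{loc}}(n)\rangle\cong U(2)^n$ contains no entangling unitaries and is therefore a proper subgroup for $n\ge 2$. Your first argument, the dimension count $\dim_\R U(2)^n=4n<4^n=\dim_\R U(2^n)$, is an additional route the paper does not take; it is arguably the cleanest way to get the strict inclusion and avoids any discussion of entanglement or Schmidt rank, whereas the entanglement argument has the advantage of being operationally interpretable and of exhibiting an explicit gate outside the subgroup. Both are self-contained, and presenting them together is a reasonable expository choice.
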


\begin{proof}
$\langle \mathcal{G}_{\mathrm{loc}}(n)\rangle\cong U(2)^n$ contains no entangling unitaries, hence is a proper subgroup of $U(2^n)$
for $n\ge2$.
\end{proof}

\begin{remark}[Interface with discretization]
\label{rem:universality_to_discretization_concise}
The factorization \eqref{eq:U_two_level_factorization_concise} reduces synthesis in $U(N)$ to the synthesis of a sequence of $U(2)$
blocks supported on two-planes. Fixing a finite universal set $G_2\subset SU(2)$, each $SU(2)$ block can be approximated by words over
$G_2$ (e.g.\ Solovay--Kitaev), and the resulting words lift to $U(N)$ through the corresponding two-level embeddings. This is the
starting point for the finite-alphabet compilation framework of Section~\ref{sec:SK}.
\end{remark}

\subsection{Standard controlled and permutation gates as embedded $U(2)$ elements}
\label{subsec:standard_gates_as_embeddings}

\noindent
As a consistency check, common circuit primitives arise transparently as embedded $U(2)$ elements: they act as the identity on a large
orthogonal subspace and apply a $2\times2$ block on a distinguished two-plane selected by a classical predicate on basis states. This is
exactly the structure of a faithful unitary representation of $U(2)$ on $\C^N$, hence an element of $\Emb(U(2),U(N))$
(cf.\ Proposition~\ref{prop:embeddings_as_faithful_representations_G}).

\section{Discretization and Compilation: Two-Level Factorizations and the Solovay--Kitaev Paradigm}
\label{sec:SK}

\noindent
The embedding-elementary dictionary $\mathcal{G}_{\mathrm{elem}}(n)$ is a \emph{continuous} generating family in $U(N)$.
To obtain an implementable gate model, one must pass to a \emph{finite} alphabet and provide a compilation procedure with explicit
accuracy--length guarantees. We record a modular route consistent with the Grassmannian two-level viewpoint:
\begin{enumerate}
\item[(1)] (\emph{Exact reduction to two-level factors}) every $U\in U(N)$ admits an \emph{exact} factorization into coordinate two-level
unitaries via unitary QR/Givens eliminations;
\item[(2)] (\emph{Local synthesis problem}) approximating each two-level factor reduces compilation in $U(N)$ to repeated approximation
problems in $U(2)$ (or, after phase normalization, in $SU(2)$);
\item[(3)] (\emph{Finite alphabet on $SU(2)$}) the Solovay--Kitaev theorem yields polylogarithmic-length approximation of $SU(2)$ elements
by words over a fixed finite universal set;
\item[(4)] (\emph{Lift and control the global error}) lifting these words through the corresponding two-level embeddings $\phi_{p,q}$
produces an $\varepsilon$-approximation in $U(N)$ with explicit operator-norm error control, while global phase and diagonal remainders
are handled separately as needed.
\end{enumerate}

\subsection{From a finite $SU(2)$ alphabet to a finite elementary alphabet in $U(N)$}
\label{subsec:finite_alphabet}

Let $G_2\subset SU(2)$ be a finite set. We say that $G_2$ is \emph{universal for $SU(2)$} if $\langle G_2\rangle$ is dense in $SU(2)$.
Given such a set, we define the induced finite elementary alphabet in $U(N)$ by
\begin{equation}\label{eq:finite_embedded_alphabet}
\mathcal{A}_{\mathrm{elem}}(n;G_2)
\;:=\;
\bigcup_{1\le p<q\le N}\phi_{p,q}(G_2)\ \subset\ U(N).
\end{equation}
Thus $\mathcal{A}_{\mathrm{elem}}(n;G_2)$ consists of \emph{discrete two-level gates} acting by one of finitely many $SU(2)$ primitives
on a chosen coordinate two-plane and trivially on its orthogonal complement.

\begin{remark}[Determinants and diagonal phases]\label{rem:det_bookkeeping}
Each $V\in U(2)$ admits a factorization $V=e^{i\theta}\widetilde V$ with $\widetilde V\in SU(2)$ (take $e^{i\theta}=\sqrt{\det V}$).
In the exact synthesis \eqref{eq:U_two_level_factorization_concise}, the scalar phases $e^{i\theta_k}$ may be absorbed into the trailing diagonal
unitary $D$. Accordingly, the approximation results below are stated for $SU(2)$ factors and extend to $U(2)$ with this bookkeeping.
\end{remark}

\subsection{Solovay--Kitaev approximation on $SU(2)$}
\label{subsec:SK_statement}

We write $\|\cdot\|$ for the operator norm (spectral norm) induced by the Euclidean norm on $\C^m$.

\begin{theorem}[Solovay--Kitaev theorem for $SU(2)$]\label{thm:SK}
Let $G_2\subset SU(2)$ be a finite set such that the subgroup $\langle G_2\rangle$ is dense in $SU(2)$.
Then there exist constants $c,C>0$, depending only on $G_2$, and a deterministic classical algorithm with the following property:
for every $V\in SU(2)$ and every $\varepsilon\in(0,1)$ the algorithm produces a word $w=w(V,\varepsilon)$ over
$G_2\cup G_2^{-1}$ such that
\[
\|V-w\|\le \varepsilon,
\qquad
|w|\le C\,\log^{c}\!\Big(\frac{1}{\varepsilon}\Big).
\]
Moreover, the running time is polynomial in $\log(1/\varepsilon)$.
\end{theorem}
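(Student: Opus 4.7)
The plan is to follow the classical Solovay--Kitaev recursion, which combines an initial coarse net on $SU(2)$ with a Lie-algebraic ``group commutator'' refinement step, then analyse the resulting doubly-exponential convergence against the geometric growth of word length. First, since $\langle G_2\rangle$ is dense in the compact group $SU(2)$, a standard net argument produces a base accuracy $\varepsilon_0>0$ and an integer $\ell_0$ such that every $V\in SU(2)$ is approximated to operator-norm error at most $\varepsilon_0$ by some word over $G_2\cup G_2^{-1}$ of length at most $\ell_0$. Enumerating once all words of length $\le\ell_0$ gives a finite lookup table handled in constant time per query; this serves as the base case $\mathrm{Approx}(V,0)$.

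Second, and this is the heart of the argument, I would establish a \emph{balanced commutator decomposition} for $SU(2)$: there exist constants $\varepsilon_*,C_*>0$ such that any $\Delta\in SU(2)$ with $\|\Delta-I\|\le\varepsilon\le\varepsilon_*$ can be written as $\Delta=VWV^{-1}W^{-1}$ with $\|V-I\|,\|W-I\|\le C_*\sqrt{\varepsilon}$, and $V,W$ are computable from $\Delta$ in constant time. The construction uses the Lie algebra: write $\Delta=\exp(A)$ with $\|A\|=O(\varepsilon)$, choose an orthonormal pair $B,C\in\mathfrak{su}(2)$ whose bracket $[B,C]$ is parallel to $A$, scale $B,C$ so $\|B\|,\|C\|=O(\sqrt{\varepsilon})$, and set $V=\exp(B)$, $W=\exp(C)$. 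Baker--Campbell--Hausdorff gives $VWV^{-1}W^{-1}=\exp([B,C]+O(\varepsilon^{3/2}))$, and the residual is removed by a contractive fixed-point correction; this works precisely because the Lie bracket on $\mathfrak{su}(2)$ is nondegenerate on the relevant two-plane, so the quadratic map $(B,C)\mapsto[B,C]$ is locally surjective with a uniform Lipschitz inverse.

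Third, I would build $\mathrm{Approx}(V,k)$ recursively: set $U_{k-1}:=\mathrm{Approx}(V,k-1)$, apply the commutator lemma to $\Delta:=VU_{k-1}^{-1}$ to obtain $(V_{k-1},W_{k-1})$, recursively compute $\tilde V:=\mathrm{Approx}(V_{k-1},k-1)$ and $\tilde W:=\mathrm{Approx}(W_{k-1},k-1)$, and output $\tilde V\tilde W\tilde V^{-1}\tilde W^{-1}U_{k-1}$. Using the telescoping bound $\|XY-\tilde X\tilde Y\|\le\|X-\tilde X\|+\|Y-\tilde Y\|$ for unitaries together with the commutator-of-commutator estimate $\|[\tilde V,\tilde W]-[V_{k-1},W_{k-1}]\|\le O(\sqrt{\varepsilon_{k-1}}\cdot\varepsilon_{k-1})$, one obtains the recursion $\varepsilon_k\le K\varepsilon_{k-1}^{3/2}$, provided $\varepsilon_0$ was chosen small enough so that $K\varepsilon_0^{1/2}<1$. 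The word length obeys $\ell_k\le 5\ell_{k-1}$ since each level uses five sub-words.

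Fourth, solving the recursions is routine. From $\varepsilon_k\le(K\varepsilon_0)^{(3/2)^k}/K$ one sees that $k=O(\log\log(1/\varepsilon))$ suffices to reach any prescribed $\varepsilon$, whence $\ell_k\le\ell_0\cdot 5^k=O\bigl(\log^{c}(1/\varepsilon)\bigr)$ with $c=\log 5/\log(3/2)\approx 3.97$, giving the claimed polylogarithmic bound. The running time of the algorithm is bounded by the size of the recursion tree, namely $O(5^k)=O(\log^{c}(1/\varepsilon))$ base-case lookups and constant-time commutator factorizations, hence polynomial in $\log(1/\varepsilon)$. The main obstacle is the balanced commutator lemma in step two: one must verify that the Lie bracket in $\mathfrak{su}(2)$ inverts with uniform constants near the identity, and that the $O(\varepsilon^{3/2})$ BCH correction can be absorbed without degrading the $\sqrt{\varepsilon}$ bound on $\|V-I\|,\|W-I\|$; everything else is bookkeeping. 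Classical references for the explicit constants are \cite{Kitaev2002,DawsonNielsen2006}.
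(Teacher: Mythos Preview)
The paper does not prove Theorem~\ref{thm:SK}; it is stated as a known result and the reader is referred to \cite{Kitaev2002,DawsonNielsen2006} for constructive proofs and explicit exponents (see the Remark immediately following the statement). So there is no ``paper's own proof'' to compare against.

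Your sketch is the standard Dawson--Nielsen exposition of the Solovay--Kitaev recursion and is essentially correct as an outline: base net from density and compactness, balanced group-commutator lemma via BCH on $\mathfrak{su}(2)$, the five-branch recursion $\ell_k\le 5\ell_{k-1}$ with error recursion $\varepsilon_k\le K\varepsilon_{k-1}^{3/2}$, and the resulting exponent $c=\log 5/\log(3/2)$. The one place where a reader might push back is step two: the commutator lemma as you state it is slightly stronger than what is actually needed (and than what Dawson--Nielsen prove). One does not need $\Delta=[V,W]$ exactly; it suffices to find $V,W$ with $\|V-I\|,\|W-I\|\le C_*\sqrt{\varepsilon}$ and $\|[V,W]-\Delta\|=O(\varepsilon^{3/2})$, since the residual is absorbed into the level-$k$ error. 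Your ``contractive fixed-point correction'' to upgrade this to an exact equality is not wrong, but it is extra work that the recursion does not require, and writing it cleanly with uniform constants is the only genuinely delicate point. If you keep that step, you should be explicit that the map $(B,C)\mapsto\log[e^B,e^C]$ is a smooth perturbation of the bilinear bracket near $0$ and invoke the inverse function theorem with a quantitative radius; otherwise just drop the exactness claim and carry the $O(\varepsilon^{3/2})$ residual through the error analysis as in \cite{DawsonNielsen2006}.
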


\begin{remark}
In what follows we only use the existence of a polylogarithmic length bound in $\varepsilon^{-1}$ and the fact that the word can be
found effectively. See \cite{Kitaev2002,DawsonNielsen2006} for constructive versions and explicit exponents.
\end{remark}

While we only use Solovay--Kitaev as a modular finite-alphabet interface for approximating $SU(2)$ primitives,
it is worth noting that the asymptotic word-length exponent in the general Solovay--Kitaev theorem has been
substantially improved very recently: Kuperberg breaks the classical ``cubic'' barrier and obtains word length
$\ell = O(n^{\alpha+\delta})$ with $\alpha> \log_{\varphi}(2)\approx 1.44042$ (for inverse-closed finite gate sets
densely generating $SU(d)$, and more generally connected semisimple Lie groups), together with a near-linear
bound for the corresponding compressed-word length; see~\cite{Kuperberg2025BreakingCubicSK}.
These refinements can be plugged into our pipeline verbatim whenever one wants sharper asymptotic gate counts
for the $SU(2)$ approximation subroutine.

\subsection{Lifting Solovay--Kitaev words to $U(N)$ and global error control}
\label{subsec:lifting_and_error}

The following elementary lemmas isolate the analytic inputs needed to propagate local approximation errors
to a global operator-norm bound. Throughout, $\|\cdot\|$ denotes the operator norm on $\mathbb{C}^N$; we repeatedly use its unitary
invariance $\|UXV\|=\|X\|$ for $U,V\in U(N)$ and its submultiplicativity $\|XY\|\le \|X\|\,\|Y\|$ (see, e.g., \cite{HornJohnson2013}).

\begin{lemma}[Operator-norm isometry for coordinate two-level embeddings]
\label{lem:two_level_isometry}
Let $1\le p<q\le N$ and let $\phi_{p,q}:U(2)\hookrightarrow U(N)$ be the coordinate two-level embedding supported on
$W_{p,q}=\mathrm{span}\{e_p,e_q\}$. Then for all $V,W\in U(2)$,
\[
\big\|\phi_{p,q}(V)-\phi_{p,q}(W)\big\|=\|V-W\|.
\]
In particular, $\phi_{p,q}$ is an isometric embedding with respect to the operator norm.
\end{lemma}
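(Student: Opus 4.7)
The plan is to exploit the block-diagonal structure of $\phi_{p,q}$ relative to the orthogonal decomposition $\mathbb{C}^N = W_{p,q} \oplus W_{p,q}^\perp$. Writing any element in this splitting, the embedding $\phi_{p,q}(V)$ acts as $V$ (under the canonical identification $W_{p,q}\cong\mathbb{C}^2$ via the frame $\{e_p,e_q\}$) on the first summand and as the identity on the second. First, I would note that the embedding is linear in $V$ on the affine sense that $\phi_{p,q}(V) - \phi_{p,q}(W) = (V-W)\oplus 0_{W_{p,q}^\perp}$ as an operator on $\mathbb{C}^N$, because the identity parts on $W_{p,q}^\perp$ cancel.

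Next, I would invoke the standard fact that for any block-diagonal operator $A\oplus B$ on an orthogonal direct sum $H_1\oplus H_2$, the operator norm satisfies $\|A\oplus B\| = \max(\|A\|,\|B\|)$. Applying this to $A = V-W$ (viewed on $W_{p,q}\cong\mathbb{C}^2$) and $B = 0$ on $W_{p,q}^\perp$, one immediately obtains $\|\phi_{p,q}(V)-\phi_{p,q}(W)\| = \max(\|V-W\|,0) = \|V-W\|$. The unitary isomorphism $W_{p,q}\cong\mathbb{C}^2$ given by the coordinate frame is an isometry, so the restricted operator norm agrees with the operator norm of $V-W$ on $\mathbb{C}^2$.

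There is essentially no obstacle here: the result is a direct consequence of the two-level structure and the block-diagonal formula for the operator norm. The only point worth being careful about is that the identification between $V-W$ acting on $\mathbb{C}^2$ and $f(V-W)f^{-1}$ acting on $W_{p,q}$ is a unitary conjugation (by the frame $f:\mathbb{C}^2\to W_{p,q}$ with $f(e_1)=e_p$, $f(e_2)=e_q$), and unitary invariance of the operator norm ensures the two norms coincide. This justifies both equalities and the conclusion that $\phi_{p,q}$ is an isometric embedding in operator norm.
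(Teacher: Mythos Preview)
Your proof is correct and follows essentially the same route as the paper: both arguments reduce $\phi_{p,q}(V)-\phi_{p,q}(W)$ to a block-diagonal operator $\mathrm{diag}(V-W,0)$ (the paper via an explicit conjugating unitary $Q$, you directly via the orthogonal splitting), then invoke the standard fact that the operator norm of a block-diagonal operator is the maximum of the block norms, together with unitary invariance.
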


\begin{proof}
By construction, there exists a unitary $Q\in U(N)$ such that, with respect to the orthogonal decomposition
$\mathbb{C}^N=W_{p,q}\oplus W_{p,q}^\perp$, one has
\[
\phi_{p,q}(V)=Q^{-1}\,\mathrm{diag}(V,I_{N-2})\,Q,
\qquad
\phi_{p,q}(W)=Q^{-1}\,\mathrm{diag}(W,I_{N-2})\,Q.
\]
Hence, by unitary invariance of the operator norm,
\[
\|\phi_{p,q}(V)-\phi_{p,q}(W)\|
=
\big\|\,\mathrm{diag}(V-W,0)\,\big\|.
\]
For a block-diagonal operator the operator norm equals the maximum of the norms of the diagonal blocks, so
$\big\|\,\mathrm{diag}(V-W,0)\,\big\|=\|V-W\|$.
\end{proof}

\begin{lemma}[Telescoping bound for products]
\label{lem:error_accumulation}
Let $A_1,\dots,A_K,B_1,\dots,B_K\in U(N)$. Then
\[
\Big\|\prod_{k=1}^K A_k-\prod_{k=1}^K B_k\Big\|
\ \le\
\sum_{k=1}^K \|A_k-B_k\|.
\]
\end{lemma}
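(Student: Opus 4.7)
The plan is to prove the inequality by induction on $K$, using the telescoping identity
\[
\prod_{k=1}^K A_k - \prod_{k=1}^K B_k
\ =\ (A_1-B_1)\prod_{k=2}^K A_k \ +\ B_1\Big(\prod_{k=2}^K A_k - \prod_{k=2}^K B_k\Big),
\]
which is the canonical ``add and subtract $B_1 \prod_{k\ge 2} A_k$'' trick. The base case $K=1$ is trivial. For the inductive step, apply the triangle inequality to the right-hand side, then submultiplicativity of the operator norm to bound $\|(A_1-B_1)\prod_{k\ge 2}A_k\|\le \|A_1-B_1\|\cdot\|\prod_{k\ge 2}A_k\|$ and $\|B_1(\prod_{k\ge 2}A_k-\prod_{k\ge 2}B_k)\|\le \|B_1\|\cdot\|\prod_{k\ge 2}A_k-\prod_{k\ge 2}B_k\|$.

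The crucial observation is that every $A_k,B_k\in U(N)$ satisfies $\|A_k\|=\|B_k\|=1$, so in particular $\|B_1\|=1$ and $\|\prod_{k\ge 2}A_k\|\le \prod_{k\ge 2}\|A_k\|=1$. This eliminates the multiplicative prefactors that would otherwise arise from submultiplicativity, leaving exactly $\|A_1-B_1\|+\|\prod_{k\ge 2}A_k-\prod_{k\ge 2}B_k\|$, to which the induction hypothesis applies and yields $\sum_{k=1}^K\|A_k-B_k\|$.

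There is no real obstacle: the proof is a one-line induction once the telescoping identity is written down, and the only nontrivial input is the unitarity of the factors, which ensures the prefactors in the two submultiplicativity steps both equal $1$. A fully non-inductive variant would write the full telescoping sum
\[
\prod_{k=1}^K A_k - \prod_{k=1}^K B_k
\ =\ \sum_{k=1}^K \Big(\prod_{j<k} B_j\Big)(A_k-B_k)\Big(\prod_{j>k} A_j\Big),
\]
and then bound each summand by $\|A_k-B_k\|$ using unitary invariance together with $\|B_j\|=\|A_j\|=1$; this gives the same bound in one step and makes the role of unitarity fully transparent.
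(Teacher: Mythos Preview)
Your proposal is correct and essentially matches the paper's proof: the paper writes the full telescoping sum
\[
\prod_{k=1}^K A_k-\prod_{k=1}^K B_k
=\sum_{j=1}^K\Big(\prod_{k<j}B_k\Big)(A_j-B_j)\Big(\prod_{k>j}A_k\Big)
\]
and bounds each term by $\|A_j-B_j\|$ using that unitary partial products have norm~$1$, which is exactly your ``non-inductive variant''. Your inductive formulation is the same argument unrolled one step at a time.
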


\begin{proof}
Using the telescoping identity
\[
\prod_{k=1}^K A_k-\prod_{k=1}^K B_k
=
\sum_{j=1}^K
\Big(\prod_{k=1}^{j-1} B_k\Big)\,(A_j-B_j)\,\Big(\prod_{k=j+1}^{K} A_k\Big),
\]
and taking norms yields, by the triangle inequality and submultiplicativity,
\[
\Big\|\prod_{k=1}^K A_k-\prod_{k=1}^K B_k\Big\|
\le
\sum_{j=1}^K
\Big\|\prod_{k=1}^{j-1} B_k\Big\|\,\|A_j-B_j\|\,\Big\|\prod_{k=j+1}^{K} A_k\Big\|.
\]
If all factors are unitary then each partial product is unitary and hence has norm $1$, giving the claim.
\end{proof}

We now combine the exact two-level factorization (Theorem~\ref{thm:two_level_factorization_concise}) with Solovay--Kitaev approximation
(Theorem~\ref{thm:SK}).

\begin{theorem}[Compilation via two-level factorization and Solovay--Kitaev]
\label{thm:UN_compilation_SK}
Fix $N=2^n$ and let $G_2\subset SU(2)$ be a finite universal set.
Let $U\in U(N)$ and $\varepsilon\in(0,1)$. Then there exist:
\begin{itemize}
\item a word $W(U,\varepsilon)$ over the finite alphabet
\[
\mathcal{A}_{\mathrm{elem}}(n;G_2)\ \cup\ \mathcal{A}_{\mathrm{elem}}(n;G_2)^{-1},
\]
\item and a diagonal unitary $D\in U(N)$,
\end{itemize}
such that
\begin{equation}\label{eq:global_approx}
\big\|U - W(U,\varepsilon)\,D\big\|\ \le\ \varepsilon.
\end{equation}
Moreover, one may choose $W(U,\varepsilon)$ with length bounded by
\begin{equation}\label{eq:length_bound_global}
|W(U,\varepsilon)|\ \le\ K\cdot C\,\log^{\,c}\!\Big(\frac{K}{\varepsilon}\Big),
\qquad
K\le \frac{N(N-1)}{2},
\end{equation}
where $c,C$ are the Solovay--Kitaev constants from Theorem~\ref{thm:SK}.
If one wishes to eliminate the residual diagonal factor, it can be compiled (up to global phase) using two-level phase rotations as in
Section~\ref{subsec:compile_diagonal_optional}.
\end{theorem}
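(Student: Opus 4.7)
The plan is to concatenate three inputs the paper has already provided: the exact Givens/QR factorization of Theorem~\ref{thm:two_level_factorization_concise}, which reduces global synthesis in $U(N)$ to a bounded product of two-level factors plus one diagonal remainder; Theorem~\ref{thm:SK} for Solovay--Kitaev approximation of each local $SU(2)$ block; and the two analytic estimates Lemmas~\ref{lem:two_level_isometry} and~\ref{lem:error_accumulation}, which propagate each local error faithfully into $U(N)$ under coordinate two-level embeddings and sum them linearly across the product. The whole argument is a one-shot approximation scheme with local error budget $\delta := \varepsilon/K$.

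First, I would apply Theorem~\ref{thm:two_level_factorization_concise} to obtain $U = T_1 T_2 \cdots T_K \cdot D_0$ with $T_k \in U(N)[W_{p_k,q_k}]$, $D_0$ diagonal, and $K \le N(N-1)/2$. Per the bookkeeping of Remark~\ref{rem:det_bookkeeping}, I arrange each $T_k$ to lie in $SU(N)[W_{p_k,q_k}]$: running the Givens sweep with $SU(2)$-valued rotations at each step is always possible (a $2\times 2$ special unitary suffices to annihilate any prescribed subdiagonal entry while producing a real positive pivot), so that the full determinantal phase collects entirely into $D_0$. This writes each factor as $T_k = \phi_{p_k,q_k}(V_k)$ with $V_k \in SU(2)$.

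Next, setting $\delta := \varepsilon/K$, I apply Theorem~\ref{thm:SK} independently to each $V_k$, obtaining a word $w_k$ over $G_2 \cup G_2^{-1}$ with $\|V_k - w_k\| \le \delta$ and $|w_k| \le C \log^{c}(1/\delta) = C \log^{c}(K/\varepsilon)$. Lifting through the coordinate embedding and invoking the isometry of Lemma~\ref{lem:two_level_isometry} gives $\|T_k - \phi_{p_k,q_k}(w_k)\| = \|V_k - w_k\| \le \delta$. Setting $W(U,\varepsilon) := \phi_{p_1,q_1}(w_1)\cdots \phi_{p_K,q_K}(w_K)$, which is a word over $\mathcal{A}_{\mathrm{elem}}(n;G_2) \cup \mathcal{A}_{\mathrm{elem}}(n;G_2)^{-1}$ of total length $\sum_k |w_k| \le K \cdot C \log^{c}(K/\varepsilon)$, the telescoping bound of Lemma~\ref{lem:error_accumulation} gives $\|T_1\cdots T_K - W(U,\varepsilon)\| \le K\delta = \varepsilon$. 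Setting $D := D_0$ and using unitary invariance of the operator norm for the right factor, I conclude $\|U - W(U,\varepsilon)\,D\| = \|T_1\cdots T_K - W(U,\varepsilon)\| \le \varepsilon$, which is exactly~\eqref{eq:global_approx} together with the length bound~\eqref{eq:length_bound_global}.

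The main delicate point is the phase management in the factorization step: a naive application of standard $U(2)$-valued Givens elimination produces pivots $T_k$ that each carry a scalar phase $e^{i\theta_k}$, and these phases do not commute past the subsequent off-diagonal factors, so they cannot simply be swept into a single trailing diagonal. Moreover, any attempt to approximate a $U(2)$-valued block directly by an $SU(2)$-valued word incurs (via Lemma~\ref{lem:two_level_isometry}) an irreducible error of order $|e^{i\theta_k/2} - 1|$, which can be $\Theta(1)$. The correct remedy, and essentially the content of Remark~\ref{rem:det_bookkeeping} together with the explicit $SU(2)$-Givens argument above, is to force each pivot into $SU(2)$ at the source, so that the entire phase load collects into $D_0$; once this is done, Solovay--Kitaev, the embedding isometry, and the telescoping bound combine mechanically to yield the claimed global accuracy and word length.
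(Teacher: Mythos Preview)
Your proof is correct and follows the same skeleton as the paper's: exact two-level factorization, local Solovay--Kitaev with budget $\delta=\varepsilon/K$, lift via Lemma~\ref{lem:two_level_isometry}, and telescope via Lemma~\ref{lem:error_accumulation}. The one difference is the phase-normalization step. The paper factors each $V_k\in U(2)$ as $e^{i\theta_k}\widetilde V_k$ and ``absorbs the phases into $D$''; you instead run the Givens sweep with $SU(2)$-valued rotations from the outset (which is indeed always possible, as you note), so no absorption is needed. Your final paragraph slightly overstates the difficulty of the paper's route, though: while the embedded phase $\phi_{p_k,q_k}(e^{i\theta_k}I_2)$ is diagonal but not central in $U(N)$ and so does not literally commute with later off-diagonal factors, it \emph{can} be swept to the right, because conjugation by any diagonal unitary preserves both the two-level support and the determinant-one condition on it, i.e.\ $\Delta S=(\Delta S\Delta^{-1})\Delta$ with $\Delta S\Delta^{-1}\in SU(N)[W_{p,q}]$ whenever $S\in SU(N)[W_{p,q}]$ and $\Delta$ is diagonal. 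Iterating this commutation is the mechanism implicit in the paper's ``absorb'' step (and in the analogous step in the proof of Theorem~\ref{thm:SU_two_level_universal_concise}). Your route sidesteps this bookkeeping entirely, which is a mild but genuine simplification.
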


\begin{proof}
By Theorem~\ref{thm:two_level_factorization_concise} there exist indices $(p_k,q_k)$, factors $V_k\in U(2)$, and a diagonal unitary $D\in U(N)$
such that
\[
U=\Big(\prod_{k=1}^K \phi_{p_k,q_k}(V_k)\Big)\,D,
\qquad
K\le \frac{N(N-1)}{2}.
\]
Using Remark~\ref{rem:det_bookkeeping}, write $V_k=e^{i\theta_k}\widetilde V_k$ with $\widetilde V_k\in SU(2)$ and absorb the phases
into $D$, so that we may assume $V_k\in SU(2)$ without loss of generality.

Set $\delta:=\varepsilon/K$. For each $k$, Theorem~\ref{thm:SK} yields a word $w_k$ over $G_2\cup G_2^{-1}$ such that
\[
\|V_k-w_k\|\le \delta,
\qquad
|w_k|\le C\,\log^{c}\!\Big(\frac{1}{\delta}\Big)=C\,\log^{c}\!\Big(\frac{K}{\varepsilon}\Big).
\]
Define the lifted word in $U(N)$ by
\[
W(U,\varepsilon):=\prod_{k=1}^K \phi_{p_k,q_k}(w_k),
\]
where $\phi_{p_k,q_k}(w_k)$ denotes the word obtained by embedding each letter of $w_k$ through $\phi_{p_k,q_k}$.
By Lemma~\ref{lem:two_level_isometry},
\[
\big\|\phi_{p_k,q_k}(V_k)-\phi_{p_k,q_k}(w_k)\big\|
=
\|V_k-w_k\|
\le \delta.
\]
Applying Lemma~\ref{lem:error_accumulation} gives
\[
\Big\|\prod_{k=1}^K\phi_{p_k,q_k}(V_k)-\prod_{k=1}^K\phi_{p_k,q_k}(w_k)\Big\|
\le
\sum_{k=1}^K \delta
=
\varepsilon.
\]
Right-multiplication by $D$ preserves the operator norm (unitary invariance), hence \eqref{eq:global_approx} follows.
Finally,
\[
|W(U,\varepsilon)|
=\sum_{k=1}^K |w_k|
\le
K\cdot C\,\log^{c}\!\Big(\frac{K}{\varepsilon}\Big),
\]
which is \eqref{eq:length_bound_global}.
\end{proof}

\begin{remark}[Worst-case scaling in $n$]\label{rem:worst_case_scaling}
Since $K\le N(N-1)/2=\Theta(N^2)=\Theta(4^n)$, the bound \eqref{eq:length_bound_global} yields worst-case word lengths that scale
exponentially in $n$ for generic targets in $U(2^n)$. This reflects the parameter complexity of $U(2^n)$ and should be interpreted
as providing a representation-invariant and modular discretization mechanism rather than an efficiency guarantee for arbitrary
instances.
\end{remark}

\subsection{Compiling the diagonal factor}
\label{subsec:compile_diagonal_optional}

In the QR/Givens synthesis of Theorem~\ref{thm:UN_compilation_SK} the target unitary $U\in U(N)$ is produced up to a trailing diagonal
unitary $D$. If one wishes to obtain a ``pure word'' over embedded $SU(2)$ generators (up to an unavoidable global phase), then $D$ can
itself be synthesized by two-level phase rotations supported on coordinate planes. We record a convenient formulation.

\begin{corollary}[Pure-word compilation up to global phase]
\label{cor:pure_word_compilation}
Under the assumptions of Theorem~\ref{thm:UN_compilation_SK}, one may absorb the diagonal factor into the word up to a global phase:
there exist $\theta\in\mathbb{R}$ and a word $\widetilde W(U,\varepsilon)$ over
$\mathcal{A}_{\mathrm{elem}}(n;G_2)\cup\mathcal{A}_{\mathrm{elem}}(n;G_2)^{-1}$ such that
\[
\big\|U-e^{i\theta}\widetilde W(U,\varepsilon)\big\|\le \varepsilon.
\]
\end{corollary}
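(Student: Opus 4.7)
The plan is to reduce the corollary to Theorem~\ref{thm:UN_compilation_SK} by further compiling the trailing diagonal factor into its own lifted Solovay--Kitaev word, at the unavoidable cost of a residual scalar phase. I would split the target accuracy $\varepsilon$ into two halves: one for the QR/Givens approximation of $U$ modulo a diagonal, and the other for approximating that diagonal modulo a global phase.

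First, I would invoke Theorem~\ref{thm:UN_compilation_SK} with precision $\varepsilon/2$ to obtain a word $W_1$ over the finite alphabet $\mathcal{A}_{\mathrm{elem}}(n;G_2)\cup \mathcal{A}_{\mathrm{elem}}(n;G_2)^{-1}$ and a diagonal unitary $D\in T$ with $\|U-W_1 D\|\le \varepsilon/2$. Setting $\theta:=\arg(\det D)/N$, I would extract the global phase by writing $D=e^{i\theta}D_0$ with $D_0\in T_0:=T\cap SU(N)$. Lemma~\ref{lem:T0_generated_by_two_level_phases_concise} then supplies real parameters $t_2,\dots,t_N$ such that
\[
D_0 \;=\; \prod_{j=2}^N \gamma_{1j}(t_j), \qquad \gamma_{1j}(t_j)\;=\;\phi_{1,j}(S_j), \qquad S_j\;:=\;\exp\!\bigl(\tfrac{i t_j}{2}\sigma_z\bigr)\in SU(2),
\]
where the factors commute and each is supported on a single coordinate two-plane $W_{1j}=\mathrm{span}\{e_1,e_j\}$.

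Next, I would apply Theorem~\ref{thm:SK} to each $S_j$ at local accuracy $\delta:=\varepsilon/(2(N-1))$, producing a word $u_j$ over $G_2\cup G_2^{-1}$ with $\|S_j-u_j\|\le\delta$, and set $W_2:=\prod_{j=2}^N \phi_{1,j}(u_j)$, still a word over $\mathcal{A}_{\mathrm{elem}}(n;G_2)\cup \mathcal{A}_{\mathrm{elem}}(n;G_2)^{-1}$. Lemma~\ref{lem:two_level_isometry} promotes each local bound to $\|\gamma_{1j}(t_j)-\phi_{1,j}(u_j)\|\le\delta$, and the telescoping estimate of Lemma~\ref{lem:error_accumulation} yields $\|D_0-W_2\|\le(N-1)\delta=\varepsilon/2$. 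Defining $\widetilde W(U,\varepsilon):=W_1 W_2$ and using unitary invariance of the operator norm,
\[
\|U-e^{i\theta}\widetilde W\|\;\le\;\|U-W_1 D\|+\|D-e^{i\theta}W_2\|\;=\;\|U-W_1 D\|+\|D_0-W_2\|\;\le\;\varepsilon,
\]
which is exactly the claimed bound.

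The main obstacle is combinatorial rather than analytic: the diagonal-compilation step appends $O(N\,\mathrm{polylog}(N/\varepsilon))$ letters on top of the bound of Theorem~\ref{thm:UN_compilation_SK}, so the only real task beyond bookkeeping is verifying that polylogarithmic scaling in $\varepsilon^{-1}$ is preserved -- which it is, since the extra contribution enters additively and the local precision $\delta$ shrinks only by a polynomial factor in $N$. A minor point worth flagging in a full writeup is that the angles $\{t_j\}$ are read off directly from $D$ by diagonal bookkeeping, so the extended classical compilation algorithm remains efficient in $\log(1/\varepsilon)$.
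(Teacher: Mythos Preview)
Your argument is correct and follows the same skeleton as the paper: apply Theorem~\ref{thm:UN_compilation_SK}, split off the global phase $D=e^{i\theta}D_0$, factor $D_0\in T_0$ into coordinate two-level $SU(2)$ phase rotations via Lemma~\ref{lem:T0_generated_by_two_level_phases_concise}, and append these to the word.

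The one substantive difference is that you are more careful than the paper on the diagonal step. The paper's proof writes $D_0$ exactly as a product of continuous elements $\gamma_{1j}(t_j)\in\phi_{1,j}(SU(2))$ and then simply declares this to be ``a word over the corresponding embedded alphabet'', without approximating each $\gamma_{1j}(t_j)$ by letters of the \emph{finite} set $\mathcal{A}_{\mathrm{elem}}(n;G_2)$. Taken literally, that produces an exact identity $e^{i\theta}\widetilde W=W(U,\varepsilon)D$ (so no error splitting is needed), but $\widetilde W$ is then not a word over the stated finite alphabet. Your version closes this gap: you budget $\varepsilon/2$ for the diagonal, Solovay--Kitaev approximate each $S_j$ at accuracy $\varepsilon/(2(N-1))$, lift via Lemma~\ref{lem:two_level_isometry}, and combine with Lemma~\ref{lem:error_accumulation}. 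The final triangle-inequality bound $\|U-e^{i\theta}W_1W_2\|\le\|U-W_1D\|+\|D_0-W_2\|$ is exactly right. Your closing remark that the extra $O(N\cdot\mathrm{polylog}(N/\varepsilon))$ letters do not disturb the overall polylogarithmic scaling in $1/\varepsilon$ is also correct and a useful observation.
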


\begin{proof}
Apply Theorem~\ref{thm:UN_compilation_SK} to obtain $U=W(U,\varepsilon)\,D+E$ with $\|E\|\le\varepsilon$.
Decompose $D=e^{i\theta}D_0$ as in Lemma~\ref{lem:T0_generated_by_two_level_phases_concise} and write $D_0$ as a finite product of
elements from $\phi_{1j}(SU(2))$ (hence as a word over the corresponding embedded alphabet).
Let $\widetilde W(U,\varepsilon):=W(U,\varepsilon)\cdot(\text{word for }D_0)$. Then $e^{i\theta}\widetilde W=W D$ by construction, so
$U-e^{i\theta}\widetilde W=E$ and the norm bound follows.
\end{proof}

\subsection{Algorithmic summary and interpretation}
\label{subsec:algorithmic_summary}

We summarize the construction of Theorem~\ref{thm:UN_compilation_SK} as an explicit compilation procedure.

\medskip
\noindent\textbf{Compilation procedure (coordinate two-level factorization + Solovay--Kitaev).}
Given $U\in U(N)$, $\varepsilon\in(0,1)$, and a finite universal set $G_2\subset SU(2)$, the following steps produce
a word $W$ over $\mathcal{A}_{\mathrm{elem}}(n;G_2)\cup\mathcal{A}_{\mathrm{elem}}(n;G_2)^{-1}$ and a diagonal $D\in U(N)$ such that
$\|U-WD\|\le \varepsilon$.

\begin{quote}
\begin{enumerate}
\item \emph{Exact two-level factorization.}
Compute a coordinate Givens/QR factorization
\[
U=\Big(\prod_{k=1}^{K}\phi_{p_k,q_k}(V_k)\Big)\,D,
\qquad 1\le p_k<q_k\le N,\ \ V_k\in U(2),
\]
with $K\le N(N-1)/2$.
If desired, write $V_k=e^{i\theta_k}\widetilde V_k$ with $\widetilde V_k\in SU(2)$ and absorb the phases $e^{i\theta_k}$ into $D$.

\item \emph{Local approximation in $SU(2)$.}
Set $\delta:=\varepsilon/K$.
For each $k=1,\dots,K$, apply a Solovay--Kitaev routine to obtain a word $w_k$ over $G_2\cup G_2^{-1}$ such that
\[
\|\,\widetilde V_k-w_k\,\|\le \delta.
\]

\item \emph{Lift and concatenate.}
Define
\[
W:=\prod_{k=1}^{K}\phi_{p_k,q_k}(w_k),
\]
where $\phi_{p_k,q_k}(w_k)$ denotes the word obtained by replacing each letter of $w_k$ by its embedded generator in
$\phi_{p_k,q_k}(G_2)\cup \phi_{p_k,q_k}(G_2)^{-1}$.
\end{enumerate}
\end{quote}

\noindent
By Lemmas~\ref{lem:two_level_isometry} and~\ref{lem:error_accumulation}, the resulting pair $(W,D)$ satisfies $\|U-WD\|\le\varepsilon$.
\medskip

\begin{remark}[Interpretation in the geometric framework]
The factorization step may be viewed as selecting a finite sequence of logical two-planes (here, the coordinate planes
$W_{p_k,q_k}$) and prescribing on each plane a continuous $U(2)$ motion $V_k$. The Solovay--Kitaev step then replaces each
continuous $SU(2)$ motion by a short word over a fixed finite alphabet, and the lifting step transports these words to $U(N)$ through
the corresponding two-level embeddings. In this sense, the procedure separates a \emph{continuous} choice of support (a path through
logical qubits) from a \emph{discrete} approximation of the induced $SU(2)$ dynamics.
\end{remark}

\begin{remark}[Why the present framework is not merely ``QR + Solovay--Kitaev'']
\label{rem:not_just_QR_SK}
The QR/Givens factorization and the Solovay--Kitaev theorem are used here only as modular compilation primitives.
The conceptual contribution is orthogonal to their existence: we introduce an intrinsic \emph{descriptor layer} for elementary
operations in $U(2^n)$ based on the choice of a faithful embedded copy of $U(2)$ (equivalently, a point in the embedding landscape
$\Emb(U(2),U(N))$), together with the attendant gauge structure and geometric parameter spaces (notably the Grassmannian in the
two-level sector). The compilation pipeline then serves as an existence proof that finite alphabets can be lifted through these
intrinsic descriptors, rather than as the primary novelty.
\end{remark}

\subsection{Relation to fault-tolerant and hardware-constrained gate sets}
\label{subsec:ftqc_relation}

The finite-alphabet interface recorded in this section is deliberately \emph{agnostic} to the physical origin of the discrete gate set.
In fault-tolerant settings, the available alphabet is typically restricted (e.g.\ Clifford+$T$-type libraries, lattice-surgery primitives,
or other architecture-dependent native gates). Our framework separates (i) the \emph{intrinsic two-level lifting layer} in $U(N)$ from
(ii) the \emph{local} approximation engine in $SU(2)$: any routine that approximates target elements of $SU(2)$ over a prescribed finite
alphabet can be inserted into the pipeline, while the global error control after lifting through two-level embeddings remains unchanged.

Importantly, we do not claim optimal resource counts for any specific fault-tolerant metric (such as $T$-count). Rather, the point is that
hardware- or FTQC-imposed alphabets can be incorporated without altering the intrinsic descriptor layer or the two-level universality backbone,
thus providing a clean interface between $U(N)$-factorizations and architecture-specific synthesis.

\section{Structural backbone: the embedding landscape $\Emb(G,U(N))$ as a union of homogeneous manifolds}
\label{sec:EmbSU2UN}

\noindent
An embedding $\phi:G\hookrightarrow U(N)$ equips $\C^N$ with a unitary $G$-module structure, and the conjugation action of $U(N)$ on
$\Emb(G,U(N))$ is exactly unitary change of basis. Hence the geometry of $\Emb(G,U(N))$ is governed by the standard dichotomy:
\emph{representation type is discrete} (isotypic multiplicities), while \emph{basis choice is continuous} (unitary orbits).
We develop the structure at the $SU(2)$ level, where only finitely many types occur at fixed $N$; the case $G=U(2)$ adds determinant
twists and yields a typically countable refinement.

\subsection{Orbits, stabilizers, and multiplicity data for $SU(2)$}
\label{subsec:Emb_SU2_orbits_concise}

Let $\{V_d\}_{d\in\N_0}$ denote the irreducible unitary $SU(2)$-modules, where $\dim_\C(V_d)=d+1$.
Since $SU(2)$ is compact, every finite-dimensional unitary representation is completely reducible and is uniquely determined up to
unitary equivalence by its irreducible multiplicities; see, e.g., \cite{BrockerDieck1985,Hall2015,Knapp2002}.
Thus any $\phi\in\Emb(SU(2),U(N))$ yields a finite-support multiplicity family $m=(m_d)_{d\ge0}$ such that
\begin{equation}\label{eq:isotypic_decomposition_SU2_concise}
\C^N \ \cong\ \bigoplus_{d\ge0}\big(\C^{m_d}\otimes V_d\big),
\qquad
\sum_{d\ge0} m_d(d+1)=N.
\end{equation}

\begin{lemma}[Orbits are homogeneous]\label{lem:orbit_homogeneous_SU2_concise2}
For $\phi\in\Emb(SU(2),U(N))$, the $U(N)$-orbit
$\mathcal O_\phi:=\{W\phi(\cdot)W^{-1}:W\in U(N)\}$ is a smooth homogeneous manifold and
\[
\mathcal O_\phi \ \cong\ U(N)\big/ Z_{U(N)}\!\big(\phi(SU(2))\big),
\]
where $Z_{U(N)}(\phi(SU(2)))=\{W\in U(N):W\phi(g)=\phi(g)W\ \forall g\in SU(2)\}$.
\end{lemma}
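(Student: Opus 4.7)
The plan is to identify this as an instance of the orbit–stabilizer theorem for a smooth action of a compact Lie group, so the work reduces to (a) computing the stabilizer explicitly and (b) verifying the standard hypotheses that turn the abstract orbit map into a diffeomorphism.

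First I would fix the action and evaluate the stabilizer. The group $U(N)$ acts on $\Emb(SU(2),U(N))$ via $(W,\phi)\mapsto \Ad_W\!\cdot\phi$, where $(\Ad_W\!\cdot\phi)(g)=W\phi(g)W^{-1}$, as recorded in Proposition~\ref{prop:embeddings_as_faithful_representations_G}(iii). The stabilizer of $\phi$ is by definition
\[
\mathrm{Stab}_{U(N)}(\phi)=\{W\in U(N): W\phi(g)W^{-1}=\phi(g)\text{ for all }g\in SU(2)\},
\]
which is literally $Z_{U(N)}(\phi(SU(2)))$. This is the content that has to be matched to the statement of the lemma.

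Next I would argue that the centralizer is a closed Lie subgroup. For each fixed $g\in SU(2)$ the set $\{W: W\phi(g)=\phi(g)W\}$ is closed in $U(N)$ because it is the preimage of $0$ under the continuous map $W\mapsto W\phi(g)-\phi(g)W$. Intersecting over $g\in SU(2)$ keeps it closed. By the closed subgroup theorem (Cartan), $Z_{U(N)}(\phi(SU(2)))$ is therefore an embedded Lie subgroup of $U(N)$; in fact, because $\phi(SU(2))$ is generated as a topological group by any neighbourhood of the identity, the intersection can be taken over a finite spanning set of $\mathfrak{su}(2)$ via $d\phi_e$, making the closedness immediate.

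Then I would invoke the standard homogeneous-space theorem. Since $U(N)$ is a compact Lie group acting smoothly on itself (hence on $\phi(SU(2))$ by conjugation), and the stabilizer $H:=Z_{U(N)}(\phi(SU(2)))$ is a closed subgroup, the quotient $U(N)/H$ carries a unique smooth manifold structure making $U(N)\to U(N)/H$ a smooth principal $H$-bundle (see, e.g., \cite{Hall2015,Knapp2002}). The orbit map
\[
\Phi:U(N)/H\longrightarrow \mathcal O_\phi,\qquad W H\longmapsto \Ad_W\!\cdot\phi,
\]
is well-defined, bijective, and $U(N)$-equivariant. Since $U(N)$ is compact, $\Phi$ is an immersion of compact manifolds with injective differential, hence a diffeomorphism onto its image; equipping $\mathcal O_\phi$ with the pushforward smooth structure makes it a smooth homogeneous manifold, and the stated isomorphism $\mathcal O_\phi\cong U(N)/Z_{U(N)}(\phi(SU(2)))$ follows.

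The only genuine subtlety, and the place I would pay attention, is the smoothness of the action on $\Emb(SU(2),U(N))$: one has to either topologize the target (via pointwise convergence on a finite generating set of $SU(2)$, or equivalently via $d\phi_e$ on a basis of $\mathfrak{su}(2)$), or bypass it by transporting the smooth structure from $U(N)/H$ through $\Phi$, as I do above. Everything else is a routine application of the closed subgroup theorem and the homogeneous-space construction.
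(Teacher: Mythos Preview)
Your proposal is correct and follows essentially the same route as the paper: identify the stabilizer of $\phi$ under the conjugation action as the centralizer $Z_{U(N)}(\phi(SU(2)))$, then invoke the orbit--stabilizer theorem for Lie group actions to obtain $\mathcal O_\phi\cong U(N)/Z_{U(N)}(\phi(SU(2)))$ as a homogeneous manifold. The paper's proof is more terse (it simply cites the orbit--stabilizer theorem and refers to \cite{Hall2015,Knapp2002}), whereas you spell out the closedness of the centralizer and the smoothness/transport-of-structure issue explicitly; these are welcome elaborations but not a different argument.
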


\begin{proof}
Consider the smooth action of $U(N)$ on $\Emb(SU(2),U(N))$ by conjugation,
$(W\cdot \phi)(g)=W\phi(g)W^{-1}$.
The stabilizer of $\phi$ consists precisely of those $W$ commuting with $\phi(SU(2))$, i.e.\ $Z_{U(N)}(\phi(SU(2)))$.
By the orbit--stabilizer theorem for Lie group actions, the orbit is a smooth immersed submanifold and is diffeomorphic to the
homogeneous space $U(N)/Z_{U(N)}(\phi(SU(2)))$; see, e.g., \cite{Hall2015,Knapp2002}.
\end{proof}

\begin{lemma}[Centralizer and orbit dimension]\label{lem:centralizer_SU2_concise2}
Let $\phi\in\Emb(SU(2),U(N))$ have multiplicities $m=(m_d)_{d\ge0}$ as in \eqref{eq:isotypic_decomposition_SU2_concise}.
Then
\[
Z_{U(N)}\!\big(\phi(SU(2))\big)\ \cong\ \prod_{d\ge0} U(m_d),
\qquad
\dim(\mathcal O_\phi)=N^2-\sum_{d\ge0} m_d^2.
\]
\end{lemma}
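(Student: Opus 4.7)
The plan is to use the $SU(2)$-isotypic decomposition together with Schur's lemma to identify the commutant as a product of unitary groups on the multiplicity spaces, and then read off the orbit dimension from the homogeneous-space identification of Lemma~\ref{lem:orbit_homogeneous_SU2_concise2} by subtracting Lie algebra dimensions.

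First I would fix an $SU(2)$-equivariant unitary isomorphism $\C^N\cong \bigoplus_{d\ge 0}(\C^{m_d}\otimes V_d)$ realizing \eqref{eq:isotypic_decomposition_SU2_concise}, so that $\phi(g)$ acts as $\bigoplus_d (I_{m_d}\otimes \rho_d(g))$, with $\rho_d$ the irreducible unitary on $V_d$. An element $W\in U(N)$ lies in $Z_{U(N)}(\phi(SU(2)))$ iff, viewed as a $\C$-linear endomorphism of $\C^N$, it is $SU(2)$-equivariant. By Schur's lemma applied to the isotypic decomposition, any such $W$ preserves each isotypic summand (there are no nonzero equivariant maps between inequivalent $V_d,V_{d'}$) and restricts on $\C^{m_d}\otimes V_d$ to an operator of the form $A_d\otimes I_{V_d}$ with $A_d\in \End(\C^{m_d})$, using the standard identification of the equivariant endomorphism algebra of an isotypic component with $\End(\C^{m_d})\otimes \C\cdot I_{V_d}$. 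Imposing $W^\dagger W=I_N$ block by block and using unitarity of $I_{V_d}$ forces $A_d^\dagger A_d=I_{m_d}$, i.e.\ $A_d\in U(m_d)$. The assignment $W\mapsto (A_d)_{d\ge 0}$ is then a Lie group isomorphism $Z_{U(N)}(\phi(SU(2)))\cong \prod_{d\ge 0} U(m_d)$, with the product finite because only finitely many $m_d$ are nonzero.

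For the dimension formula, Lemma~\ref{lem:orbit_homogeneous_SU2_concise2} gives the diffeomorphism $\mathcal O_\phi\cong U(N)/Z_{U(N)}(\phi(SU(2)))$, so
\[
\dim \mathcal O_\phi \;=\; \dim U(N)-\dim Z_{U(N)}\!\big(\phi(SU(2))\big)\;=\; N^2-\sum_{d\ge 0} m_d^2,
\]
using $\dim U(m)=m^2$ and the product structure just established.

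There is no substantial obstacle here; the only point requiring care is the identification of the equivariant endomorphism algebra of each isotypic block $\C^{m_d}\otimes V_d$ with $\End(\C^{m_d})$, which is the standard multiplicity-space consequence of Schur's lemma for the irreducible $V_d$. Compactness of $SU(2)$ and the finiteness of the decomposition make the block-diagonalization and the transition from abstract equivariant maps to the matrix description in $\prod_d U(m_d)$ routine.
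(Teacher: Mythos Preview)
Your proof is correct and follows essentially the same approach as the paper: isotypic decomposition, Schur's lemma to identify the commutant as $\bigoplus_d (M_{m_d}(\C)\otimes I_{V_d})$, intersection with $U(N)$ to obtain $\prod_d U(m_d)$, and then the dimension count via the orbit--stabilizer identification of Lemma~\ref{lem:orbit_homogeneous_SU2_concise2}. The paper's version is slightly terser but otherwise identical in structure and content.
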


\begin{proof}
Write $\C^N\simeq\bigoplus_{d\ge0}(\C^{m_d}\otimes V_d)$ so that $\phi(g)=\bigoplus_{d}(I_{m_d}\otimes \rho_d(g))$.
If $T\in\End(\C^N)$ commutes with all $\phi(g)$, then $T$ preserves each isotypic component and, on $\C^{m_d}\otimes V_d$,
Schur's lemma gives $T|_{\C^{m_d}\otimes V_d}=A_d\otimes I_{V_d}$ for some $A_d\in M_{m_d}(\C)$; hence
$\End_{SU(2)}(\C^N)\cong\bigoplus_{d\ge0}(M_{m_d}(\C)\otimes I_{V_d})$.
Intersecting with $U(N)$ forces each $A_d$ to be unitary, yielding
$Z_{U(N)}(\phi(SU(2)))\cong\prod_{d\ge0}U(m_d)$.
Finally, $\dim_\R U(N)=N^2$ and $\dim_\R U(m_d)=m_d^2$, so
$\dim(\mathcal O_\phi)=N^2-\sum_{d\ge0}m_d^2$.
\end{proof}

\subsection{Faithfulness for $SU(2)$ embeddings}
\label{subsec:Emb_SU2_faithfulness_concise}

\begin{lemma}[Parity criterion]\label{lem:faithfulness_SU2_parity_concise2}
Let $\rho=\bigoplus_{d\ge0}(\C^{m_d}\otimes V_d)$ be a unitary representation of $SU(2)$.
Then $\rho$ is faithful if and only if $m_d>0$ for some odd $d$.
\end{lemma}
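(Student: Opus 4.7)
The plan is to reduce faithfulness to a question about the center of $SU(2)$, exploiting two structural facts: the very restrictive normal subgroup lattice of $SU(2)$, and the explicit action of $-I$ on each irreducible $V_d$.

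First I would recall that $SU(2)$ is a compact connected Lie group whose Lie algebra $\mathfrak{su}(2)$ is simple, and whose center is $Z(SU(2))=\{\pm I\}$. It follows (e.g.\ by reading off closed normal subgroups via their Lie algebras, which must be ideals of $\mathfrak{su}(2)$, together with the fact that any closed normal subgroup must be invariant under the connected adjoint action and hence either lie in the center or be all of $SU(2)$) that the closed normal subgroups of $SU(2)$ are exactly
\[
\{e\},\qquad \{\pm I\},\qquad SU(2).
\]
Therefore $\ker\rho$ can only be one of these three possibilities, and $\rho$ is faithful iff $\ker\rho=\{e\}$, iff $-I\notin\ker\rho$ and $\rho$ is not the trivial representation.

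Next I would compute $\rho(-I)$ directly on the isotypic decomposition. On the irreducible $V_d$ of dimension $d+1$, the central element $-I$ acts as the scalar $(-1)^d$: this is the standard highest-weight calculation, since the highest weight vector of $V_d$ has weight $d$ under a maximal torus containing $-I=\exp(i\pi H)$ (with $H=\operatorname{diag}(1,-1)$), and all weights of $V_d$ have the same parity as $d$. Hence
\[
\rho(-I)\ =\ \bigoplus_{d\ge 0}\bigl(I_{m_d}\otimes (-1)^d I_{V_d}\bigr).
\]
Thus $\rho(-I)=I$ iff $m_d=0$ for every odd $d$, and otherwise $\rho(-I)$ has a genuine $(-1)$-eigenspace of dimension $\sum_{d\text{ odd}}m_d(d{+}1)>0$, in particular $\rho(-I)\ne I$.

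Finally I would combine these two ingredients for both directions. If $m_d=0$ for every odd $d$, then $-I\in\ker\rho$, so $\rho$ is not faithful. Conversely, if $m_d>0$ for some odd $d\ge 1$, then $-I\notin\ker\rho$, so $\ker\rho\ne\{\pm I\}$ and $\ker\rho\ne SU(2)$ (since the presence of a nontrivial summand $V_d$ with $d\ge 1$ already rules out the trivial representation); by the trichotomy above $\ker\rho=\{e\}$, so $\rho$ is faithful. The main potential obstacle is purely expository: invoking the normal-subgroup trichotomy of $SU(2)$ and the sign computation for $-I$ on $V_d$ cleanly; both are standard, but it is worth stating the Lie-algebra simplicity argument and the highest-weight parity argument explicitly so that the lemma stands on its own without appeal to auxiliary representation-theoretic background.
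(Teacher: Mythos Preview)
Your proof is correct and follows essentially the same approach as the paper: both reduce faithfulness to checking whether $-I\in\ker\rho$ via the normal-subgroup trichotomy of $SU(2)$ and the scalar action $(-1)^d$ of $-I$ on $V_d$. You supply more justification (the Lie-algebra simplicity argument for the trichotomy, the highest-weight parity argument for the sign, and an explicit check that the trivial-representation case is excluded), but the underlying strategy is identical.
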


\begin{proof}
The only nontrivial proper normal subgroup of $SU(2)$ is its center $\{\pm I\}$.
On $V_d$, the element $-I$ acts by $(-1)^d$, hence $\rho(-I)=I$ iff all occurring $d$ are even. Therefore $\ker(\rho)=\{I\}$ iff some
odd $d$ occurs.
\end{proof}

\subsection{Homogeneous stratification of $\Emb(SU(2),U(N))$}
\label{subsec:Emb_SU2_stratification_concise}

\begin{theorem}[Homogeneous decomposition of $\Emb(SU(2),U(N))$]\label{thm:Emb_SU2_decomposition_main_concise2}
Fix $N\in\N$. Let $\mathscr F_{SU(2)}(N)$ be the set of finite-support families $m=(m_d)_{d\ge0}$ with $m_d\in\N_0$ such that
\[
\sum_{d\ge0} m_d(d+1)=N
\quad\text{and}\quad
\exists\, d \text{ odd with } m_d>0.
\]
Then
\[
\Emb(SU(2),U(N))=\bigsqcup_{m\in\mathscr F_{SU(2)}(N)} \mathcal O_m,
\qquad
\mathcal O_m \cong U(N)\Big/\Big(\prod_{d\ge0}U(m_d)\Big),
\]
and $\dim(\mathcal O_m)=N^2-\sum_{d\ge0}m_d^2$.
\end{theorem}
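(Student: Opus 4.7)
The plan is to assemble Theorem~\ref{thm:Emb_SU2_decomposition_main_concise2} from the three ingredients already in place: the representation-theoretic reformulation of embeddings (Proposition~\ref{prop:embeddings_as_faithful_representations_G} and Corollary~\ref{cor:conjugacy_isotypic_data_G}), the orbit--stabilizer description of orbits (Lemma~\ref{lem:orbit_homogeneous_SU2_concise2}), the explicit computation of centralizers (Lemma~\ref{lem:centralizer_SU2_concise2}), and the parity criterion for faithfulness (Lemma~\ref{lem:faithfulness_SU2_parity_concise2}). There is no deep new content to prove; the issue is to bundle the bookkeeping cleanly.

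First, I would define the type map. Given $\phi\in\Emb(SU(2),U(N))$, Proposition~\ref{prop:embeddings_as_faithful_representations_G} identifies $\phi$ with a faithful unitary representation of $SU(2)$ on $\C^N$. Complete reducibility (used in Corollary~\ref{cor:conjugacy_isotypic_data_G}) yields a unique isotypic decomposition \eqref{eq:isotypic_decomposition_SU2_concise} with multiplicities $m(\phi)=(m_d(\phi))_{d\ge0}$ of finite support. Matching complex dimensions forces $\sum_d m_d(d+1)=N$, and faithfulness together with Lemma~\ref{lem:faithfulness_SU2_parity_concise2} forces $m_{d}>0$ for some odd $d$. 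Hence $m(\phi)\in\mathscr F_{SU(2)}(N)$, giving a well-defined type map $\phi\mapsto m(\phi)$.

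Next, I would show that the level sets of the type map are exactly the $U(N)$-orbits. By Corollary~\ref{cor:conjugacy_isotypic_data_G}, two embeddings are $U(N)$-conjugate iff their associated representations are unitarily equivalent iff they have the same isotypic multiplicities; thus $m(\phi_1)=m(\phi_2)$ iff $\phi_1,\phi_2$ lie in the same orbit. This simultaneously proves disjointness of the pieces $\mathcal O_m$ and that each nonempty piece is a single orbit. For existence, given $m\in\mathscr F_{SU(2)}(N)$ I would explicitly build a representative: take $\rho_m:=\bigoplus_d (I_{m_d}\otimes\rho_d)$ acting on $\bigoplus_d(\C^{m_d}\otimes V_d)$, which has the right dimension by the constraint on $m$; the parity condition ensures $\ker(\rho_m)=\{I\}$ via Lemma~\ref{lem:faithfulness_SU2_parity_concise2}, so $\rho_m$ defines an element of $\Emb(SU(2),U(N))$ with $m(\rho_m)=m$. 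This yields the disjoint union decomposition.

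Finally, I would invoke Lemmas~\ref{lem:orbit_homogeneous_SU2_concise2} and~\ref{lem:centralizer_SU2_concise2} on a representative $\phi\in\mathcal O_m$ to conclude $\mathcal O_m\cong U(N)/\prod_{d\ge0} U(m_d)$ together with the dimension formula $\dim\mathcal O_m=N^2-\sum_d m_d^2$. The only point that needs a line of care is that the diffeomorphism type of the orbit is well-defined independently of the representative, which follows because conjugate stabilizers are isomorphic as Lie groups. The proof involves no substantial obstacle: the main subtlety is simply to ensure that the parity condition built into $\mathscr F_{SU(2)}(N)$ is simultaneously necessary (for faithfulness) and sufficient (to realize $m$ by an actual embedding), so that the index set of nonempty strata matches the stated one exactly.
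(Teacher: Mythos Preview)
Your proposal is correct and follows essentially the same route as the paper's own proof: identify embeddings with faithful representations and their isotypic multiplicities (Proposition~\ref{prop:embeddings_as_faithful_representations_G}, Corollary~\ref{cor:conjugacy_isotypic_data_G}), use the parity criterion (Lemma~\ref{lem:faithfulness_SU2_parity_concise2}) to cut down to $\mathscr F_{SU(2)}(N)$, and then read off the orbit description and dimension from Lemmas~\ref{lem:orbit_homogeneous_SU2_concise2} and~\ref{lem:centralizer_SU2_concise2}. Your write-up is in fact slightly more explicit than the paper's in constructing a representative $\rho_m$ for each $m$ and in noting that the diffeomorphism type of $\mathcal O_m$ is independent of the chosen representative.
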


\begin{proof}
By complete reducibility, every unitary homomorphism $SU(2)\to U(N)$ decomposes into irreducibles as in
\eqref{eq:isotypic_decomposition_SU2_concise}; the multiplicities $m=(m_d)_{d\ge0}$ determine the representation up to unitary
equivalence \cite{BrockerDieck1985,Hall2015,Knapp2002}. In particular, two homomorphisms are $U(N)$-conjugate if and only if they have
the same multiplicities, so the $U(N)$-orbits are indexed by families $m$ satisfying the dimension constraint.

The embedding condition is faithfulness, which for $SU(2)$ is equivalent to the parity requirement in
Lemma~\ref{lem:faithfulness_SU2_parity_concise2}. For each admissible $m$, choose an embedding $\phi_m$ of type $m$ and set
$\mathcal O_m:=U(N)\cdot \phi_m$. Lemma~\ref{lem:orbit_homogeneous_SU2_concise2} gives
$\mathcal O_m\cong U(N)/Z_{U(N)}(\phi_m(SU(2)))$, and Lemma~\ref{lem:centralizer_SU2_concise2} identifies the stabilizer with
$\prod_{d\ge0}U(m_d)$ and yields $\dim(\mathcal O_m)=N^2-\sum_{d\ge0}m_d^2$.
\end{proof}

\begin{corollary}[Finiteness of strata]\label{cor:finiteness_strata_SU2_concise2}
For fixed $N$, the set $\mathscr F_{SU(2)}(N)$ is finite. Hence $\Emb(SU(2),U(N))$ is a finite disjoint union of $U(N)$-orbits.
\end{corollary}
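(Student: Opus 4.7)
The plan is to deduce the corollary directly from the dimension constraint defining $\mathscr F_{SU(2)}(N)$, using the homogeneous decomposition already proved in Theorem~\ref{thm:Emb_SU2_decomposition_main_concise2} for the second assertion.

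First, I would show that the dimension constraint $\sum_{d\ge 0} m_d(d+1)=N$ forces compact support bounds on any admissible family $m\in\mathscr F_{SU(2)}(N)$. Since each summand $m_d(d+1)$ is a nonnegative integer and the total equals $N$, we must have $m_d=0$ whenever $d+1>N$ (equivalently $d\ge N$), for otherwise a single positive term would already exceed $N$. Hence every admissible family is supported in the finite index set $\{0,1,\dots,N-1\}$.

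Next, for each $d\in\{0,1,\dots,N-1\}$ the same inequality yields the uniform bound $m_d\le N/(d+1)\le N$, so each coordinate takes only finitely many values. Therefore $\mathscr F_{SU(2)}(N)$ is contained in the finite set $\{0,1,\dots,N\}^{N}$, and in particular is finite. (A sharper remark, not needed for the argument, is that $|\mathscr F_{SU(2)}(N)|$ is bounded by the partition number $p(N)$, since each admissible $m$ corresponds to a partition of $N$ with part $d+1$ of multiplicity $m_d$, and the parity condition merely selects those partitions containing at least one even part.)

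The second assertion is then immediate from Theorem~\ref{thm:Emb_SU2_decomposition_main_concise2}: that theorem exhibits $\Emb(SU(2),U(N))$ as the disjoint union $\bigsqcup_{m\in\mathscr F_{SU(2)}(N)}\mathcal O_m$ of $U(N)$-homogeneous orbits, and the finiteness of the index set just established ensures the union is finite. There is no real obstacle here; the only subtle point is to observe explicitly that the dimension equation \emph{alone} (without invoking the parity condition) already trivializes the finiteness question, so the parity restriction in the definition of $\mathscr F_{SU(2)}(N)$ can only reduce the index set further.
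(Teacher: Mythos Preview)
Your argument is correct and is exactly the natural one: the dimension constraint $\sum_{d\ge 0} m_d(d+1)=N$ bounds both the support and each multiplicity, so $\mathscr F_{SU(2)}(N)$ is finite, and Theorem~\ref{thm:Emb_SU2_decomposition_main_concise2} then gives the finite orbit decomposition. The paper states this corollary without an explicit proof (treating it as immediate from the theorem), so your write-up simply makes explicit what the paper leaves to the reader; your parenthetical identification with partitions of $N$ containing an even part is also correct and a nice sharpening.
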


\begin{remark}[The two-level stratum]\label{rem:two_level_stratum_concise}
The two-level sector corresponds to $m_1=1$, $m_0=N-2$, $m_d=0$ otherwise, i.e.\ $V_1\oplus V_0^{\oplus (N-2)}$.
It is the orbit with stabilizer $U(1)\times U(N-2)$ and dimension $N^2-(1+(N-2)^2)=4N-5$, and it is organized by the
Grassmannian model developed in Section~\ref{sec:logical_qubits}.
\end{remark}

\subsection{Remark on $U(2)$ and determinant twists}
\label{subsec:Emb_U2_det_twists_concise}

\begin{remark}[$U(2)$ adds determinant twists]\label{rem:SU2_vs_U2_det_twists_concise2}
Every irreducible unitary $U(2)$-representation is of the form $\Sym^{d}(\C^2)\otimes \det^{k}$ with $d\in\N_0$ and $k\in\Z$, and has
dimension $d+1$ \cite{BrockerDieck1985,Hall2015,Knapp2002}. Hence a unitary $U(2)$-representation decomposes as
$\bigoplus_{(d,k)}(\C^{m_{d,k}}\otimes \Sym^{d}(\C^2)\otimes \det^{k})$ with $\sum_{(d,k)}m_{d,k}(d+1)=N$.
Restricting to $SU(2)\subset U(2)$ forgets the twist $k$ and retains only $m_d=\sum_k m_{d,k}$, recovering the $SU(2)$ multiplicity data.
\end{remark}

\begin{example}[Small-$N$ illustrations]\label{ex:smallN_strata_concise}
The stratification in Theorem~\ref{thm:Emb_SU2_decomposition_main_concise2} is already visible for $N=4$ and $N=8$.

\smallskip
\noindent\textbf{(i) $N=4$.}
There are (at least) two distinct strata in $\Emb(SU(2),U(4))$:
\begin{enumerate}
\item \emph{Two-level type} $m_1=1$, $m_0=2$ (i.e.\ $V_1\oplus V_0^{\oplus 2}$). Here the stabilizer is
$U(1)\times U(2)$ and $\dim(\mathcal O_m)=16-(1^2+2^2)=11$. This is the Grassmannian sector modeled in
Section~\ref{sec:logical_qubits}.
\item \emph{Irreducible type} $m_3=1$ (i.e.\ $V_3\simeq \Sym^3(\C^2)$, spin-$\tfrac32$). Here the stabilizer is $U(1)$ and
$\dim(\mathcal O_m)=16-1=15$. In particular, the action has no nontrivial invariant subspace, so it does not single out a
distinguished two-plane support.
\end{enumerate}

\smallskip
\noindent\textbf{(ii) $N=8$.}
Two extremal (and computationally interpretable) strata are:
\begin{enumerate}
\item \emph{Two-level type} $m_1=1$, $m_0=6$ (i.e.\ $V_1\oplus V_0^{\oplus 6}$). Stabilizer $U(1)\times U(6)$ and
$\dim(\mathcal O_m)=64-(1^2+6^2)=27$.
\item \emph{Irreducible type} $m_7=1$ (i.e.\ $V_7$, dimension $8$ irreducible). Stabilizer $U(1)$ and
$\dim(\mathcal O_m)=64-1=63$.
\end{enumerate}

\smallskip
\noindent
These cases illustrate the general mechanism: the representation type (multiplicity data) determines the stabilizer
$\prod_d U(m_d)$ and hence the orbit dimension $N^2-\sum_d m_d^2$. The two-level strata are precisely the ones that interface
directly with the Grassmannian logical-qubit model, whereas irreducible strata correspond to intrinsically higher-dimensional logical
primitives.
\end{example}

\section{Supporting layer: a variational (metric) cost for elementary gates}
\label{sec:variational_principle}

\noindent
We equip $U(N)$ with the bi-invariant Riemannian metric induced by the Hilbert--Schmidt pairing on $\mathfrak u(N)$.
For any embedding $\phi\in\Emb(G,U(N))$ (with $G\in\{SU(2),U(2)\}$), the image $K=\phi(G)$ is a closed Lie subgroup of $U(N)$.
Bi-invariance implies that $K$ is totally geodesic, so constrained minimizers inside $K$ coincide with ambient geodesics whose initial
velocity lies in the embedded Lie algebra $\mathfrak{k}=d\phi_e(\Lie(G))$.
As a consequence, minimal-energy implementations in $K$ are precisely constant-speed one-parameter subgroups generated by
minimal-norm logarithms in $\mathfrak{k}$.

\begin{remark}[On $U(2)$ embeddings and phase degrees of freedom]\label{rem:U2_vs_SU2_variational}
If $\widehat\phi\in\Emb(U(2),U(N))$ and $\phi:=\widehat\phi|_{SU(2)}$, then $\widehat\phi(U(2))$ contains $\phi(SU(2))$ and an
additional commuting $U(1)$ subgroup coming from the center of $U(2)$.
At the Lie-algebra level $\mathfrak u(2)=\mathfrak{su}(2)\oplus i\R I$ is a direct sum of commuting ideals; the pullback of the
Hilbert--Schmidt product via $d\widehat\phi_e$ is $\Ad$-invariant and has no cross-term between these ideals.
Accordingly, the non-abelian $SU(2)$ and central $U(1)$ contributions to constrained geodesics and minimal energy can be treated
separately when needed.
\end{remark}

\subsection{Hilbert--Schmidt metric and basic variational functionals}
\label{subsec:HS_metric_functionals}

The Hilbert--Schmidt pairing on $\mathfrak u(N)$,
\[
\langle X,Y\rangle_{\mathrm{hs}}:=\frac12\Tr(X^\dagger Y),
\]
is $\Ad$-invariant and induces a bi-invariant Riemannian metric on $U(N)$ by left/right translation.
For an absolutely continuous curve $U:[0,1]\to U(N)$ define the body velocity
\[
A(t):=\dot U(t)\,U(t)^{-1}\in\mathfrak{u}(N)\quad\text{for a.e.\ }t\in[0,1],
\]
and the energy and length
\begin{align}
\mathcal{E}[U]
&:=\frac12\int_0^1 \|A(t)\|_{\mathrm{hs}}^2\,dt,
\label{eq:energy_functional}
\\
\mathcal{L}[U]
&:=\int_0^1 \|A(t)\|_{\mathrm{hs}}\,dt.
\label{eq:length_functional}
\end{align}
As usual, among curves with fixed endpoints, minimizing $\mathcal{E}$ is equivalent to minimizing $\mathcal{L}$ after
constant-speed reparametrization.

\subsection{Bi-invariant geodesics and totally geodesic subgroups}
\label{subsec:geodesics_and_totally_geodesic}

\begin{lemma}[Geodesics are one-parameter subgroups]\label{lem:geodesics_one_parameter}
Let $G$ be a Lie group equipped with a bi-invariant Riemannian metric induced by an $\Ad$-invariant inner product on $\mathfrak g$.
Then the geodesics in $G$ are exactly the curves
\[
\gamma(t)=g_0\exp(tX),\qquad g_0\in G,\ X\in\mathfrak g.
\]
\end{lemma}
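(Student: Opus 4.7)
The plan is to reduce to geodesics through the identity by left translation, then use the Koszul formula together with bi-invariance to compute the Levi-Civita connection on left-invariant vector fields, and finally invoke uniqueness of geodesics with given initial data. There is no deep obstacle here: this is a classical Lie-theoretic computation. The only step that requires a moment of care is extracting $\Ad$-invariance of the inner product on $\mathfrak g$ from bi-invariance of the metric.

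First I would reduce to the identity: since the metric is left-invariant, each left translation $L_{g_0}:G\to G$ is an isometry. Thus $\gamma(t)=g_0\exp(tX)=L_{g_0}(\exp(tX))$ is a geodesic through $g_0$ if and only if $t\mapsto\exp(tX)$ is a geodesic through $e$. Since every initial velocity in $T_eG\cong\mathfrak g$ is of the form $X$ for some $X\in\mathfrak g$ and geodesics are determined by initial position and velocity, it suffices to show that each one-parameter subgroup is a geodesic through $e$; the general case and the converse statement then follow.

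Next I would compute the Levi-Civita connection on left-invariant vector fields. For $X,Y,Z\in\mathfrak g$ regarded as left-invariant vector fields on $G$, the Koszul formula reads
\[
2\,\langle\nabla_XY,Z\rangle
= X\langle Y,Z\rangle+Y\langle X,Z\rangle-Z\langle X,Y\rangle
+\langle[X,Y],Z\rangle-\langle[X,Z],Y\rangle-\langle[Y,Z],X\rangle.
\]
Left-invariance of the metric makes $\langle Y,Z\rangle,\langle X,Z\rangle,\langle X,Y\rangle$ constant functions on $G$, so the first three terms vanish. Right-invariance at $e$, combined with differentiation of the $\Ad$-action, gives $\Ad$-invariance of $\langle\cdot,\cdot\rangle$ on $\mathfrak g$, i.e.\ $\langle[X,Y],Z\rangle+\langle Y,[X,Z]\rangle=0$. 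Using this identity twice to rewrite $\langle[X,Z],Y\rangle$ and $\langle[Y,Z],X\rangle$ in terms of $\langle[X,Y],Z\rangle$ yields $\nabla_XY=\tfrac12[X,Y]$; in particular $\nabla_XX=0$.

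Finally I would conclude as follows. The integral curve through $e$ of the left-invariant vector field $X$ is precisely $\gamma(t)=\exp(tX)$, with $\dot\gamma(t)=X_{\gamma(t)}$. From $\nabla_XX=0$ we obtain $\nabla_{\dot\gamma}\dot\gamma=0$ along $\gamma$, so $\gamma$ is a geodesic. Left-translating and invoking existence/uniqueness of geodesics with prescribed initial conditions identifies every geodesic in $G$ with a curve of the form $g_0\exp(tX)$, completing the proof.
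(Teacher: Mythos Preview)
Your proof is correct and follows essentially the same route as the paper: compute the Levi--Civita connection on invariant vector fields via the Koszul formula and bi-invariance to obtain $\nabla_XY=\tfrac12[X,Y]$, hence $\nabla_XX=0$, and conclude that one-parameter subgroups are geodesics. The only cosmetic difference is that the paper states the connection formula for right-invariant fields (consistent with its body-velocity convention $\dot U\,U^{-1}$) and defers the computation to a reference, whereas you work with left-invariant fields and spell out the Koszul calculation explicitly.
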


\begin{proof}
For a bi-invariant metric, left and right translations are isometries.
The Levi--Civita connection satisfies $\nabla_{X^R}Y^R=\tfrac12[X,Y]^R$ for right-invariant fields.
Hence the geodesic equation is equivalent to constant body velocity.
Thus $\gamma(t)=g_0\exp(tX)$. See \cite[Ch.~4]{Hall2015} or \cite[\S~II.1]{Knapp2002}.
\end{proof}

\begin{lemma}[Lie subgroups are totally geodesic]\label{lem:embedded_subgroup_totally_geodesic}
Let $G$ be a Lie group with a bi-invariant Riemannian metric and let $H\subset G$ be a Lie subgroup with the induced metric.
Then $H$ is totally geodesic in $G$.
Equivalently, if $\gamma$ is a $G$-geodesic with $\gamma(0)\in H$ and $\dot\gamma(0)\in T_{\gamma(0)}H$, then $\gamma(t)\in H$ for all $t$.
\end{lemma}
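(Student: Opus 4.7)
The plan is to reduce the statement to Lemma~\ref{lem:geodesics_one_parameter} (bi-invariant geodesics are one-parameter subgroups) combined with the standard compatibility between the exponential maps of $G$ and of a Lie subgroup $H$. First I would fix $g_0\in H$ and a $G$-geodesic $\gamma$ with $\gamma(0)=g_0$ and $\dot\gamma(0)\in T_{g_0}H$. Writing $\dot\gamma(0)=(dL_{g_0})_e X$ with a unique $X\in\mathfrak{g}$, the hypothesis $\dot\gamma(0)\in T_{g_0}H=(dL_{g_0})_e\mathfrak{h}$ is equivalent to $X\in\mathfrak{h}$, so by the previous lemma $\gamma(t)=g_0\exp_G(tX)$.

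Next I would invoke the identity $\exp_G|_{\mathfrak{h}}=\exp_H$, valid for any Lie subgroup $H\subset G$ with Lie algebra $\mathfrak{h}$. The cleanest derivation is to note that $t\mapsto \exp_G(tX)$ is the integral curve at $e$ of the left-invariant vector field on $G$ generated by $X\in\mathfrak{h}\subset T_eG$, and since this field is tangent to $H$ at every point of $H$, uniqueness of integral curves keeps the trajectory inside $H$, where it coincides with $t\mapsto \exp_H(tX)$. Consequently $\gamma(t)=g_0\exp_G(tX)\in H$ for all $t$, which is exactly the preservation condition characterizing totally geodesic submanifolds. To see that $\gamma$ is moreover an $H$-geodesic for the induced metric, I would observe that the restriction to $\mathfrak{h}$ of the $\Ad_G$-invariant inner product is $\Ad_H$-invariant (since $\Ad_h(\mathfrak{h})\subset\mathfrak{h}$ for every $h\in H$), so the induced metric on $H$ is itself bi-invariant; applying Lemma~\ref{lem:geodesics_one_parameter} inside $H$ then produces $g_0\exp_H(tX)$, which coincides with $\gamma$ by the compatibility above.

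The argument is essentially formal and presents no substantial analytic obstacle: the one point deserving care is the exponential-compatibility step $\exp_G|_{\mathfrak{h}}=\exp_H$, a standard result that can be cited from \cite[Ch.~3]{Hall2015} or \cite[Ch.~I]{Knapp2002}. In the applications relevant to this paper---$H=\phi(G)$ with $G\in\{SU(2),U(2)\}$ and $\phi$ a faithful embedding into $U(N)$---the subgroup $H$ is closed and embedded by Definition~\ref{def:embedding}, so no subtlety about merely immersed subgroups arises.
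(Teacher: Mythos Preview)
Your argument is correct and follows essentially the same route as the paper: invoke Lemma~\ref{lem:geodesics_one_parameter} to write $\gamma(t)=g_0\exp(tX)$, use the initial tangency condition to place $X\in\mathfrak h$, and conclude $\exp(tX)\in H$. Your version is slightly more explicit about the compatibility $\exp_G|_{\mathfrak h}=\exp_H$ and about the induced metric on $H$ being bi-invariant, and you trivialize on the left rather than the right, but the core argument is identical to the paper's.
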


\begin{proof}
By Lemma~\ref{lem:geodesics_one_parameter}, $\gamma(t)=\gamma(0)\exp(tX)$ for some $X\in\mathfrak g$.
If $\gamma(0)\in H$ and $\dot\gamma(0)\in T_{\gamma(0)}H$, then the right-translated velocity $X=\dot\gamma(0)\gamma(0)^{-1}$
lies in $\mathfrak h=\Lie(H)$. Hence $\exp(tX)\in H$ and therefore $\gamma(t)\in H$ for all $t$.
\end{proof}

\begin{remark}[Application to embedded copies of $G\in\{SU(2),U(2)\}$]
\label{rem:totally_geodesic_embedded_SU2}
Fix $G\in\{SU(2),U(2)\}$ and $\phi\in\Emb(G,U(N))$. Then $K=\phi(G)$ is a closed Lie subgroup of $U(N)$ with Lie algebra
\[
\mathfrak{k}=d\phi_e(\Lie(G)).
\]
Since the Hilbert--Schmidt metric on $U(N)$ is bi-invariant, Lemma~\ref{lem:embedded_subgroup_totally_geodesic} implies that $K$ is
totally geodesic.
\end{remark}

\subsection{A constrained variational principle on an embedded subgroup}
\label{subsec:constrained_variational_principle}

\begin{definition}[Admissible curves constrained to an embedded subgroup]\label{def:admissible_curves}
Fix $G\in\{SU(2),U(2)\}$, $\phi\in\Emb(G,U(N))$, and $K:=\phi(G)$. For $U_{\star}\in K$ define
\[
\mathcal{A}(I,U_{\star};K)
:=\big\{\,U\in AC([0,1],K):\ U(0)=I,\ U(1)=U_{\star}\,\big\}.
\]
\end{definition}

\paragraph{Minimal logarithms.}
For $U_{\star}\in K$ set
\[
\Log_{\mathfrak{k}}(U_{\star}):=\{\,X\in\mathfrak{k}:\ \exp(X)=U_{\star}\,\}.
\]
Since $K$ is compact and connected, $\Log_{\mathfrak{k}}(U_{\star})\neq\varnothing$.

\begin{theorem}[Constrained energy minimizers are one-parameter subgroups]
\label{thm:variational_characterization}
Fix $G\in\{SU(2),U(2)\}$ and $\phi\in\Emb(G,U(N))$, and set $K=\phi(G)$ with Lie algebra $\mathfrak{k}=d\phi_e(\Lie(G))$.
Equip $U(N)$ with the Hilbert--Schmidt bi-invariant metric and $K$ with the induced metric. Fix $U_{\star}\in K$. Then:
\begin{enumerate}
\item[(i)] There exists at least one minimizer of $\mathcal{E}$ over $\mathcal{A}(I,U_{\star};K)$.
\item[(ii)] A curve $U\in\mathcal{A}(I,U_{\star};K)$ minimizes $\mathcal{E}$ if and only if it is a constant-speed minimizing geodesic in $K$.
Equivalently, every minimizer is of the form
\begin{equation}\label{eq:minimizer_form}
U(t)=\exp(tX),\qquad
X\in\Log_{\mathfrak{k}}(U_{\star}),
\qquad
\|X\|_{\mathrm{hs}}=\min_{Y\in\Log_{\mathfrak{k}}(U_{\star})}\|Y\|_{\mathrm{hs}}.
\end{equation}
\item[(iii)] Every minimizer \eqref{eq:minimizer_form} is also an ambient geodesic of $U(N)$; conversely, any ambient geodesic from $I$
to $U_{\star}$ with initial velocity in $\mathfrak{k}$ is a constrained minimizer.
\item[(iv)] The minimal energy value is
\begin{equation}\label{eq:Emin_log}
\min_{U\in\mathcal{A}(I,U_{\star};K)}\mathcal{E}[U]
=\frac12\min_{X\in\Log_{\mathfrak{k}}(U_{\star})}\|X\|_{\mathrm{hs}}^{\,2}.
\end{equation}
\end{enumerate}
\end{theorem}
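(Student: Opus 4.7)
The plan is to combine bi-invariance with compactness. First I observe that since $G\in\{SU(2),U(2)\}$ is compact and connected, $K=\phi(G)$ is a compact connected Lie subgroup of $U(N)$, and the restriction of $\langle\cdot,\cdot\rangle_{\mathrm{hs}}$ to $\mathfrak{k}=d\phi_e(\mathrm{Lie}(G))$ inherits $\mathrm{Ad}_K$-invariance from its $\mathrm{Ad}_{U(N)}$-invariance. Consequently the induced Riemannian metric on $K$ is itself bi-invariant, and surjectivity of $\exp:\mathfrak{k}\to K$ on a compact connected Lie group gives $\Log_{\mathfrak{k}}(U_\star)\neq\varnothing$.

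For (i), compactness makes $K$ a complete connected Riemannian manifold, so Hopf--Rinow produces a minimizing geodesic joining $I$ and $U_\star$ in $K$. The standard Cauchy--Schwarz relation $\mathcal{E}\ge \tfrac12\mathcal{L}^2$ for absolutely continuous curves, with equality precisely when the curve has constant speed, reduces energy minimization to length minimization and shows that the energy minimum over $\mathcal{A}(I,U_\star;K)$ equals $\tfrac12\,d_K(I,U_\star)^2$, attained by constant-speed minimizing geodesics.

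For (ii) and (iv), I apply Lemma~\ref{lem:geodesics_one_parameter} to $K$ with its bi-invariant induced metric: every geodesic from $I$ in $K$ has the form $t\mapsto\exp(tX)$ with $X\in\mathfrak{k}$, and the endpoint condition $\exp(X)=U_\star$ becomes $X\in\Log_{\mathfrak{k}}(U_\star)$. A direct computation of the body velocity gives $\mathcal{E}[\exp(tX)]=\tfrac12\|X\|_{\mathrm{hs}}^{2}$. Combined with the previous step, this yields $d_K(I,U_\star)^{2}=\min_{X\in\Log_{\mathfrak{k}}(U_\star)}\|X\|_{\mathrm{hs}}^{2}$, identifies minimizers as constant-speed one-parameter subgroups with minimal-norm initial velocity, and establishes \eqref{eq:minimizer_form} and \eqref{eq:Emin_log}.

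For (iii), the forward direction is immediate from total geodesy: by Lemma~\ref{lem:embedded_subgroup_totally_geodesic} (as recorded in Remark~\ref{rem:totally_geodesic_embedded_SU2}), any $K$-geodesic is also an ambient geodesic of $U(N)$. Conversely, an ambient geodesic starting at $I$ with initial velocity $Y\in\mathfrak{k}$ is $t\mapsto\exp(tY)$, which lies entirely in $K$ and is therefore a $K$-geodesic; imposing the endpoint condition together with the norm-minimality extracted in (ii) recovers exactly the constrained minimizers. The $U(2)$ case reduces to the $SU(2)$ analysis plus the commuting central $U(1)$, which decouples cleanly as in Remark~\ref{rem:U2_vs_SU2_variational}. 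The main obstacle is not any individual step but the careful bookkeeping that translates Hopf--Rinow on $K$, the one-parameter form of bi-invariant geodesics, and total geodesy into the single three-way identification required by the theorem; in particular, (iii) must be read in tandem with (ii), since one-parameter subgroups returning to $U_\star$ via a non-minimal logarithm are simultaneously $K$- and $U(N)$-geodesics but qualify as energy minimizers only when $\|Y\|_{\mathrm{hs}}$ is minimal within $\Log_{\mathfrak{k}}(U_\star)$.
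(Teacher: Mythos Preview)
Your proof is correct and follows essentially the same route as the paper: compactness plus Hopf--Rinow for existence, bi-invariance of the induced metric together with Lemma~\ref{lem:geodesics_one_parameter} to identify geodesics in $K$ as one-parameter subgroups, a direct energy computation, and Lemma~\ref{lem:embedded_subgroup_totally_geodesic} for total geodesy in (iii). Your explicit invocation of the Cauchy--Schwarz inequality $\mathcal{E}\ge\tfrac12\mathcal{L}^2$ and your careful remark that the converse in (iii) requires norm-minimality of the logarithm to yield an actual \emph{energy minimizer} (rather than merely a $K$-geodesic) are useful clarifications that the paper glosses over.
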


\begin{proof}
(i) The subgroup $K$ is compact as the continuous image of the compact group $G$ and is a smooth Riemannian manifold with the induced metric.
By Hopf--Rinow, any two points in a compact Riemannian manifold are joined by a minimizing geodesic; hence an energy minimizer exists
\cite{doCarmoRiemannian,KobayashiNomizuI}.

(ii) On any Riemannian manifold, energy minimizers with fixed endpoints are exactly constant-speed minimizing geodesics
\cite[\S~6.4]{doCarmoRiemannian}. Since the metric on $K$ is the restriction of a bi-invariant metric, it is bi-invariant on $K$.
Thus, by Lemma~\ref{lem:geodesics_one_parameter}, geodesics in $K$ are one-parameter subgroups $t\mapsto \exp(tX)$ with $X\in\mathfrak{k}$.
The endpoint condition forces $\exp(X)=U_\star$, i.e.\ $X\in\Log_{\mathfrak{k}}(U_{\star})$.
For such a curve, the body velocity is constant and equals $X$, hence $\mathcal{E}[\exp(tX)]=\tfrac12\|X\|_{\mathrm{hs}}^{2}$.
Minimizing $\mathcal{E}$ is therefore equivalent to choosing a minimal-norm logarithm in $\Log_{\mathfrak{k}}(U_\star)$.

(iii) By Lemma~\ref{lem:embedded_subgroup_totally_geodesic}, $K$ is totally geodesic in $U(N)$, so every $K$-geodesic is an ambient
$U(N)$-geodesic. Conversely, any ambient geodesic from $I$ with initial velocity in $\mathfrak{k}$ remains in $K$ and is therefore a
constrained geodesic, hence an energy minimizer by (ii).

(iv) This follows by evaluating $\mathcal{E}$ along $t\mapsto \exp(tX)$ and taking the minimum over $X\in\Log_{\mathfrak{k}}(U_{\star})$,
using (ii).
\end{proof}

\begin{remark}[Pullback metric and representation-dependent scaling]\label{rem:metric_scaling_index}
The induced inner product on $\mathfrak{k}$ pulls back via $d\phi_e$ to an $\Ad$-invariant inner product on $\mathfrak g=\Lie(G)$.
If $G=SU(2)$, then $\mathfrak g=\mathfrak{su}(2)$ is simple, so any $\Ad$-invariant inner product is a positive scalar multiple of a fixed
normalization; equivalently, there exists $c_\phi>0$ such that
\[
\big\langle d\phi_e(A),d\phi_e(B)\big\rangle_{\mathrm{hs}}=c_\phi\,\langle A,B\rangle_{\mathrm{ref}}
\qquad (A,B\in\mathfrak{su}(2)).
\]
If $G=U(2)$, then $\mathfrak{u}(2)=\mathfrak{su}(2)\oplus i\R I$ is reductive and $\Ad$-invariant inner products are determined by two
positive parameters, one on each ideal.
In the two-level sector of Section~\ref{sec:logical_qubits}, the natural normalization inherited from $\mathfrak u(N)$ yields $c_\phi=1$.
\end{remark}

\subsection{Explicit evaluation in $U(2)$ and $SU(2)$}
\label{subsec:explicit_U2_SU2}

Since $SU(2)$ (and $U(2)$) logarithms admit explicit eigen-angle formulae, we record a convenient reference statement.

\begin{lemma}[Minimal Hilbert--Schmidt logarithms in $U(2)$ and $SU(2)$]\label{lem:minlog_U2_eigenangles}
Let $V\in U(2)$. Choose a unitary diagonalization
\[
V = Q\,\diag\!\big(e^{i\theta_1},e^{i\theta_2}\big)\,Q^{-1},
\qquad \theta_1,\theta_2\in(-\pi,\pi],
\]
(where $\theta_j$ are principal eigen-angles, uniquely determined up to permutation).
Then
\[
\Log_{\mathfrak{u}(2)}(V)
=
\left\{
Q\,\diag\!\big(i(\theta_1+2\pi k_1),\,i(\theta_2+2\pi k_2)\big)\,Q^{-1}
:\ k_1,k_2\in\mathbb{Z}
\right\},
\]
and
\begin{equation}\label{eq:minlog_U2_value}
\min_{X\in\Log_{\mathfrak{u}(2)}(V)}\|X\|_{\mathrm{hs}}^2
\;=\;
\frac12\big(\theta_1^2+\theta_2^2\big).
\end{equation}
In particular, if $V\in SU(2)$ has eigenvalues $e^{\pm i\alpha}$ with $\alpha\in[0,\pi]$, then
\[
\min_{X\in\Log_{\mathfrak{su}(2)}(V)}\|X\|_{\mathrm{hs}}=\alpha,
\qquad
\min_{X\in\Log_{\mathfrak{su}(2)}(V)}\|X\|_{\mathrm{hs}}^{\,2}=\alpha^2.
\]
\end{lemma}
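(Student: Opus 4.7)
The plan is to reduce the computation of $\Log_{\mathfrak{u}(2)}(V)$ to simultaneous diagonalization with $V$, express the Hilbert--Schmidt norm as a sum of squared eigen-angle lifts, and then carry out an elementary integer minimization. The full proof should have three stages: (i) a commutation/spectral reduction producing the explicit parametrization of $\Log_{\mathfrak{u}(2)}(V)$; (ii) evaluation of $\|X\|_{\mathrm{hs}}^2$ along this parametrization; (iii) minimization over $k_1,k_2\in\mathbb{Z}$ and specialization to $\mathfrak{su}(2)$.

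For stage (i), I would observe that any $X\in\Log_{\mathfrak{u}(2)}(V)$ satisfies $XV=VX$ because $V=\exp(X)$. Since $X$ is normal, the spectral theorem lets us simultaneously unitarily diagonalize $X$ and $V$; when the eigen-angles $\theta_1,\theta_2$ are distinct, the eigenspaces of $V$ are one-dimensional and force $X=Q\,\diag(ix_1,ix_2)\,Q^{-1}$ in the same frame $Q$. The equation $\exp(X)=V$ then reads $e^{ix_j}=e^{i\theta_j}$, which gives the integer parametrization $x_j=\theta_j+2\pi k_j$ and hence the stated description of $\Log_{\mathfrak{u}(2)}(V)$.

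For stage (ii), unitary invariance of $\Tr(\cdot)$ reduces the HS pairing to a calculation on diagonal matrices, yielding
\[
\|X\|_{\mathrm{hs}}^{2}=\tfrac12\bigl((\theta_1+2\pi k_1)^{2}+(\theta_2+2\pi k_2)^{2}\bigr).
\]
Stage (iii) is then purely elementary: for each $j$, since $\theta_j\in(-\pi,\pi]$ one has $\min_{k\in\mathbb{Z}}(\theta_j+2\pi k)^{2}=\theta_j^{\,2}$, attained at $k_j=0$, which gives \eqref{eq:minlog_U2_value}. For the $SU(2)$ refinement, write the eigen-angles as $(\theta_1,\theta_2)=(\alpha,-\alpha)$ with $\alpha\in[0,\pi]$ and impose $\Tr(X)=0$; this collapses to $k_2=-k_1$, so $\|X\|_{\mathrm{hs}}^{2}=(\alpha+2\pi k_1)^{2}$, whose minimum over $k_1\in\mathbb{Z}$ is $\alpha^{2}$.

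The main obstacle is the degenerate case in which $V$ has a repeated eigenvalue (i.e.\ $V\in U(1)\cdot I_2$), since then the eigendecomposition of $V$ does not uniquely pin down a frame, and there exist logarithms $X$ that are not diagonal in the a priori chosen $Q$. I plan to handle this by invoking unitary invariance of $\|\cdot\|_{\mathrm{hs}}$: any $X\in\mathfrak{u}(2)$ is diagonalizable in \emph{some} orthonormal basis, and replacing $Q$ by such a basis does not change the HS norm or the exponentiated matrix $V=e^{i\theta}I$. Thus the integer-lift minimization yields the same optimum $\tfrac12(\theta_1^{\,2}+\theta_2^{\,2})$, and the lemma is really a statement about the set of spectra of logarithms, which is what the minimization depends on.
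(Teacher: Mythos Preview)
Your proposal is correct and follows essentially the same route as the paper: reduce by unitary invariance to the diagonal case, parametrize logarithms by integer $2\pi$-shifts of the eigen-angles, compute the Hilbert--Schmidt norm termwise, and minimize. You are in fact more careful than the paper on two points---you justify via commutation why $X$ is simultaneously diagonalizable with $V$ (the paper simply asserts ``diagonal logarithms''), and you address the repeated-eigenvalue case and the explicit trace constraint in the $\mathfrak{su}(2)$ specialization, both of which the paper glosses over.
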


\begin{proof}
The Hilbert--Schmidt norm is unitarily invariant, hence we may conjugate by $Q^{-1}$ and reduce to the diagonal case.
Diagonal logarithms of $\diag(e^{i\theta_1},e^{i\theta_2})$ are exactly $\diag(i(\theta_1+2\pi k_1),\,i(\theta_2+2\pi k_2))$.
Their squared Hilbert--Schmidt norm equals
$\tfrac12\big((\theta_1+2\pi k_1)^2+(\theta_2+2\pi k_2)^2\big)$, minimized by choosing the representatives closest to $0$, i.e.\ by
$k_1=k_2=0$ for principal angles. This yields \eqref{eq:minlog_U2_value}. The $SU(2)$ case follows from
$(\theta_1,\theta_2)=(\alpha,-\alpha)$.
\end{proof}

\begin{remark}[Non-uniqueness and cut locus]\label{rem:nonuniqueness_log}
The minimizer in Theorem~\ref{thm:variational_characterization} need not be unique: non-uniqueness occurs precisely when
$\Log_{\mathfrak{k}}(U_{\star})$ contains more than one element of minimal Hilbert--Schmidt norm, equivalently when $U_{\star}$ lies in
the cut locus of $I$ in $K$. For generic $U_{\star}$ the minimal logarithm is unique.
\end{remark}

\subsection{Interpretation and the two-level sector}
\label{subsec:interpretation_minimal_action}

Theorem~\ref{thm:variational_characterization} identifies the minimal-action implementation of a target $U_{\star}\in K=\phi(G)$:
among all admissible curves constrained to remain in $K$, the minimizers are constant-speed geodesics
$t\mapsto\exp(tX)$ generated by minimal-norm logarithms $X\in\mathfrak{k}$.

In the two-level (Grassmannian) sector of Section~\ref{sec:logical_qubits}, generators take the descriptor form
$X=X_{W,f}(A)=fAf^{-1}\oplus 0$ with $A\in\mathfrak{su}(2)$, and Lemma~\ref{lem:HS_norm_two_level_generator} implies that
$\|X\|_{\mathrm{hs}}=\|A\|_{\mathrm{hs}}$ is frame-independent.
Thus the minimal-action cost of a two-level elementary gate is exactly the minimal Hilbert--Schmidt norm of its internal $2\times 2$
generator, providing a direct bridge between intrinsic geometry on $U(N)$ and concrete $SU(2)$ control parameters.
When an additional abelian $U(1)$ phase is tracked (i.e.\ $SU(N)[W]$ extended to $U(N)[W]$), its contribution is generated by a commuting
central direction and can be compiled/penalized separately, consistent with Remark~\ref{rem:U2_vs_SU2_variational}.

\section{Conclusions and final remarks}
\label{sec:conclusion}

We introduced an intrinsic notion of elementary gates in $U(2^n)$ based on Lie group embeddings: a two-level logical degree of freedom
is modeled by an internal copy of $SU(2)$ inside $U(N)$, $N=2^n$ (and, when phases are kept explicit, by an embedding of $U(2)$).
This viewpoint separates \emph{discrete} representation data from \emph{continuous} gauge/basis choices, and yields a structured,
basis-invariant parameter space of admissible primitives.

\smallskip
\noindent\emph{Embedding landscape and a canonical sector.}
At fixed $N$, the space $\Emb(SU(2),U(N))$ decomposes into finitely many $U(N)$-conjugacy orbits, each an explicit homogeneous manifold
indexed by isotypic multiplicities, with stabilizers computed as centralizers.
Among these strata, the two-level sector provides the canonical computational interface: a logical qubit is a two-plane
$W\in\Gr_2(\C^N)$, and the corresponding subgroup is $SU(N)[W]$.
Given any Stiefel frame $F\in\C^{N\times 2}$ with $F^\dagger F=I_2$ and $\Ran(F)=W$, two-level gates admit the concrete realization
\begin{equation}\label{eq:conclusion_two_level_realization}
U_{W,F}(S)=I_N+F(S-I_2)F^\dagger\in SU(N)[W],\qquad S\in SU(2),
\end{equation}
and similarly with $S$ replaced by $V\in U(2)$ for $U(N)[W]$.
The support subgroup depends only on $W$, while the internal labeling depends on $F$ only up to the natural $PSU(2)$ gauge.

\smallskip
\noindent\emph{Expressiveness and a finite-alphabet interface.}
The universality mechanism developed in Section~\ref{sec:universality} shows that two-level $SU(2)$ gates already generate $SU(N)$, and
that full reachability in $U(N)$ follows after adjoining global phase (equivalently, by explicit bookkeeping of abelian factors).
Section~\ref{sec:SK} then provides a modular compilation route: exact two-level factorizations reduce synthesis in $U(N)$ to repeated
approximation problems in $SU(2)$ (after optional phase normalization), and finite-alphabet approximation on $SU(2)$ lifts through
two-level embeddings to uniform operator-norm control in $U(N)$, with accuracy--length guarantees inherited from the underlying
$SU(2)$ approximation scheme.

\smallskip
\noindent\emph{Limitations and directions.}
Several mathematically natural extensions emerge:
\begin{itemize}
\item \emph{Beyond bi-invariant costs:} the Hilbert--Schmidt metric yields clean variational statements (total geodesy and
minimal-logarithm characterizations), whereas realistic control costs are typically anisotropic; extending the analysis to
right-invariant or sub-Riemannian models is a natural next step.
\item \emph{Descriptor selection:} a fixed $U\in U(N)$ may admit multiple embedding-elementary descriptions across different strata and
supports; choosing descriptors optimally leads to constrained optimization on stratified spaces.
\item \emph{Depth/complexity refinement:} QR/Givens-type synthesis is explicit but not depth-optimal in general; sharper complexity bounds
should exploit target structure (symmetry, sparsity, tensor-network structure) and may benefit from structure-adapted factorizations and
improved $SU(2)$ approximation schemes.
\end{itemize}


\paragraph{Funding statement}
This work was supported by the Generalitat Valenciana under grant COMCUANTICA/007 (QUANTWin), by the Agreement between the Directorate-General for Innovation of the Ministry of Innovation, Industry, Trade and Tourism of the Generalitat Valenciana and the Universidad CEU Cardenal Herrera, and by Universidad CEU Cardenal Herrera under grants INDI25/17 and GIR25/14.

\paragraph{Competing interests}
None

\paragraph{Data availability statement}
No datasets were generated or analyzed during the current study.

\paragraph{Ethical standards}
The research meets all ethical guidelines, including adherence to the legal requirements of the study country.

\paragraph{Author contributions}
All authors contributed equally to this work. All authors jointly designed the study, developed the methodology, and drafted the manuscript. All authors revised the manuscript critically and approved the final version.


\end{document}